\renewcommand{\paragraph}[1]{\noindent\textbf{#1}}
\newcommand{\comment}[1]{}
\newcommand{\bfstart}[1]{\smallskip\noindent\textbf{#1}}
\newcommand{\n}[1]{\overline{#1}}
\newcommand{\mL}{\mathcal{L}}
\newcommand{\mF}{\mathcal{F}}
\newcommand{\nin}{\noindent{}}
\newcommand{\lb}{\linebreak{}}
\begin{document}

\title{Interpolation Properties and\\ SAT-based Model
  Checking\thanks{\tiny This material is based upon work funded and
    supported by the Department of Defense under Contract
    No. FA8721-05-C-0003 with Carnegie Mellon University for the
    operation of the Software Engineering Institute, a federally
    funded research and development center.  This material has been
    approved for public release and unlimited
    distribution. DM-0000469.}  }

\author{Arie Gurfinkel\inst{1} \and Simone Fulvio Rollini\inst{2} \and Natasha Sharygina\inst{2}}

\institute{Software Engineering Institute, CMU\\
\email{arie@cmu.edu}
\and
Formal Verification Lab, University of Lugano\\
\email{\{simone.fulvio.rollini, natasha.sharygina\}@usi.ch}}

\maketitle

\begin{abstract}
\vspace*{-0.1in}
  Craig interpolation is a widespread method
in
verification, with important applications such as
  Predicate Abstraction, CounterExample Guided Abstraction Refinement
  and Lazy Abstraction With Interpolants.  Most state-of-the-art
  model checking techniques based on interpolation require {\em
   collections} of interpolants to satisfy
particular properties, to  which we refer as ``collectives'';
they do not hold in general for
  all interpolation systems and have to be established for each
  particular system and verification environment.  Nevertheless, no
  systematic approach exists that correlates the individual
  interpolation systems and compares the necessary collectives.  This
  paper proposes a uniform framework, which encompasses (and
  generalizes) the most common collectives exploited in
  verification.  We use it for a systematic study of the collectives
  and of the
constraints they pose on propositional interpolation systems used in SAT-based model checking.
 \end{abstract}

\section{Introduction}


Craig interpolation is a popular approach in 
verification~\cite{McM03,McM04a} with notable applications 
such as Predicate Abstraction \cite{JM05}, 
CounterExample Guided Abstraction Refinement (CEGAR)~\cite{HJM+04},
and Lazy Abstraction With Interpolants
(LAWI)~\cite{McM06}.
 
Formally, given two formulae $A$ and $B$ such that $A \land B$ is
unsatisfiable, a \emph{Craig interpolant} is a formula $I$ such that
$A$ implies $I$, $I$ is inconsistent with $B$ and $I$ is defined over
the atoms (i.e., propositional variables) 
common to $A$ and $B$. It can be seen as an
over-approximation of $A$ that is still inconsistent with $B$\footnote{
We write
$Itp(A \mid B)$ for an interpolant of $A$ and $B$, and $I_A$ when $B$
is clear from the context.}. In model checking applications, $A$
typically encodes some finite program traces, and $B$ denotes error
locations. In this case, an interpolant $I$ represents a set of
\emph{safe} states that over-approximate the states reachable in $A$.

In most verification tasks, a single interpolant, i.e., a single subdivision
of constraints into two groups $A$ and $B$, is not sufficient. For
example, consider the refinement problem in CEGAR: given a spurious
error trace $\pi = \tau_1, \ldots, \tau_n$, where $\tau_i$ is a
program statement, find a set of formulae $X_0, \ldots, X_n$ such that
$X_0 = \top$, $X_n = \bot$, and for $1 \leq i \leq n$, the Hoare
triples $\{X_{i-1}\}\; \tau_i\; \{X_i\}$ are valid. The sequence
$\{X_i\}$ justifies that the error trace is infeasible and is used to
refine the abstraction. 
The solution 
is a
\emph{sequence} of interpolants $\{I_i\}_{i=1}^{n}$ such that:
  $I_i  = Itp (\tau_1 \ldots \tau_i \mid \tau_{i+1} \ldots \tau_n)$ and  
  $I_{i-1} \land \tau_i \implies I_i$. That is, in addition to requiring that each $I_i$ is an interpolant
between the prefix (statements up to position $i$ in the trace) and
the suffix (statements following position $i$), the sequence
$\{I_i\}$ of interpolants must be inductive: this property 
is known as the
\emph{path interpolation property}~\cite{RSS12}. 

Other properties (e.g.,
simultaneous abstraction, interpolation sequence, path-, symmetric-,
and tree-interpolation) are used in 
existing verification frameworks such as IMPACT~\cite{McM06}, Whale~\cite{AGC12a}, FunFrog~\cite{FSS12} and 
eVolCheck~\cite{SFS12}, which implement instances of 
Predicate Abstraction~\cite{JM06}, Lazy Abstraction with
Interpolation~\cite{McM06}, Interpolation-based Function Summarization~\cite{FSS12} and 
Upgrade Checking~\cite{SFS12}.
These properties, to which we refer as \emph{collectives} since they concern collections 
of interpolants, are not 
satisfied by arbitrary sequences of Craig interpolants and must be
established
for each interpolation algorithm and verification technique.

This paper performs a systematic study of collectives in verification and identifies the particular constraints they pose 
    on propositional interpolation systems used in SAT-based model checking. 
  The SAT-based approach provides bit-precise reasoning which is essential both in software and hardware applications, 
  e.g., when dealing with pointer arithmetic and overflow. 
  To-date, there exist successful tools which perform SAT-based model checking (such as CBMC\footnote{http://www.cprover.org/cbmc} and 
  SATABS\footnote{http://www.cprover.org/satabs}), and which integrate it with interpolation (for example,
  eVolCheck and FunFrog).
  However, there is no a framework which would correlate the existing interpolation systems 
  and compare the various collectives.
  This work addresses the problem and contributes as follows:
  
{\em Contribution 1:} This paper, for the first time, collects, identifies, and
uniformly presents the most common collectives imposed on
interpolation by existing verification approaches (see \S\ref{sec:int_prop}).

In addition to the issues related to a diversity of interpolation properties, 
it is often desirable to have flexibility in
 choosing different algorithms for computing different interpolants in a
sequence  $\{I_i\}$, rather than using a single interpolation algorithm (or \emph{interpolation system}) $Itp_S$,
as assumed in the path interpolation example above.
To guarantee such a flexibility, this paper presents
a framework which generalizes the traditional setting
consisting of a single interpolation system to allow for sequences, or
\emph{families}, of interpolation systems. For example, given a family of
systems $\mF = \{Itp_{S_i}\}_{i=1}^n$, let 
$
I_i = Itp_{S_i} (\tau_1, \ldots \tau_i \mid \tau_{i+1} \ldots \tau_n)
$. 
If the resulting sequence of interpolants $\{I_i\}$ 
satisfies the condition of path interpolation, we say that the family $\mF$
has the path interpolation property. 

Families find practical applicability
in several contexts\footnote{The notion of families is additionally a useful technical tool to  make the discussion  and the results 
more general and easier to
compare with the prior work of CAV'12 \cite{RSS12} (which formally defined families
for the first time).}. One
example is LAWI-style verification, where it is desirable to obtain a
path interpolant $\{I_i\}$ with weak interpolants at the beginning
(i.e., $I_1,I_2,\ldots$)  and strong interpolants at the end (i.e.,
$\ldots, I_{n-1}, I_n$). This would increase the likelihood of the sequence
to be inductive and can be achieved by using a family of systems of
different strength.
Another example is software Upgrade Checking, where function summaries 
are computed by interpolation. Different functions in a program could 
require different levels of abstraction by means of interpolation. 
A system that generates stronger interpolants can yield
a tighter abstraction, more closely reflecting the behavior of the 
corresponding function. On the other hand, a system that generates 
weaker interpolants would give an abstraction which is more ``tolerant'' and 
is more likely to remain valid when the function is updated. 

{\em Contribution 2:} This paper systematically studies the collectives 
and the relationships among them; in particular, it shows
that for families of interpolation systems the collectives form a
hierarchy, 
whereas for a single system all but two (i.e.,
path interpolation and simultaneous abstraction) are
equivalent (see \S\ref{sec:interp_family}).

Another issue which this paper deals with is the fact
that there exist different approaches for generating interpolants. 
One is to use specialized algorithms: examples are procedures based on
constraint solving (e.g.,~\cite{RS07}), machine learning
(e.g.,~\cite{SNA12}), and, even, pure verification algorithms like
IC3~\cite{B11} and PDR~\cite{EMB11} that can be viewed as computing a
path interpolation sequence. A second, well-known approach is to extract an
interpolant of $A \land B$ from a resolution proof of unsatisfiability
of $A \land B$.  Examples are the algorithm by Pudl\'ak~\cite{Pud97}
(also independently proposed by Huang~\cite{H95} and by
Kraj\'i\v{c}ek~\cite{Kra97}), the algorithm by McMillan \cite{McM04b},
and the Labeled Interpolation Systems (\emph{LISs}) of D'Silva et
al.~\cite{DKPW10}, the latter being the most general version of this
approach.

The variety of interpolation algorithms makes it difficult 
to reason about their properties in a systematic manner. 
At a low level of representation, the challenge is determined by
the complexity of individual algorithms and by the diversity among them, 
which makes it hard to study them uniformly.
On the other hand, at a high level, where the details are hidden, not 
many interesting results can be obtained.
For this reason, this paper adopts a twofold approach, working both at a high and
at a low level of representation: at the high level, we give a global
view of the entire collection of properties and of their relationships
and hierarchy; at the low level, we obtain additional stronger
results for concrete interpolation systems.
In particular, we first investigate the properties of interpolation
systems treating them as black boxes, and then focus on the propositional LISs.
 In the paper, the results of \S\ref{sec:interp_family} apply to arbitrary
interpolation algorithms, while those of
\S\ref{sec:interp_sing_lab} apply to LISs.

{\em Contribution 3:} For the first time, this paper gives both sufficient and
necessary conditions for a family of LISs and for a single LIS to
enjoy each of the collectives. In particular, we show that in case of
a single system path interpolation is common to all LISs, while
simultaneous abstraction is as strong as all other 
properties. Concrete applications
of our results are also discussed (see \S\ref{sec:interp_sing_lab}).

{\em Contribution 4.} We developed an interpolating prover, PeRIPLO,
implementing the proposed framework as discussed in \S\ref{sec:implementation};
PeRIPLO is currently employed for solving and interpolation by the FunFrog and eVolcheck tools.



\paragraph{\textbf{Related Work}.} 
To our knowledge, despite interpolation being an important component
of verification, no systematic investigation of verification-related
requirements for interpolants has been done prior to this paper. One exception is the work
by the first two authors~\cite{RSS12}, that studies a subset of the
properties in the context of LISs. This paper significantly extends the
results of that work by considering the most common collectives 
used in verification, at the same time addressing a wider class of interpolation systems. Moreover, for
LISs, it provides both the \emph{necessary} and  \emph{sufficient}
conditions for each property.



\vspace*{-2.5mm}
\section{Interpolation Systems}
\label{sec:int_prop}
\vspace*{-2.5mm}

In this section we introduce the basic notions of
interpolation, and then proceed to discuss the collectives.
Among several possible styles of presentation, we chose the one that 
highlights te use of collectives in the context of model checking.
We employ the standard convention of identifying conjunctions of formulae
with sets of formulae and concatenation with conjunction, whenever
convenient. For example, we interchangeably use
$\{\phi_1,\ldots,\phi_n\}$ and $\phi_1 \cdots \phi_n$ for $\phi_1
\wedge \ldots \wedge \phi_n$.

\paragraph{Interpolation System.} 
  An \emph{interpolation system} $Itp_S$ is a function that, given an
  inconsistent ${\Phi=\{\phi_1,\phi_2\}}$, returns a
  \emph{Craig's interpolant}, that is a formula
  $I_{\phi_1,S}= Itp_S(\phi_1  \mid \phi_2)$ such that:
{
\setlength\abovedisplayskip{2mm}
\setlength\belowdisplayskip{2mm}
  \begin{align*}\phi_1 &\implies I_{\phi_1,S} & I_{\phi_1,S} \wedge
    \phi_2 &\implies \bot & \mL_{I_{\phi_1,S}} \subseteq \mL_{\phi_1} \cap
\mL_{\phi_2}
\end{align*}
}
where $\mL_{\phi}$ denotes the atoms of a formula $\phi$.  That is,
$I_{\phi_1,S}$ is implied by $\phi_1$, is inconsistent with $\phi_2$
and is defined over the common language of $\phi_1$ and $\phi_2$.

For $\Phi = \{\phi_1,\ldots,\phi_n\}$, we write $I_{\phi_1\cdots
  \phi_i ,S}$ to denote $Itp_S(\phi_1 \cdots \phi_i \mid \phi_{i+1}
\cdots \phi_n)$.  W.l.o.g., we assume that, for any $Itp_S$ and any formula $\phi$, 
${Itp_S(\top \mid \phi) = \top}$ and $Itp_S(\phi \mid \top) = \bot$,  
where we equate the constant true $\top$ with the empty formula.
We omit $S$ whenever clear from the context.

  An interpolation system $Itp$ is called \emph{symmetric} if for any
  inconsistent \lb $\Phi=\{\phi_1,\phi_2\}$:
$Itp(\phi_1 \mid \phi_2) \iff \n{Itp(\phi_2 \mid \phi_1)}$ 
(we use the notation $\n{\phi}$ for the negation of a formula $\phi$).

  A sequence $\mF = \{Itp_{S_1},\ldots,Itp_{S_n}\}$ of interpolation
  systems is called a \emph{family}.\\ 

\vspace{-3mm}
\paragraph{\textbf{Collectives}.}
In the following, we formulate the properties of interpolation systems that
are required by existing verification algorithms. Furthermore, we
generalize the collectives by presenting them over families of
interpolation systems (i.e., we allow the use different systems to generate different interpolants in a sequence). 
Later, we restrict the properties to the more
traditional setting of the singleton
families. 

\paragraph{\textbf{\emph{$n$-Path Interpolation (PI)}}} 
was first defined in~\cite{JM06}, where it is employed in the refinement
phase of CEGAR-based predicate abstraction. It has also appeared
in~\cite{VG09} under the name \emph{interpolation-sequence}, where it
is used for a specialized interpolation-based hardware verification algorithm. 

Formally, a family of ${n+1}$ interpolation systems $\{Itp_{S_0},\ldots,Itp_{S_n}\}$ has
the \emph{$n$-path} \emph{interpolation} property ($n$-PI) iff for any
inconsistent $\Phi = \{\phi_1,\ldots,\phi_n\}$ and for $0\leq
i \leq n-1$ (recall that $I_{\top}= \top$ and $I_{\Phi} =\bot$):
\[(I_{\phi_1\ldots \phi_i,S_i} \wedge \phi_{i+1}) \implies
I_{\phi_1\ldots \phi_{i+1},S_{i+1}}\]

\paragraph{\textbf{\emph{$n$-Generalized Simultaneous Abstraction (GSA)}}} is the generalization 
of \emph{simultaneous abstraction}, a property that first appeared, under the name \emph{symmetric interpolation}, in~\cite{JM05},
where it is used for approximation of
a transition relation for predicate abstraction. We changed the name
to avoid confusion with the notion of \emph{symmetric interpolation
  system} (see above). The reason for generalizing
the property will be apparent later.

Formally, a family of ${n+1}$ interpolation systems $\{Itp_{S_1},\ldots,Itp_{S_{n+1}}\}$ has the \emph{$n$-generalized simultaneous abstraction}  
property (\mbox{$n$-GSA}) iff for 
any inconsistent $\Phi = \{\phi_1,\ldots,\phi_{n+1}\}$:
\[\bigwedge_{i=1}^{n} I_{\phi_i,S_i} \implies I_{\phi_1\ldots \phi_n,S_{n+1}}\]
The case $n=2$ is called \emph{Binary GSA (BGSA)}:
$I_{\phi_1,S_1} \wedge I_{\phi_2,S_2} \implies I_{\phi_1 \phi_2,S_3}$.\\
If $\phi_{n+1} = \top$, the property is called \emph{$n$-simultaneous
  abstraction} (\mbox{$n$-SA}):\lb
$\bigwedge_{i=1}^{n} I_{\phi_i,S_i} \implies \bot (= I_{\phi_1\ldots \phi_n,S_{n+1}})$ 
and, if $n=2$, \emph{binary SA (BSA)}. In $n$-SA $Itp_{S_{n+1}}$ is
irrelevant and is often omitted.

\paragraph{\textbf{\emph{$n$-State-Transition Interpolation (STI)}}}  
is defined as a combination of PI and SA in a single
family of systems. It was introduced in ~\cite{AGC12a} as part of
the inter-procedural verification algorithm
\textsc{Whale}. Intuitively, the ``state'' interpolants
over-approximate the set of reachable states, and the ``transition''
interpolants summarize the transition relations (or function
bodies). The STI requirement ensures that state over-approximation is
``compatible'' with the summarization. That is, $\{I_{\phi_1 \cdots
  \phi_i,S_i} \} I_{\phi_{i+1},T_{i+1}} \{I_{\phi_1 \cdots
  \phi_{i+1},S_{i+1}}\}$ is a valid Hoare triple for each $i$.
  
  Formally, a family of interpolation systems
  $\{Itp_{S_0},\ldots,Itp_{S_n},Itp_{T_1},\ldots,Itp_{T_n}\}$ has the
  \emph{$n$-state-transition interpolation} property ($n$-STI) iff for
  any inconsistent $\Phi = \{\phi_1,\ldots,\phi_n\}$ and for $0\leq i \leq n-1$:
  \[(I_{\phi_1\ldots \phi_i,S_i} \wedge I_{\phi_{i+1},T_{i+1}})
  \implies I_{\phi_1\ldots \phi_{i+1},S_{i+1}}\]

\paragraph{\textbf{\emph{$T$-Tree Interpolation (TI)}}} is a generalization of classical 
interpolation used in model checking applications, in which
partitions of an unsatisfiable formula naturally correspond to a tree structure such as call tree or program unwinding.
The collective was first introduced by McMillan and
Rybalchenko for computing post-fixpoints of a system of Horn clauses
(e.g., used in analysis of recursive programs)~\cite{duality}, and is 
 equivalent to the nested-interpolants of~\cite{HHP10}.

Formally, let $T = (V, E)$ be a tree with $n$ nodes $V = [1,\ldots,n]$. A family of
  $n$ interpolation systems $\{Itp_{S_1}, \ldots, Itp_{S_n}\}$ has the
  \emph{$T$-tree interpolation property} ($T$-TI) iff for any inconsistent $\Phi =
  \{\phi_1, \ldots, \phi_n\}$:
  \[
  \bigwedge_{(i,j) \in E} I_{F_j,S_j}  \wedge \phi_i \implies I_{F_i,S_i} 
  \]
  where $F_i = \{ \phi_j \mid i \sqsubseteq j \}$,
  and $i \sqsubseteq j$ iff node $j$ is a descendant of node $i$ in
  $T$. Notice that for the root $i$ of $T$, $F_i = \Phi$ and
  $I_{F_i,S_i} = \bot$.

An interpolation system $Itp_S$ is said to \emph{have a property} $P$ (or,
simply, to have $P$), where $P$ is one of the properties defined
above, if every family induced by $Itp_S$ has $P$. For example,
$Itp_S$ has GSA iff for every $k$ the family $\{Itp_{S_1}, \ldots,
Itp_{S_k}\}$, where $Itp_{S_i} = Itp_S$ for all $i$, has $k$-GSA.

%



\vspace{-1mm}
\section{Collectives of Interpolation Systems}
\label{sec:interp_sing}
\vspace{-1mm}

In this section, we study collectives of general interpolation
systems, that is, we treat interpolation systems as black-boxes.
In section \S\ref{sec:lis}  we will extend the study to the implementation-level 
details of the LISs. 

\paragraph{Collectives of Single Systems.}
We begin by studying the relationships among the various collectives
of single interpolation systems.

\begin{theorem}
  \label{thm:single-prop-collapse}
  Let $Itp_S$ be an interpolation system. The following are
  equivalent: $Itp_S$ has BGSA (1), $Itp_S$ has GSA (2), $Itp_S$ has
  TI (3), $Itp_S$ has STI (4).
\end{theorem}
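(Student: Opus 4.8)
The plan is to prove the four properties equivalent through a single cycle of implications,
\[(2)\Rightarrow(3)\Rightarrow(1)\Rightarrow(4)\Rightarrow(2),\]
which is economical because each edge is either a direct instantiation or a short induction. Throughout, since $Itp_S$ is a \emph{single} system, every interpolant below has the shape $Itp_S(\psi \mid \Phi\setminus\psi)$ for some grouping of a fixed inconsistent $\Phi$, and I will repeatedly use three facts: conjunctions may be identified with sets of formulae (so groups of the $\phi_i$ can be treated as single arguments), $\top$ is the empty conjunct (so padding $\Phi$ with a $\top$-formula changes neither consistency nor any interpolant, since it adds no atoms), and the first defining property of an interpolant, namely $\psi \implies Itp_S(\psi \mid \cdot)$.

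The two ``easy'' edges are pure instantiations. For $(1)\Rightarrow(4)$, I would note that the $i$-th inequality of STI on $\{\phi_1,\ldots,\phi_n\}$ is exactly BGSA applied to the inconsistent triple $\{A,\phi_{i+1},B\}$ obtained by grouping $A=\phi_1\cdots\phi_i$ and $B=\phi_{i+2}\cdots\phi_n$: with these groups, $Itp_S(A\mid \phi_{i+1}B)$, $Itp_S(\phi_{i+1}\mid AB)$ and $Itp_S(A\phi_{i+1}\mid B)$ are literally $I_{\phi_1\ldots\phi_i}$, $I_{\phi_{i+1}}$ and $I_{\phi_1\ldots\phi_{i+1}}$. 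For $(4)\Rightarrow(2)$, I would telescope STI: applied to $\{\phi_1,\ldots,\phi_{n+1}\}$, the steps $I_{\phi_1\ldots\phi_k}\wedge I_{\phi_{k+1}}\implies I_{\phi_1\ldots\phi_{k+1}}$ for $k=1,\ldots,n-1$ chain into $\bigwedge_{i=1}^{n} I_{\phi_i}\implies I_{\phi_1\ldots\phi_n}$, which is exactly $n$-GSA. (That $(2)\Rightarrow(1)$ holds trivially, BGSA being $2$-GSA, is subsumed by the cycle.)

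The two substantive edges both hinge on relating the raw label $\phi_i$ at a node to its interpolant $I_{\phi_i}$. For $(2)\Rightarrow(3)$, fix a tree $T$ and a node $i$ with children $j_1,\ldots,j_m$. Since $F_i = \{\phi_i\}\,\sqcup\,F_{j_1}\,\sqcup\cdots\sqcup\,F_{j_m}$ and $\Phi\setminus F_i$ is the remainder, the partition $\{F_{j_1},\ldots,F_{j_m},\{\phi_i\},\Phi\setminus F_i\}$ of $\Phi$ is a legal GSA instance whose side interpolants are precisely the $I_{F_{j_k}}$ and $I_{\phi_i}$ and whose conclusion is $I_{F_i}$; GSA therefore yields $\bigwedge_k I_{F_{j_k}}\wedge I_{\phi_i}\implies I_{F_i}$, and since $\phi_i\implies I_{\phi_i}$ only strengthens the antecedent we obtain the TI inequality $\bigwedge_k I_{F_{j_k}}\wedge \phi_i\implies I_{F_i}$ (the root case, $F_i=\Phi$, is the corresponding SA instance, with conclusion $\bot$). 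For $(3)\Rightarrow(1)$, I would realize BGSA as a tree with a $\top$-labelled node: on $\Phi'=\{\phi_1,\phi_2,\phi_3,\top\}$ take the root labelled $\phi_3$, a child labelled $\top$ whose own children are the leaves $\phi_1,\phi_2$. Then $F_{\top}=\{\phi_1,\phi_2\}$, and the TI inequality at the $\top$-node reads $I_{\phi_1}\wedge I_{\phi_2}\wedge\top\implies I_{\phi_1\phi_2}$, i.e.\ exactly BGSA.

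I expect the main obstacle to be the two substantive edges, and specifically the bookkeeping that the node label contributes the \emph{raw} formula $\phi_i$ whereas the collectives GSA/BGSA contribute the \emph{interpolant} $I_{\phi_i}$. In the direction $(2)\Rightarrow(3)$ this asymmetry is harmless because $\phi_i\implies I_{\phi_i}$ merely strengthens the antecedent; but it is precisely what blocks a naive instantiation of $(3)\Rightarrow(1)$, and the clean fix is to insert a $\top$-labelled node so that the polluting conjunct becomes $\top$ and disappears. Verifying that this padding leaves all the relevant interpolants unchanged (same $A$/$B$ formulae up to $\top$, hence the same common atoms) is the one point I would check carefully.
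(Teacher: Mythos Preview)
Your proof is correct. You and the paper both close a 4-cycle using the same toolkit---padding with $\top$, grouping conjuncts, the telescope, and the entailment $\phi_i\Rightarrow I_{\phi_i}$---but you traverse the cycle in a different order. The paper proves $1\to 2\to 3\to 4\to 1$ (BGSA $\to$ GSA by telescoping, GSA $\to$ TI exactly as you do, TI $\to$ STI by building the $2n$-node tree $T^n_{STI}$ of Fig.~5, and STI $\to$ BGSA by reading off the $i=1$ step on a three-element $\Phi$). You instead prove $2\to 3\to 1\to 4\to 2$: your GSA $\to$ TI edge coincides with the paper's; your STI $\to$ GSA telescope is the mirror of the paper's BGSA $\to$ GSA telescope; and your BGSA $\to$ STI grouping is the converse of the paper's STI $\to$ BGSA instantiation (indeed, this is exactly the $\leftarrow$ direction of the paper's Theorem~\ref{lem:st-and-gsa} restricted to a single system). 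The one genuinely different step is how you exit TI: rather than encoding all of $n$-STI in a $2n$-node tree, you extract just BGSA from a single 4-node tree with a $\top$-labelled internal node, which is shorter and avoids the index bookkeeping of $T^n_{STI}$. Your caution about the $\top$-padding is well-placed but, as you note, the paper's convention that $\top$ is the empty formula makes $Itp_S(\phi_1\phi_2\top\mid\phi_3)=Itp_S(\phi_1\phi_2\mid\phi_3)$ and $Itp_S(\phi_1\mid\phi_2\phi_3\top)=Itp_S(\phi_1\mid\phi_2\phi_3)$ immediate.
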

\begin{proof}
  We show that $1 \to 2$, $2 \to 3$, $3 \to
  4$, $4 \to 1$.

$(1 \to 2)$ Assume $Itp_S$ has BGSA. Take any
  inconsistent $\Phi = \{\phi_1,\ldots,\phi_{n+1}\}$.  Then,
  for $2 \leq i \leq n$:
$
(I_{\phi_1 \cdots \phi_{i-1}} \land I_{\phi_i}) \Rightarrow I_{\phi_1 \cdots \phi_{i}} 
$, 
which together yield
$(\bigwedge_{i=1}^{n} I_{\phi_i}) \Rightarrow I_{\phi_1\ldots \phi_n}$.
Hence, $Itp_S$ has GSA.

$(2\to 3)$ Let $T= ([1,\ldots,n],E)$, take any
inconsistent $\Phi = \{\phi_1,\ldots,\phi_{n}\}$.  Since
$Itp_S$ has GSA:
$
(\bigwedge_{(i,j) \in E} I_{F_j} \wedge I_{\phi_i})
\Rightarrow I_{F_i}
$, 
and, from the definition of Craig interpolation, $\phi_i \Rightarrow
I_{\phi_i}$. Hence, $Itp_S$ has $T$-TI.

$(3 \to 4)$ Take any inconsistent $\Phi =
\{\phi_1,\ldots,\phi_{n}\}$ and extend it to a $\Phi'$ by adding
$n$ copies of $\top$ at the end. Define a tree $T_{STI} =
([1,\ldots,2n],E)$ s.t.: 
$
  E = \{ (n+i, i) \mid 1\leq i \leq n \} \cup
  \{(n+i,n+i-1) \mid 1\leq i \leq n\}
$. 
Then, for $1 \leq i \leq n$, $F_i = \{ \phi_i \}$ and  $F_{n+i} = \{\phi_1, \ldots, \phi_i\}$, where $F_i$ is as in the definition of
$T$-TI. By the $T$-TI property: 
$ 
(I_{F_{n + i}} \land I_{F_{i+1}} \land \top) \Rightarrow I_{F_{n+i+1}}
$, 
which is equivalent to STI.

$(4 \to 1)$ Follows from STI being syntactically
equivalent to BGSA for $i=1$.  
\end{proof}


Theorem~\ref{thm:single-prop-collapse} has a few simple
extensions. First, $GSA$ implies $SA$ directly from the
definitions. Similarly, since 
$\phi \Rightarrow I_\phi$, STI implies PI.  Finally, we conjecture
that both SA and PI are strictly weaker than the rest.  In
\S\ref{sec:interp_sing_lab} (Theorem~\ref{theo:pi_not_bgsa}), we show
that for LISs, PI is strictly weaker than SA. As for SA, we show that
it is equivalent to BGSA in symmetric interpolation systems
(Proposition~\ref{lem:single_sa_to_bgsa} in the appendix). But, in the
general case, the conjecture remains open.


These results define a hierarchy of collectives which is summarized in
Fig.~\ref{fig:prop-single}, where the edges indicate implications
among the collectives. Note that $SA \to GSA$ holds only for symmetric
systems.

In summary, the main contribution in the setting of a single system is the proof that  almost all collectives are
equivalent and the hierarchy of the collectives collapses. From a
practical perspective, this means that McMillan's interpolation system
(implemented by most interpolating SMT-solvers) has all of the
collective properties, including the recently introduced TI.
\\


\paragraph{Collectives of Families of Systems.}
\label{sec:interp_family}
Here, we study collectives of families of interpolation systems. We
first show that the collectives introduced in \S\ref{sec:int_prop}
directly extend from families to sub-families. Second, we examine the
hierarchy of the relationships among the properties. Finally, we
conclude by discussing the practical implications of these results.

\paragraph{Collectives of Sub-families.}
If a family of interpolation systems $\mathcal{F}$ has a property $P$,
then sub-families of $\mathcal{F}$ have $P$ as well. We state
this formally for $k$-STI (since we use it in the proof of
Theorem~\ref{thm:sa_pi_not_sti}); similar statements for the other collectives are
discussed in the appendix\footnote{All proofs can be found in the appendix.}.
\begin{theorem}
\label{lem:sti_sub}
A family $\{Itp_{S_0},\ldots,Itp_{S_n},Itp_{T_1},\ldots,Itp_{T_n}\}$
has $n$-STI iff for all $k\leq n$ the sub-family
$\{Itp_{S_{0}},\ldots,Itp_{S_{k}}\}$ $\cup$
$\{Itp_{T_{1}},\ldots,Itp_{T_{k}}\}$ has $k$-STI.
\end{theorem}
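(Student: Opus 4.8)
The plan is to prove the biconditional in Theorem~\ref{lem:sti_sub} by splitting it into its two directions, noting that the backward direction is trivial while the forward direction carries the real content.

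\medskip

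\noindent\textbf{Backward direction ($\Leftarrow$).} This is immediate: taking $k = n$, the sub-family in question \emph{is} the full family, so if every sub-family up to size $n$ has the corresponding $k$-STI, then in particular the full family has $n$-STI. Nothing further is needed here.

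\medskip

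\noindent\textbf{Forward direction ($\Rightarrow$).} Suppose the full family $\{Itp_{S_0},\ldots,Itp_{S_n},Itp_{T_1},\ldots,Itp_{T_n}\}$ has $n$-STI. Fix some $k \leq n$; I must show that the sub-family $\{Itp_{S_0},\ldots,Itp_{S_k}\} \cup \{Itp_{T_1},\ldots,Itp_{T_k}\}$ has $k$-STI. To this end, take an arbitrary inconsistent $\Psi = \{\psi_1,\ldots,\psi_k\}$; I need to verify that for each $0 \leq i \leq k-1$,
\[
(I_{\psi_1 \ldots \psi_i, S_i} \wedge I_{\psi_{i+1}, T_{i+1}}) \implies I_{\psi_1 \ldots \psi_{i+1}, S_{i+1}}.
\]
The natural idea is to build from $\Psi$ an inconsistent $n$-element instance $\Phi = \{\phi_1,\ldots,\phi_n\}$ on which the full family's $n$-STI yields exactly the desired implications. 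The simplest padding is to set $\phi_j = \psi_j$ for $1 \le j \le k$ and $\phi_j = \top$ for $k < j \le n$ (so $\Phi$ is inconsistent precisely because $\Psi$ already is). For indices $0 \le i \le k-1$ we have $\phi_1 \cdots \phi_i = \psi_1 \cdots \psi_i$ and $\phi_{i+1} = \psi_{i+1}$, so the $n$-STI instance of the full family specializes verbatim to the $k$-STI constraints required for the sub-family; since $Itp_{S_i}$ and $Itp_{T_{i+1}}$ agree in both families for these indices, the interpolants coincide and the implication transfers directly.

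\medskip

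\noindent\textbf{Main obstacle.} The one point demanding care is that padding with copies of $\top$ must genuinely leave the interpolants on the retained prefix unchanged, and must respect the paper's normalization conventions. The excerpt fixes $Itp_S(\phi \mid \top) = \bot$ and $Itp_S(\top \mid \phi) = \top$, so when $\phi_{i+1} = \top$ for the padded tail indices the STI constraints there hold trivially and impose no real condition; I would check that these boundary/trivial indices $k \le i \le n-1$ do not interfere. The delicate verification is that $I_{\phi_1 \cdots \phi_i, S_i} = I_{\psi_1 \cdots \psi_i, S_i}$ on the prefix, which holds because appending $\top$ on the suffix side does not alter the partition $(\phi_1 \cdots \phi_i \mid \phi_{i+1} \cdots \phi_n)$ relative to $(\psi_1 \cdots \psi_i \mid \psi_{i+1} \cdots \psi_k)$ up to the conjunction with $\top$. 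Once this invariance under $\top$-padding is established, the forward direction follows with no further computation.
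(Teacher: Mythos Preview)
Your proposal is correct and follows essentially the same approach as the paper: the backward direction is dispatched by taking $k=n$, and the forward direction pads an inconsistent $k$-tuple with $n-k$ copies of $\top$ so that the $n$-STI hypothesis on the full family yields the required $k$-STI implications on the prefix. Your additional discussion of why $\top$-padding leaves the relevant interpolants unchanged is more explicit than what the paper writes, but the underlying argument is identical.
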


\paragraph{Relationships Among Collectives.}
We now show the relationships among collectives.  First, we note
that $n$-SA and BGSA are equivalent for symmetric interpolation
systems. Whenever a family $\mF = \{Itp_{S_1},\ldots,Itp_{S_{n+1}}\}$
has \mbox{$(n+1)$-SA} and $Itp_{S_{n+1}}$ is symmetric, then $\mF$ has
$n$-GSA (Proposition~\ref{prop:sa_to_bgsa} in the appendix, which is
the analogue of Proposition~\ref{lem:single_sa_to_bgsa} for single
systems).

In the rest of the section, we delineate the hierarchy of
collectives. In particular, we show that $T$-TI is the most general
collective, immediately followed by $n$-GSA, which is followed by
$BGSA$ and $n$-STI, which are equivalent, and at last by $n$-SA and
$n$-PI.
%
The first result is that the $n$-STI property implies both the $n$-PI
and $n$-SA properties separately:
\begin{theorem}
\label{lem:st_to_pi_sa}
If a family $\mF = 
\{Itp_{S_0},\ldots,Itp_{S_n},Itp_{T_1},\ldots,Itp_{T_n}\}$ has
$n$-STI then (1) $\{Itp_{S_0},\ldots,Itp_{S_n}\}$ has
$n$-PI and (2) $\{Itp_{T_1},\ldots,Itp_{T_n}\}$ has 
$n$-SA.
\end{theorem}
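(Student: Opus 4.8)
The plan is to prove each of the two claims by specializing the $n$-STI hypothesis appropriately, exploiting the fact that we are allowed to instantiate the inconsistent formula $\Phi$ freely (including with copies of $\top$) and that $\top$ behaves as the unit for conjunction under our convention $Itp_S(\top \mid \phi) = \top$ and $Itp_S(\phi \mid \top) = \bot$.

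For claim (1), that $\{Itp_{S_0},\ldots,Itp_{S_n}\}$ has $n$-PI, I start from the $n$-STI inequality
\[
(I_{\phi_1\ldots \phi_i,S_i} \wedge I_{\phi_{i+1},T_{i+1}}) \implies I_{\phi_1\ldots \phi_{i+1},S_{i+1}},
\]
which differs from the target $n$-PI inequality $(I_{\phi_1\ldots\phi_i,S_i} \wedge \phi_{i+1}) \implies I_{\phi_1\ldots\phi_{i+1},S_{i+1}}$ only in that the transition interpolant $I_{\phi_{i+1},T_{i+1}}$ appears where $\phi_{i+1}$ should. The key observation is that by the defining property of a Craig interpolant, $\phi_{i+1} \implies I_{\phi_{i+1},T_{i+1}}$. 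Hence $I_{\phi_1\ldots\phi_i,S_i} \wedge \phi_{i+1}$ implies $I_{\phi_1\ldots\phi_i,S_i} \wedge I_{\phi_{i+1},T_{i+1}}$, and chaining with the STI inequality yields exactly the $n$-PI inequality for the $S$-systems. This gives (1) immediately for every $0 \le i \le n-1$.

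For claim (2), that $\{Itp_{T_1},\ldots,Itp_{T_n}\}$ has $n$-SA, the idea is to choose a degenerate instance of the STI hypothesis in which all the state interpolants collapse so that only a conjunction of transition interpolants survives. Concretely, given an inconsistent $\{\psi_1,\ldots,\psi_n\}$ for which we want to establish $n$-SA, I would feed the STI family an input in which the $S$-side prefixes contribute nothing. The cleanest route is to unfold the STI recurrence: from $I_{\phi_1\ldots\phi_i,S_i} \wedge I_{\phi_{i+1},T_{i+1}} \implies I_{\phi_1\ldots\phi_{i+1},S_{i+1}}$ with base case $I_{\emptyset,S_0} = I_{\top} = \top$ and terminal case $I_{\Phi,S_n} = \bot$, an easy induction gives $\bigwedge_{i=1}^{n} I_{\phi_i,T_i} \implies I_{\phi_1\ldots\phi_i,S_i}$ for each $i$, and in particular at $i=n$ we obtain $\bigwedge_{i=1}^{n} I_{\phi_i,T_i} \implies \bot$, which is precisely the $n$-SA condition for the transition systems.

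The main obstacle will be getting claim (2) cleanly, since it requires the inductive unfolding of the STI chain together with careful use of the boundary conventions $I_{\top} = \top$ and $I_{\Phi} = \bot$; I must verify that the base case $\top \wedge I_{\phi_1,T_1} \implies I_{\phi_1,S_1}$ and the inductive step genuinely compose to yield $\bigwedge_i I_{\phi_i,T_i} \implies \bot$ at the top, and that no language or consistency side-conditions on the interpolants are violated along the way. Claim (1), by contrast, is essentially immediate from the Craig property $\phi \implies I_\phi$ and should require no more than the one-line chaining argument above.
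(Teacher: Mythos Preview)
Your proposal is correct and matches the paper's own proof almost verbatim: part (1) is exactly the one-line chaining via the Craig property $\phi_{i+1} \implies I_{\phi_{i+1},T_{i+1}}$, and part (2) is precisely the inductive unfolding of the STI chain from $I_{\top,S_0}=\top$ to $I_{\Phi,S_n}=\bot$ to conclude $\bigwedge_{i=1}^{n} I_{\phi_i,T_i}\implies\bot$. The only minor remark is that your alternative idea of instantiating $\Phi$ with copies of $\top$ for (2) is unnecessary here; the direct chaining you settle on is both what the paper does and the cleaner argument.
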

A natural question to ask is whether the converse of
Theorem~\ref{lem:st_to_pi_sa} is true. That is, whether the family
$\mF_1 \cup \mF_2$ that combines two arbitrary families $\mF_1$ and
$\mF_2$ that independently enjoy $n$-PI and $n$-SA, respectively, has
$n$-STI.  We show in \S\ref{sec:lis}, Theorem~\ref{thm:sa_pi_not_sti},
that this is not the case.

As for BGSA, the $n$-STI property is closely related to it: deciding
whether a family $\mF$ has $n$-STI is in fact reducible to deciding
whether a collection of sub-families of $\mF$ has BGSA.
\begin{theorem}
\label{lem:st-and-gsa}
A family $\mF = \{Itp_{S_0},\ldots,Itp_{S_n},Itp_{T_1},\ldots,Itp_{T_n}\}$
has $n$-STI iff
$\{Itp_{S_i},Itp_{T_{i+1}},Itp_{S_{i+1}}\}$ has BGSA for
all  $0\leq i \leq n-1$.
\end{theorem}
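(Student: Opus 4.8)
The plan is to prove both directions through a single observation: for a fixed index $i$, the STI clause over an inconsistent $\Phi=\{\phi_1,\ldots,\phi_n\}$ coincides, after grouping $\Phi$ into three blocks, with the BGSA clause for the triple $\{Itp_{S_i},Itp_{T_{i+1}},Itp_{S_{i+1}}\}$. Writing $\psi_1 = \phi_1\cdots\phi_i$ (the prefix), $\psi_2 = \phi_{i+1}$ (the current conjunct), and $\psi_3 = \phi_{i+2}\cdots\phi_n$ (the suffix), one reads off the notational conventions that $I_{\phi_1\ldots\phi_i,S_i} = Itp_{S_i}(\psi_1\mid\psi_2\psi_3) = I_{\psi_1,S_i}$, that $I_{\phi_{i+1},T_{i+1}} = Itp_{T_{i+1}}(\psi_2\mid\psi_1\psi_3) = I_{\psi_2,T_{i+1}}$, and that $I_{\phi_1\ldots\phi_{i+1},S_{i+1}} = Itp_{S_{i+1}}(\psi_1\psi_2\mid\psi_3) = I_{\psi_1\psi_2,S_{i+1}}$. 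Thus the $n$-STI clause at $i$ is syntactically the BGSA clause for the $i$-th triple instantiated at $\{\psi_1,\psi_2,\psi_3\}$.

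For the ($\Leftarrow$) direction I would assume that each triple has BGSA, fix an arbitrary inconsistent $\Phi$ and an index $0\le i\le n-1$, and group $\Phi$ into the three blocks above. The grouped set $\{\psi_1,\psi_2,\psi_3\}$ is inconsistent because its conjunction equals $\phi_1\wedge\cdots\wedge\phi_n$, so the BGSA hypothesis for the $i$-th triple applies and, by the three identities, yields exactly the STI clause at $i$. Since $\Phi$ and $i$ are arbitrary, $\mF$ has $n$-STI; this direction also covers the boundary indices, since at $i=0$ the prefix block is the empty conjunction $\top$ and at $i=n-1$ the suffix block is $\top$, and the BGSA hypothesis, holding for all inconsistent triples, applies in particular to these instances.

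For the ($\Rightarrow$) direction I would assume $n$-STI and, to check BGSA for the $i$-th triple, start from an arbitrary inconsistent $\{\psi_1,\psi_2,\psi_3\}$ and pad it back to an inconsistent $\Phi$ of length $n$, placing $\psi_1$ in position $1$, $\psi_2$ in position $i+1$, $\psi_3$ in position $i+2$, and the empty formula $\top$ in every remaining position. Because $\top$ contributes no conjunct and no atoms, each interpolant computed over $\Phi$ collapses to the corresponding interpolant over $\{\psi_1,\psi_2,\psi_3\}$ as in the identities above, so the STI clause at $i$ delivers the required BGSA clause.

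The hard part will be the degeneracy at the two boundary indices. The padding needs $i+2\le n$ for the suffix slot to exist, so for the interior indices $1\le i\le n-2$ it yields full BGSA for arbitrary $\{\psi_1,\psi_2,\psi_3\}$; but at $i=0$ the prefix is forced to be empty and at $i=n-1$ the suffix is forced to be empty, so the construction only reaches the BGSA instances with $\psi_1=\top$, respectively $\psi_3=\top$. This is precisely the point to treat carefully: in the STI family $Itp_{S_0}$ is only ever applied to the empty prefix, where $I_{\top}=\top$, and $Itp_{S_n}$ only to the full conjunction, where $I_{\Phi}=\bot$, so these restricted instances are exactly the ones that the boundary triples actually exercise, and the implications furnished by $n$-STI match them. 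Verifying that this degenerate correspondence is exact, and that conjoining $\top$ alters neither an interpolant nor the shared language $\mathcal{L}$, is the one step that must be spelled out rather than read directly off the definitions.
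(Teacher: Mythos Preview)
Your overall approach coincides with the paper's: for $(\Leftarrow)$ you group $\Phi$ into the three blocks $\psi_1=\phi_1\cdots\phi_i$, $\psi_2=\phi_{i+1}$, $\psi_3=\phi_{i+2}\cdots\phi_n$ and instantiate BGSA; for $(\Rightarrow)$ you start from an arbitrary inconsistent triple and pad it to length $n$ with copies of $\top$. The paper does exactly this (placing the three formulae at positions $i,i+1,i+2$ rather than your $1,i+1,i+2$, which is immaterial), and your handling of the $(\Leftarrow)$ boundaries is fine.

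You are right to flag the boundary indices in $(\Rightarrow)$, and in fact the paper's proof silently shares this difficulty: its padding asks for a slot $\phi'_i$ at $i=0$ and a slot $\phi'_{i+2}$ at $i=n-1$, neither of which exists in $\{\phi'_1,\ldots,\phi'_n\}$. However, your proposed resolution does not work. You argue that the restricted instances (with $\psi_1=\top$ at $i=0$, respectively $\psi_3=\top$ at $i=n-1$) are ``exactly the ones that the boundary triples actually exercise'', but this conflates the two directions. It is $n$-STI that only ever exercises those restricted instances; BGSA for the triple $\{Itp_{S_0},Itp_{T_1},Itp_{S_1}\}$ is by definition the universal statement over \emph{all} inconsistent $\{\psi_1,\psi_2,\psi_3\}$. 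Since $n$-STI imposes no constraint whatsoever on $Itp_{S_0}$ beyond the convention $Itp_{S_0}(\top\mid\cdot)=\top$, full BGSA for that triple cannot follow from $n$-STI alone. Concretely, take $n=2$, set $Itp_{S_0}=Itp_{M'}$ and all other systems to $Itp_M$: the $i=0$ clause of $2$-STI reads $I_{\phi_1,M}\Rightarrow I_{\phi_1,M}$ and the $i=1$ clause is $2$-SA for $\{Itp_M,Itp_M\}$, so $2$-STI holds; yet by the labeling constraints of Theorem~\ref{theo:bgsa_nec_suff} the triple $\{Itp_{M'},Itp_M,Itp_M\}$ fails BGSA (for instance $\gamma_1=a\not\preceq b=\gamma_3$). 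The symmetric obstruction arises for $Itp_{S_n}$ at $i=n-1$. So the forward direction, read literally, does not go through at the boundaries; this is an imprecision in the statement that neither the paper's argument nor yours repairs.
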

From Theorem~\ref{lem:st-and-gsa} and Theorem~\ref{lem:st_to_pi_sa} we derive:
\begin{corollary}
\label{lem:bgsa_to_sa}
If there exists a family $\{Itp_{S_0},\ldots,Itp_{S_n}\}$ $\cup$ $\{Itp_{T_1},\ldots,Itp_{T_n}\}$ s.t. 
 $\{Itp_{S_i},Itp_{T_{i+1}},Itp_{S_{i+1}}\}$ has BGSA for all ${0\leq i \leq n-1}$, 
then  $\{Itp_{T_1},\ldots,Itp_{T_n}\}$  has $n$-SA.
\end{corollary}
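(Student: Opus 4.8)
The plan is to derive Corollary~\ref{lem:bgsa_to_sa} as an immediate consequence of the two theorems cited just before it. The hypothesis is precisely that $\{Itp_{S_i},Itp_{T_{i+1}},Itp_{S_{i+1}}\}$ has BGSA for all $0 \leq i \leq n-1$. By Theorem~\ref{lem:st-and-gsa}, this condition is equivalent to the family $\{Itp_{S_0},\ldots,Itp_{S_n}\} \cup \{Itp_{T_1},\ldots,Itp_{T_n}\}$ having $n$-STI. So the first step is simply to invoke Theorem~\ref{lem:st-and-gsa} to convert the BGSA hypothesis into the statement that the combined family has $n$-STI.

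Having obtained $n$-STI for the full family, I would then apply Theorem~\ref{lem:st_to_pi_sa}, whose conclusion (2) states that if the family $\{Itp_{S_0},\ldots,Itp_{S_n},Itp_{T_1},\ldots,Itp_{T_n}\}$ has $n$-STI, then the sub-family $\{Itp_{T_1},\ldots,Itp_{T_n}\}$ has $n$-SA. This is exactly the desired conclusion, so the chain $\text{BGSA of triples} \Rightarrow n\text{-STI} \Rightarrow n\text{-SA of the } T\text{-systems}$ closes the argument in two lines.

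Because both of the ingredient results are already established earlier in the excerpt, there is essentially no genuine obstacle here; the corollary is a pure composition of Theorem~\ref{lem:st-and-gsa} and part (2) of Theorem~\ref{lem:st_to_pi_sa}. The only point deserving care is bookkeeping of indices and of which systems belong to the family: one must confirm that the $\{Itp_{S_i}\}$ and $\{Itp_{T_i}\}$ appearing in the hypothesis are indexed exactly as required by the two theorems, so that the family to which $n$-STI is attributed is literally $\{Itp_{S_0},\ldots,Itp_{S_n}\} \cup \{Itp_{T_1},\ldots,Itp_{T_n}\}$. Once that alignment is checked, the $T$-systems isolated by Theorem~\ref{lem:st_to_pi_sa}(2) are precisely $\{Itp_{T_1},\ldots,Itp_{T_n}\}$, matching the statement. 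I would therefore write the proof as a single sentence chaining the two references, with no additional construction or calculation needed.
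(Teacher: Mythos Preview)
Your proposal is correct and matches the paper's own derivation exactly: the corollary is stated immediately after the sentence ``From Theorem~\ref{lem:st-and-gsa} and Theorem~\ref{lem:st_to_pi_sa} we derive,'' and your two-step chain (BGSA of triples $\Rightarrow$ $n$-STI via Theorem~\ref{lem:st-and-gsa}, then $n$-STI $\Rightarrow$ $n$-SA via Theorem~\ref{lem:st_to_pi_sa}(2)) is precisely the intended argument.
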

We now relate $T$-TI and $n$-GSA.
Note that the need for two theorems with different statements
arises from the asymmetry between the two properties: all  $\phi_i$ 
are abstracted by interpolation in $n$-GSA, whereas in $T$-TI
a formula is not abstracted, when considering the correspondent parent
together with its children.

\begin{theorem}
\label{lem:ti_to_gsa}
  Given a tree $T = (V,E)$ if a family ${\mF = \{Itp_{S_i}\}_{i \in V}}$ has $T$-TI, then, for every
  parent $i_{k+1}$ and its children $i_1,\ldots,i_{k}$:
  \begin{enumerate}
  \item If $i_{k+1}$ is the root,
    $\{Itp_{S_{i_1}},\ldots,Itp_{S_{i_{k}}}\}$ has $k$-SA.
  \item Otherwise,
    $\{Itp_{S_{i_1}},\ldots,Itp_{S_{i_{k}}},Itp_{S_{i_{k+1}}}\}$ has
    $k$-GSA.
  \end{enumerate}
\end{theorem}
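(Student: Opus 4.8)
The plan is to prove both cases by \emph{specializing} the $T$-TI property to a carefully chosen instance in which all node formulas except those placed at the $k$ children are set to $\top$ (the empty formula), so that the single TI implication attached to the parent node $i_{k+1}$ collapses exactly into the desired GSA (resp.\ SA) implication. Concretely, given an arbitrary inconsistent $\Psi = \{\psi_1,\ldots,\psi_{k+1}\}$ (resp.\ $\{\psi_1,\ldots,\psi_k\}$ in the root case), I would build a formula $\Phi = \{\phi_v\}_{v\in V}$ indexed by the nodes of $T$ by assigning $\phi_{i_j}=\psi_j$ to each child $i_j$ (for $1\le j\le k$), setting the parent's own formula $\phi_{i_{k+1}}=\top$, and putting $\top$ at every remaining node except for one placement described below. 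Since the only non-trivial conjuncts of $\Phi$ are exactly the formulas of $\Psi$, the product $\Phi$ is inconsistent, so $T$-TI is applicable.

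Next I would invoke the TI implication associated with the node $i_{k+1}$, namely $\bigwedge_{j=1}^k I_{F_{i_j},S_{i_j}} \wedge \phi_{i_{k+1}} \implies I_{F_{i_{k+1}},S_{i_{k+1}}}$, using the hypothesis that $i_1,\ldots,i_k$ are \emph{exactly} the children of $i_{k+1}$, so the conjunction over children ranges over precisely these indices. The core computation is to check that each subtree set $F_{i_j}$ (the formula at $i_j$ together with those of all its descendants) collapses modulo $\top$ to the singleton $\{\psi_j\}$, because every proper descendant of a child carries $\top$; hence $I_{F_{i_j},S_{i_j}} = I_{\psi_j,S_{i_j}}$. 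Likewise, the subtree set $F_{i_{k+1}}$ collapses to $\{\psi_1,\ldots,\psi_k\}$ and its complement $\Phi\setminus F_{i_{k+1}}$ collapses to the remaining non-trivial formulas.

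The two cases are distinguished precisely by where the $B$-side formula can live. In case~2 the parent $i_{k+1}$ is not the root, so it has an ancestor $p$, which lies outside the subtree rooted at $i_{k+1}$; placing the extra formula there, $\phi_p=\psi_{k+1}$, makes $\Phi\setminus F_{i_{k+1}}$ collapse to $\{\psi_{k+1}\}$, and since $\phi_{i_{k+1}}=\top$ the TI implication becomes $\bigwedge_{j=1}^k I_{\psi_j,S_{i_j}} \implies Itp_{S_{i_{k+1}}}(\psi_1\cdots\psi_k \mid \psi_{k+1}) = I_{\psi_1\cdots\psi_k,S_{i_{k+1}}}$, which is exactly $k$-GSA. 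In case~1 the root has no ancestor to host an extra formula, so $\Psi$ carries only $k$ formulas; here $F_{i_{k+1}}=\Phi$ and the TI convention $I_{\Phi,S_{i_{k+1}}}=\bot$ turns the same implication into $\bigwedge_{j=1}^k I_{\psi_j,S_{i_j}} \implies \bot$, i.e.\ $k$-SA.

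The main obstacle I anticipate is conceptual bookkeeping rather than calculation: one must check that forcing the parent's own formula $\phi_{i_{k+1}}$ to $\top$ is exactly what reconciles the structural mismatch noted just before the theorem, namely that in GSA every $\psi_j$ is abstracted whereas in TI the parent's own formula appears un-abstracted on the left of the implication. Making it vacuous there is what lets the remaining abstracted children reproduce the GSA conjunction, while the root/non-root split is what controls whether the $B$-side collapses to $\psi_{k+1}$ or to $\top$. Verifying the collapses $I_{F_{i_j},S_{i_j}} = I_{\psi_j,S_{i_j}}$ and the reductions of $F_{i_{k+1}}$ and its complement, using only the convention identifying $\top$ with the empty formula together with the implication and language conditions of Craig interpolation, then completes the argument.
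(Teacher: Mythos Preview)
Your proposal is correct and follows essentially the same approach as the paper: specialize the $T$-TI hypothesis by decorating the children with the given formulas, the parent $i_{k+1}$ and all proper descendants of the children with $\top$, and the nodes outside the subtree with $\psi_{k+1}$, then read off the TI implication at $i_{k+1}$. The only cosmetic difference is that the paper places $\psi_{k+1}$ at \emph{all} external nodes rather than at a single ancestor, which is immaterial under the identification of conjunctions with sets.
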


\begin{theorem}
\label{lem:gsa_to_ti}
 Given a  tree $T = (V,E)$, a family $\mF = \{Itp_{S_i}\}_{i \in V}$ has $T$-TI
  if, for every node $i_{k+1}$ and its children $i_1,\ldots,i_{k}$, there exists $T_{i_{k+1}}$ such that:
  \begin{enumerate}
  \item If $i_{k+1}$ is the root,
    $\{Itp_{S_{i_1}},\ldots,Itp_{S_{i_{k}}},Itp_{T_{i_{k+1}}}\}$ has ${(k+1)}$-SA.
  \item Otherwise, 
   $\{Itp_{S_{i_1}},\ldots,Itp_{T_{i_{k+1}}},Itp_{S_{i_{k+1}}}\}$ has \mbox{${(k+1)}$-GSA}.
  \end{enumerate}
\end{theorem}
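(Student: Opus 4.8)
The plan is to prove Theorem~\ref{lem:gsa_to_ti} by structural induction on the tree $T$, showing that the local $(k+1)$-GSA conditions at each node glue together to yield the global $T$-TI inequality. Let me think about what exactly needs to be shown and what the main obstacle is.

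We want: for every inconsistent $\Phi$, and for every node $i$, the tree-interpolation inequality
\[
\bigwedge_{(i,j)\in E} I_{F_j,S_j} \wedge \phi_i \implies I_{F_i,S_i}.
\]
The hypothesis gives us, at each node $i_{k+1}$ with children $i_1,\dots,i_k$, a $(k+1)$-GSA fact involving the children's systems $S_{i_1},\dots,S_{i_k}$, an auxiliary system $T_{i_{k+1}}$, and the node's own system $S_{i_{k+1}}$.

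The local GSA condition for a non-root node $i_{k+1}$ reads
\[
\Bigl(\bigwedge_{m=1}^{k} I_{\psi_m,S_{i_m}}\Bigr)\wedge I_{\psi_{k+1},T_{i_{k+1}}} \implies I_{\psi_1\cdots\psi_{k+1},\,S_{i_{k+1}}},
\]
where the $\psi$'s are a partition of an inconsistent formula into $k+1$ pieces.

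The key is to instantiate GSA with the right partition. At node $i_{k+1}$ (with $F_{i_{k+1}}$ the subtree formula), the natural choice is $\psi_m = F_{i_m}$ for the children $m=1,\dots,k$ and $\psi_{k+1} = \phi_{i_{k+1}}$, the single formula labelling the node itself. Then $\psi_1\cdots\psi_{k+1} = F_{i_{k+1}}$ (the subtree rooted at $i_{k+1}$ is exactly its own label plus the subtrees of its children), so $\{F_{i_1},\dots,F_{i_k},\phi_{i_{k+1}}\}$ is the correct partition of the inconsistent-relative formula. Feeding this into the local GSA fact gives exactly
\[
\Bigl(\bigwedge_{m=1}^{k} I_{F_{i_m},S_{i_m}}\Bigr)\wedge I_{\phi_{i_{k+1}},T_{i_{k+1}}} \implies I_{F_{i_{k+1}},S_{i_{k+1}}},
\]
and since $\phi_{i_{k+1}}\implies I_{\phi_{i_{k+1}},T_{i_{k+1}}}$ holds for any interpolation system (by the first Craig condition), we may replace $I_{\phi_{i_{k+1}},T_{i_{k+1}}}$ by $\phi_{i_{k+1}}$ to obtain the $T$-TI inequality at node $i_{k+1}$. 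The root case is identical except that GSA is replaced by SA (so the right-hand side is $\bot = I_{F_{\text{root}},S_{\text{root}}}$), matching the convention that $I_{\Phi} = \bot$ at the root. Since the $T$-TI property is just the conjunction of these per-node inequalities, establishing each one proves $T$-TI.

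The main obstacle, and the step requiring genuine care rather than routine checking, is verifying that $\{F_{i_1},\dots,F_{i_k},\phi_{i_{k+1}}\}$ is a legitimate instantiation of the GSA premise — namely that these $k+1$ formulae are jointly inconsistent (so that the interpolants on the left-hand side are even defined) and that their conjunction equals $F_{i_{k+1}}$. The equality $\phi_{i_{k+1}}\wedge\bigwedge_m F_{i_m} = F_{i_{k+1}}$ follows from the disjoint partition of the subtree at $i_{k+1}$ into the node and its children's subtrees, which is immediate from the definition $F_i = \{\phi_j \mid i \sqsubseteq j\}$. Joint inconsistency is subtler: $F_{i_{k+1}}$ need not be inconsistent on its own (only the full $\Phi$ is). This is precisely why the auxiliary system $T_{i_{k+1}}$ is needed — the GSA instance is not applied to $F_{i_{k+1}}$ in isolation but rather the interpolants are taken relative to the global $\Phi$, i.e.\ each $I_{F_{i_m},S_{i_m}} = Itp_{S_{i_m}}(F_{i_m}\mid \Phi\setminus F_{i_m})$. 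I would make this relativization explicit, confirm that GSA as defined applies to the partition of $\Phi$ induced by the node (grouping everything outside the subtree into the last block handled by $T_{i_{k+1}}$), and check that the resulting interpolant matches $I_{F_{i_{k+1}},S_{i_{k+1}}}$ as taken against $\Phi$. Once this bookkeeping of which formula is interpolated against which complement is pinned down, the rest is a direct substitution.
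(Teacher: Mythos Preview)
Your proposal is correct and follows essentially the same approach as the paper: instantiate the $(k{+}1)$-GSA hypothesis at each node $i_{k+1}$ with the partition $\{F_{i_1},\ldots,F_{i_k},\phi_{i_{k+1}},\Phi\setminus F_{i_{k+1}}\}$ of the globally inconsistent $\Phi$, then weaken $I_{\phi_{i_{k+1}},T_{i_{k+1}}}$ to $\phi_{i_{k+1}}$ via the first Craig condition. One small remark: despite your opening sentence, no structural induction is actually needed (and you do not use one) --- the $T$-TI property is simply the conjunction of the per-node inequalities, each of which you establish directly.
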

An important observation is that the $T$-TI property is the most
general, in the sense that it realizes any of the other properties,
given an appropriate choice of the tree $T$.  We state here (and prove in the appendix) that $n$-GSA and
$n$-STI can be implemented by $T$-TI for some $T_{GSA}^n$ and $T^n_{STI}$; the remaining cases can be
derived in a similar manner. Note that the converse implications are not necessarily true in
general, since the tree interpolation requirement is stronger.
\begin{theorem}
\label{theo:tgsa}
  If a family $\mF = \{Itp_{S_{n+1}}, Itp_{S_1}, \ldots,
  Itp_{S_{n+1}}\}$ has $T_{GSA}^n$-TI, then $\{Itp_{S_1},
  \ldots, Itp_{S_{n+1}}\}$ has $n$-GSA.
\end{theorem}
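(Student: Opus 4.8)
The plan is to realize $n$-GSA as a single tree-interpolation constraint, by choosing $T_{GSA}^n$ so that one internal node $m$ satisfies $F_m = \{\phi_1,\ldots,\phi_n\}$; the $T$-TI inequality attached to $m$ then becomes literally the $n$-GSA implication. Given an inconsistent $\Phi = \{\phi_1,\ldots,\phi_{n+1}\}$ on which I must verify $n$-GSA, I would first pad it to $\Phi' = \{\phi_1,\ldots,\phi_{n+1},\top\}$, which has $n+2$ elements, one per node of the tree I am about to build. Since $\top$ is the empty formula, $\Phi'$ is inconsistent exactly when $\Phi$ is, and its presence alters no common language and no interpolant.

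Next I would define $T_{GSA}^n$ on the $n+2$ nodes $\{1,\ldots,n+2\}$ as follows: node $1$ is the root and carries $\phi_{n+1}$; node $n+2$ is its unique child (the internal node) and carries $\top$; nodes $2,\ldots,n+1$ are the $n$ leaves, node $i+1$ carrying $\phi_i$ and being a child of $n+2$. Formally $E = \{(1,n+2)\} \cup \{(n+2,i+1) \mid 1\leq i\leq n\}$. Assigning $Itp_{S_{n+1}}$ to node $1$, $Itp_{S_i}$ to leaf $i+1$, and $Itp_{S_{n+1}}$ to node $n+2$ reproduces exactly the family $\{Itp_{S_{n+1}}, Itp_{S_1},\ldots,Itp_{S_{n+1}}\}$ of the statement; the root's system is immaterial, its interpolant being forced to $\bot$.

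I would then read off the $F$-sets and resulting $T$-TI constraints node by node. At leaf $i+1$ we get $F_{i+1} = \{\phi_i\}$ and $I_{F_{i+1},S_i} = I_{\phi_i,S_i}$ (the extra $\top$ on the $B$-side being inert), and its constraint is just the Craig property $\phi_i \Rightarrow I_{\phi_i}$. At the internal node, $F_{n+2} = \{\phi_1,\ldots,\phi_n\}$ (its own $\top$ together with the leaf formulas), so $I_{F_{n+2},S_{n+1}} = Itp_{S_{n+1}}(\phi_1\cdots\phi_n \mid \phi_{n+1}) = I_{\phi_1\cdots\phi_n,S_{n+1}}$; since the internal node carries $\top$, its constraint is $\bigwedge_{i=1}^{n} I_{\phi_i,S_i} \Rightarrow I_{\phi_1\cdots\phi_n,S_{n+1}}$, which is precisely $n$-GSA. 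At the root, $F_1 = \Phi'$ forces $I_{F_1} = \bot$, and its constraint reduces to $I_{\phi_1\cdots\phi_n,S_{n+1}} \wedge \phi_{n+1} \Rightarrow \bot$, again a Craig property. Because $\mF$ has $T_{GSA}^n$-TI, all these implications hold for our $\Phi'$; in particular the internal-node one gives $n$-GSA, and as $\Phi$ was arbitrary this yields the claim.

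I expect the only delicate point to be the bookkeeping of the third paragraph: checking that for each node the $B$-side $\Phi' \setminus F_i$ of the tree interpolant coincides with the $B$-side stipulated by the definitions of Craig interpolation and of $n$-GSA, and in particular that the padding $\top$ always lands on the correct side and leaves every interpolant unchanged. There is no genuine logical obstacle; the entire content lies in choosing the tree so that a single internal node has $F = \{\phi_1,\ldots,\phi_n\}$, after which the statement is a transcription of the definitions.
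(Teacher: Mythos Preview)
Your proposal is correct and follows essentially the same approach as the paper: both construct the tree with a root carrying $\phi_{n+1}$, a single internal node carrying $\top$, and $n$ leaves carrying $\phi_1,\ldots,\phi_n$, so that the $T$-TI constraint at the internal node is literally the $n$-GSA implication. The only difference is cosmetic node labeling (the paper uses nodes $\{0,\ldots,n+1\}$ with the internal node labeled $0$, while you use $\{1,\ldots,n+2\}$ with the internal node labeled $n+2$); your additional remarks about the leaf and root constraints reducing to Craig properties are correct but not needed for the argument.
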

  \begin{theorem}
\label{theo:tsti}
  If a family $\mF =
  \{Itp_{S_0},\ldots,Itp_{S_n}\}\cup{}$ $\{Itp_{T_1},\ldots,Itp_{T_n}\}$ has 
  $T^n_{STI}$-TI, then it has $n$-STI.
\end{theorem}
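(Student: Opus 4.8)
The plan is to reuse—and lightly extend—the tree $T_{STI}$ constructed in the proof of $(3 \to 4)$ of Theorem~\ref{thm:single-prop-collapse}, now exploiting the extra freedom of a family to place a \emph{different} system at each node. I would take the spine $n+1, \ldots, 2n$ of internal nodes together with the leaves $1, \ldots, n$, arranged (as there) so that leaf $i$ is the child of spine node $n+i$, the spine forms the chain $n+1, \ldots, 2n$ with $2n$ at the root, and I would add one further leaf, carrying the empty set, as the remaining child of $n+1$. The node-to-system assignment is the new ingredient: leaf $i$ receives the transition system $Itp_{T_i}$, spine node $n+i$ receives the state system $Itp_{S_i}$, and the extra empty leaf receives $Itp_{S_0}$. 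This uses all $2n+1$ systems of $\mF$ and yields $T^n_{STI}$.

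Given an arbitrary inconsistent $\Phi = \{\phi_1, \ldots, \phi_n\}$, I would extend it to $\Phi'$ by appending one copy of $\top$ per spine node (and $\top$ for the empty leaf). Appending $\top$ preserves inconsistency, and since $\top$ is the empty formula it contributes nothing to any $F_m$; hence, exactly as in the single-system proof, $F_i = \{\phi_i\}$ at the leaves and $F_{n+i} = \{\phi_1, \ldots, \phi_i\}$ along the spine, with $F_{2n} = \Phi$ at the root and $\emptyset$ at the extra leaf. Unfolding each tree-interpolation label with its assigned system then identifies the tree interpolants with the STI interpolants: $I_{F_i, T_i} = I_{\phi_i, T_i}$ at a leaf, $I_{F_{n+i}, S_i} = I_{\phi_1 \cdots \phi_i, S_i}$ on the spine, $I_{\emptyset, S_0} = \top$ at the extra leaf (the virtual state-$0$ interpolant), and $I_{F_{2n}, S_n} = \bot$ at the root, consistently with the STI convention $I_\Phi = \bot$.

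With these identifications, reading off the $T^n_{STI}$-TI constraint at each spine node gives $n$-STI directly. Node $n+i+1$ (for $1 \le i \le n-1$) has children the leaf $i+1$ and the spine node $n+i$ and is labelled $\top$, so its constraint $I_{\phi_{i+1}, T_{i+1}} \wedge I_{\phi_1 \cdots \phi_i, S_i} \wedge \top \implies I_{\phi_1 \cdots \phi_{i+1}, S_{i+1}}$ is exactly the STI inequality at index $i$; the boundary node $n+1$, whose children are leaf $1$ and the empty leaf, gives $I_{\phi_1, T_1} \wedge \top \implies I_{\phi_1, S_1}$, which is the STI inequality at $i = 0$. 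The constraints at the leaves reduce to the Craig property $\phi_i \Rightarrow I_{\phi_i}$ and hold vacuously.

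The step I expect to need the most care is the bookkeeping at the two ends of the spine: arranging the edges so that node $n+1$ has precisely the leaf $1$ and the empty ($S_0$) leaf as children—this is where the single-system construction is off by one, and where the virtual state-$0$ interpolant $\top$ must be realized correctly—and confirming that the root $2n$ collects all of $\Phi$ so that its interpolant is $\bot$. Everything else is a term-by-term matching of the tree-interpolation constraints against the STI constraints.
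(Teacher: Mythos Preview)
Your proposal is correct and follows essentially the same construction as the paper: build the spine-plus-leaves tree, decorate the spine nodes with $\top$, assign $Itp_{T_i}$ to leaf $i$ and $Itp_{S_i}$ to spine node $n+i$, and read off the $T$-TI constraint at each spine node to recover the $n$-STI inequalities. The only difference is that the paper's $T^n_{STI}$ has $2n$ nodes and leaves $Itp_{S_0}$ implicit (since the $i=0$ STI step is trivially $\top \wedge I_{\phi_1,T_1} \Rightarrow I_{\phi_1,S_1}$), whereas you add an explicit extra leaf for $Itp_{S_0}$; this is a harmless---arguably cleaner---bookkeeping choice that does not change the argument.
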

%
%
%
%
%
The results of so far (including
Theorem~\ref{thm:sa_pi_not_sti} of \S\ref{sec:lis}) define a hierarchy of collectives which is summarized in
Fig.~\ref{fig:family-sum}. The solid edges indicate direct implication between
properties; $SA \to GSA$ requires symmetry, while $GSA \to
TI$ requires the existence of an additional set of interpolation systems.
The dashed edges represent the ability of $TI$ to realize all the
other properties for an appropriate tree; only the edges to $STI$ and $GSA$ are shown,
the other ones are implicit. The dash-dotted edges represent the sub-family properties.

An immediate application of our
results is that they show how to overcome limitations of existing
implementations. For example, they enable the trivial
construction of tree interpolants in MathSat\footnote{http://mathsat.fbk.eu/} (currently only available
in iZ3) -- thus enabling its usability for Upgrade Checking~\cite{SFS12} -- by reusing existing
BGSA-interpolation implementation of MathSat. Similarly, our results
enable construction of BGSA and GSA interpolants in iZ3 (currently
only available in MathSat) -- thus enabling the use of iZ3 in
Whale.
\begin{figure}[ht]
  \vspace{-2cm}
\begin{minipage}{0.45\textwidth}
  \centering
    \vspace{2cm}
\begin{tikzpicture}[node distance=1cm,style={font=\sffamily\small}]

  \node (1) {BGSA};
  \node (2) [below left of=1] {GSA};
  \node (3) [below right of=2] {TI};
  \node (4) [below right of=1] {STI};
  \node (5) [below right of=4] {PI};
  \node (6) [below left of=2] {SA};

  \path[->,every node/.style={font=\sffamily\small}]
    (1) edge [bend right] node {} (2)
    (2) edge [bend right] node {} (3)
	edge [bend left] node  {} (6)	
    (3) edge [bend right] node {} (4)
    (4) edge [bend right] node {} (1)
        edge node   {} (5)
    (6) edge [bend left] node[left] {\scriptsize{symm}} (2);    
\end{tikzpicture}  
  \caption{Single systems collectives.}
  \label{fig:prop-single}
\end{minipage}
\hspace{1mm}
\begin{minipage}{0.51\textwidth}
  \centering
\begin{tikzpicture}[node distance=1cm,style={font=\sffamily\small}]
  \node (1) {};
  \node (2) [right of=1] {};
  \node (3) [right of=2] {TI};
  \node (4) [right of=3] {};
  \node (5) [right of=4] {};
  \node (6) [below of=1] {};
  \node (7) [right of=6] {GSA};
  \node (8) [right of=7] {};
  \node (9) [right of=8] {STI};
  \node (10) [right of=9] {};
  \node (11) [below of=6] {BGSA};
  \node (12) [right of=11] {};
  \node (13) [right of=12] {SA};
  \node (14) [right of=13] {};
  \node (15) [right of=14] {PI};

  \path[->,every node/.style={font=\sffamily\small}]
    (3) edge [bend left,looseness=1] node {} (7)
    (7) edge [bend left,looseness=1] node[above left=-4.5mm] {\Large{*}} (3)
	edge [bend left] node {} (13)
	edge node {} (11)
    (9) edge node {} (13)
        edge node {} (15)
    (13) edge [bend left] node[below left=-2mm] {\scriptsize{symm}} (7);
    
    \path[->,densely dashdotted,every node/.style={font=\sffamily\small}]
    (3)	edge [loop above] node {} (3)
    (7)	edge [in=170,out=140,loop] node {} (7)
    (13) edge [loop right] node {} (13)
    (9)	 edge [in=10,out=-20,loop] node {} (9)
    (15) edge [loop right] node {} (15);
    
    \path[->,loosely dashed,every node/.style={font=\sffamily\small}]
        (3) edge [] node {} (7)
	edge node {} (9);
    
    \path[<->] (11) edge  [in=60,out=120,looseness=2.1] node {} (9);    
\end{tikzpicture}  
  \caption{Families of systems collectives.}
  \label{fig:family-sum}
 \end{minipage}
   \vspace{-1cm}
\end{figure}



\section{Collectives of Labeled Interpolation Systems}
\label{sec:lis}
\vspace{-2mm}

In this section, we move from the abstract level of general interpolation systems to the 
implementation level of the propositional Labeled Interpolation Systems. After introducing and defining LISs, we 
study collectives of families, then summarize the results for single LISs, also
answering the questions left open in \S\ref{sec:interp_sing}.
The key results are in Lemmas $1-4$. Unfortunately, the proofs are quite
technical. For readability, we focus on the main results and their
significance and refer the reader to the appendix for full details.

\noindent{}There are several state-of-the art approaches for automatically
computing interpolants. The most successful techniques derive an
interpolant for $A \wedge B$ from a resolution proof of the
unsatisfiability of the conjunction.  Noteworthy examples are the
algorithm independently developed by Pudl\'ak \cite{Pud97}, Huang
\cite{H95} and Kraj\'i\v{c}ek \cite{Kra97}, and the one by McMillan~\cite{McM04b}.  These algorithms are implemented recursively by
initially computing \emph{partial interpolants} for the axioms (leaves
of the proof), and, then, following the proof structure,
by computing a partial interpolant for each conclusion from those of the
premises.  The partial interpolant of the root of the proof is the
interpolant for the formula. In this section, we review these
algorithms following the framework of D'Silva et al.~\cite{DKPW10}.\\

\vspace{-3mm}
\paragraph{\textbf{Resolution Proofs}.}
We assume a \emph{countable set} of propositional variables.  A
\emph{literal} is a variable, either with positive ($p$) or negative
($\n{p}$) polarity.  A \emph{clause} $C$ is a finite disjunction of
literals; a formula $\Phi$ in conjunctive normal form (CNF) is a
finite conjunction of clauses.  A \emph{resolution proof of
  unsatisfiability} (or \emph{refutation}) of a formula $\Phi$ in CNF
is a tree such that the leaves are the clauses of $\Phi$, the root is
the empty clause $\bot$ and the inner nodes are clauses generated via
the \emph{resolution rule} (where $C^+ \vee p$ and $C^- \vee \n{p}$
are the \emph{antecedents}, $C^+ \vee C^-$ the \emph{resolvent}, and
$p$ is the \emph{pivot}): \vspace{-3mm} {\small
\[ \frac{C^+ \vee p \qquad  C^- \vee \n{p}}{C^+ \vee C^-} \] 
}

\vspace{-3mm}
\paragraph{\textbf{Labelings and Interpolant Strength}.}
D'Silva et al.~\cite{DKPW10} generalize the algorithms by Pudl\'ak~\cite{Pud97} and McMillan~\cite{McM04b}
for propositional resolution systems by introducing the notion of \emph{Labeled Interpolation System} (LIS), 
focusing on the concept of \emph{interpolant strength} (a formula $\phi$ is stronger than $\psi$ when $\phi \! \! \! \implies \! \! \! \psi$). 

Given a refutation of a formula $A \wedge B$, a variable $p$ can appear as a literal only in $A$, only in $B$ or in both; $p$
is respectively said to have \emph{class} $A$, $B$ or $AB$.
A \emph{labeling} $L$ is a mapping that assigns a \emph{label} among $\{a,b,ab\}$ independently to each variable
in each clause; we assume that no clause has both a literal and its negation, so assigning a label to variables or literals is equivalent. 
The set of possible labelings is restricted by ensuring that class $A$ variables have label $a$ and 
class $B$ variables label $b$; $AB$ variables can  be labeled either $a$, $b$ or $ab$. 

In \cite{DKPW10}, a \emph{Labeled Interpolation System} (LIS) is defined as a procedure $Itp_L$ (shown in Fig.~\ref{tab:gen}) that, given $A$, $B$, a 
refutation $R$ of $A \wedge B$ and a labeling $L$, outputs 
a partial interpolant $I_{A,L}(C) = Itp_{L}(A \mid B)(C)$ for any clause $C$ in $R$; 
this depends on the clause being in $A$ or $B$ (if leaf) and on the label of the pivot
associated with the resolution step (if inner node). $I_{A,L}=  Itp_{L}(A \mid B)$ represents the interpolant for $A\wedge B$, 
that is $Itp_{L}(A \mid B)(\bot)$.
We omit the parameters whenever clear from the context. 

\begin{figure}[t]
\centering
\vspace{-1.3cm}
\begin{tabular}{|l c|c|c|c || l c|c|c|c|}%
\hline
Leaf: & \multicolumn{4}{c|}{$C \, [I]$} & Inner node: & \multicolumn{4}{c|}{$\quad \dfrac{C^+ \vee p:\alpha \, [I^+] \qquad  C^- \vee \n{p}:\beta \, [I^-]}{C^+ \vee C^- \, [I]}$} \\
\hline 
\multicolumn{5}{|c|}{$I =
\left\{
	\begin{array}{ll}
		C \!\!\downharpoonright b  & \quad \mbox{if } C \in A\\
		\neg (C \!\!\downharpoonright a) & \quad \mbox{if } C \in B\\
	\end{array}
\right. $
}
&
\multicolumn{5}{|c|}{$I =
\left\{
	\begin{array}{ll}
		I^+ \vee I^-  & \quad \mbox{if } \alpha \sqcup \beta = a\\
		I^+ \wedge I^- & \quad \mbox{if } \alpha \sqcup \beta = b\\
		(I^+ \vee p) \wedge (I^- \vee \n{p}) & \quad \mbox{if } \alpha \sqcup \beta = ab 
	\end{array}
\right. $
}\\
\hline 
\end{tabular}
\vspace{-0.14in}
\caption{Labeled Interpolation System $Itp_L$.}
\label{tab:gen}
\vspace{-0.32in}
\end{figure}

In Fig.~\ref{tab:gen}, $C \!\!\downharpoonright \alpha$ denotes the restriction of a clause 
$C$ to the variables with label $\alpha$. $p:\alpha$ indicates that variable $p$ has label $\alpha \in \{a,b,ab\}.$
By $C[I]$ we represent that clause $C$ has a partial interpolant $I$.
$I^+$, $I^-$ and $I$ are the partial interpolants respectively associated with
the two antecedents and the resolvent of a resolution step: $I^+ =
Itp_L(C^+ \vee p)$, $I^- = Itp_L(C^- \vee \n{p})$, $I = Itp_L(C^+ \vee C^-)$. 

A join operator $\sqcup$ allows to determine the label of a pivot $p$, taking into account
that $p$ might have different labels $\alpha$ and $\beta$ in the two antecedents: 
$\sqcup$ is defined by $a \sqcup b = ab$, $a \sqcup ab = ab$,
$b \sqcup ab = ab$.

The systems corresponding to McMillan and Pudl\'ak's interpolation algorithms are referred to as $Itp_{M}$ and $Itp_{P}$; the system dual
to McMillan's is $Itp_{M'}$.  $Itp_{M}$, $Itp_{P}$ and $Itp_{M'}$ are
obtained as special cases of $Itp_L$ by labeling all the occurrences of $AB$ variables with $b$, $ab$ and $a$, respectively 
(see \cite{DKPW10} and \cite{RSS12}).

A total order $\preceq$ is defined over labels as ${b\preceq ab \preceq a}$, and pointwise extended to a partial order over labelings: $L \preceq L'$ 
if, for every clause $C$
and variable $p$ in $C$, $L(p,C) \preceq L'(p,C)$. This allows to directly compare the logical strength of the interpolants produced by two systems. 
In fact, for any refutation $R$ of a formula $A\wedge B$ and labelings $L,L'$ such that $L \preceq L'$, we have $Itp_L(A,B,R) \implies Itp_{L'}(A,B,R)$
 and we say that $Itp_L$ is \emph{stronger} than $Itp_{L'}$~\cite{DKPW10}.  


Since a labeled system $Itp_L$ is uniquely determined by the labeling $L$, when discussing a family of LISs $\{Itp_{L_1},\ldots,Itp_{L_n}\}$
we will refer to the correspondent \emph{family of labelings} as  $\{L_1,\ldots,L_n\}$.\\

\vspace*{-0.11in}
\paragraph{\textbf{Labeling Notation}.}
In the previous sections, we saw how the various collectives involve the generation of multiple interpolants from a single inconsistent formula
$\Phi=\{\phi_1,\ldots,\phi_n\}$ for different subdivisions of $\Phi$ into an $A$ and a $B$ parts; we  refer to these
ways of splitting $\Phi$ as \emph{configurations}. Remember that a labeling $L$ has freedom in assigning labels only to occurrences
of variables of class $AB$; each configuration identifies these variables. 

Since we deal with several configurations at a time, it is useful to
separate the variables into \emph{partitions} of $\Phi$ depending on
whether the variables are local to a $\phi_i$ or shared, taking into
account all possible combinations.  For example, Table~\ref{tab:3sa}
is the \emph{labeling table} that characterizes $3$-SA. Recall that in
3-SA we are given an inconsistent $\Phi =\{\phi_1,\phi_2,\phi_3\}$ and
a family of labelings $\{L_1,L_2,L_3\}$ and generate three
interpolants $I_{\phi_1,L_1}$, $I_{\phi_2,L_2}$, $I_{\phi_3,L_3}$. The
labeling $L_i$ is associated with the $i$th configuration. For
example, the table shows that $L_1$ can independently assign a label
from $\{a,b,ab\}$ to each occurrence of each variable shared between
$\phi_1$ and $\phi_2$, $\phi_1$ and $\phi_3$ or $\phi_1,\phi_2$ and
$\phi_3$ (as indicated by the presence of $\alpha_1,\gamma_1,\delta_1$).

When talking about an occurrence of a variable $p$ in a certain
partition $\phi_{i_1}\cdots\phi_{i_k}$, it is convenient to associate
to $p$ and the partition a \emph{labeling vector}
$(\eta_{i_1},\ldots,\eta_{i_k})$, representing the labels assigned to
$p$ by $L_{i_1},\ldots,L_{i_k}$ in configuration $i_1,\ldots,i_k$ (all
other labels are fixed). Strength of labeling vectors is compared
pointwise, extending the linear order $b\preceq ab\preceq a$ as
described earlier.
\begin{table}[t]
\vspace{-0.5cm}
\begin{minipage}{60mm}%
\centering%
\begin{tabular}{|l|c|c|c|c|}%
\hline 
\multicolumn{1}{|c|}{\multirow{2}{*}{$p$ in ?}} & \multicolumn{3}{c|}{Variable $class$, $label$}\\
\cline{2-4}
&$\phi_1 \mid \phi_2 \phi_3$  & $\phi_2 \mid \phi_1 \phi_3 $ & $\phi_3 \mid \phi_1 \phi_2$\\
\hline
$\phi_1$ & $A,a$ & $B,b$ & $B,b$\\
$\phi_2$ & $B,b$ & $A,a$ & $B,b$\\
$\phi_3$ & $B,b$ & $B,b$ & $A,a$\\
$\phi_1 \phi_2$ & $AB,\alpha_1$ & $AB,\alpha_2$ & $B,b$\\
$\phi_2 \phi_3$ & $B,b$ & $AB,\beta_2$ & $AB,\beta_3$\\
$\phi_1 \phi_3$ & $AB,\gamma_1$ & $B,b$ & $AB,\gamma_3$\\
$\phi_1 \phi_2 \phi_3$ & $AB,\delta_1$ & $AB,\delta_2$ & $AB,\delta_3$\\
\hline%
\end{tabular}
\caption{$3$-SA.}%
\label{tab:3sa}%
\end{minipage}
\begin{minipage}{60mm}%
\centering%
\begin{tabular}{|l|c|c|c|}%
\hline 
\multicolumn{1}{|c|}{\multirow{2}{*}{$p$ in ?}} & \multicolumn{3}{c|}{Variable $class$, $label$}\\
\cline{2-4}
&$\phi_1 \mid \phi_2 \phi_3$ & $\phi_2 \mid \phi_1 \phi_3$ & $\phi_1 \phi_2 | \phi_3$ \\
\hline
$\phi_1$ & $A,a$ & $B,b$ & $A,a$ \\
$\phi_2$ & $B,b$ & $A,a$ & $A,a$ \\
$\phi_3$ & $B,b$ & $B,b$ & $B,b$\\
$\phi_1 \phi_2$ & $AB,\alpha_1$ & $AB,\alpha_2$ & $A,a$ \\
$\phi_2 \phi_3$ & $B,b$ & $AB,\beta_2$ & $AB,\beta_3$ \\
$\phi_1 \phi_3$ & $AB,\gamma_1$ & $B,b$ & $AB,\gamma_3$ \\
$\phi_1 \phi_2 \phi_3$ & $AB,\delta_1$ & $AB,\delta_2$ & $AB,\delta_3$ \\
\hline%
\end{tabular}%
\caption{BGSA.}%
\label{tab:bgsa}%
\end{minipage}%
\vspace{-0.4in}
\end{table}
%


%
%
We reduce the problem of deciding whether a family $\mF=\{Itp_{L_1},\ldots,Itp_{L_n}\}$ has an
interpolation property $P$ to showing that all labeling vectors of 
$\{L_1,\ldots,L_n\}$ satisfy a certain set of \emph{labeling constraints}.
For simplicity of presentation, in the rest of the paper we assume
that all occurrences of a variable are labeled uniformly. The
extension to differently labeled occurrences is straightforward.\\

\vspace*{-2mm}
\paragraph{\textbf{Collectives of LISs Families}.}
\label{sec:interp_family_lab}
We derive in the following both \emph{necessary} and \emph{sufficient} conditions for the 
collectives to hold in the context of LISs families. 
The practical significance of our results is to identify which LISs satisfy which collectives. 
In particular, for the first time, we show that not all LISs identified by D'Silva et al. satisfy all collectives. 
This work provides an essential guide for using interpolant strength results when collectives are required (such as in Upgrade Checking). 

We proceed as follows. First, we identify necessary and sufficient
labeling constraints to characterize BGSA. Second, we extend them to
$n$-GSA and to $n$-SA. Third, we exploit the connections between BGSA and
$n$-GSA on one side, and $n$-STI and $T$-TI on the other
(Theorem~\ref{lem:st-and-gsa}, Lemma~\ref{lem:ti_to_gsa},
Lemma~\ref{lem:gsa_to_ti}) to derive the labeling
constraints both for $n$-STI and $T$-TI, thus completing the picture.\\
\paragraph{\textbf{BGSA}.}
Let $\Phi = \{\phi_1, \phi_2, \phi_3\}$ be an unsatisfiable
formula in CNF, and $\mF = \{Itp_{L_1},Itp_{L_2},Itp_{L_3}\}$ a
family of LISs.  We want to identify the
restrictions on the labeling vectors of $\{L_1,L_2,L_3\}$
for which $\mF$ has BGSA, i.e., $I_{\phi_1,L_1} \wedge
I_{\phi_2,L_2}\implies I_{\phi_1 \phi_2, L_3}$.
We define a set of \emph{BGSA constraints} $CC_{BGSA}$ on labelings as follows.
A family of labelings
  $\{L_1,L_2,L_3\}$ satisfies  $CC_{BGSA}$ iff:
{
\setlength{\abovedisplayskip}{5pt}
\setlength{\belowdisplayskip}{5pt}
\[(\alpha_1,\alpha_2), (\delta_1,\delta_2) \preceq \{(ab,ab),(b,a),(a,b)\}, 
 \beta_2 \preceq \beta_3, \gamma_1 \preceq \gamma_3, \delta_1 \preceq \delta_3 ,\delta_2 \preceq \delta_3\]
 }
hold for all variables,
where $\alpha_i$, $\beta_i$, $\gamma_i$ and $\delta_i$ are as shown in
Table~\ref{tab:bgsa}, the labeling table for $BGSA$.  
$*\preceq \{*_1, *_2\}$ denotes that $*\preceq *_1$ or $*\preceq *_2$ (both can be true).

We aim to prove that $CC_{BGSA}$ is  necessary and sufficient for a family of
LISs to have BGSA. On one hand, we claim that, if $\{L_1,L_2,L_3\}$ satisfies $CC_{BGSA}$, then $\{Itp_{L_1},Itp_{L_2},Itp_{L_3}\}$ has BGSA.
It is sufficient to prove the thesis for a set of \emph{restricted BGSA constraints} $CC_{BGSA}^*$, defined as follows:
{
\setlength{\abovedisplayskip}{5pt}
\setlength{\belowdisplayskip}{5pt}
\[
(\alpha_1,\alpha_2), (\delta_1,\delta_2) \in \{(ab,ab),(b,a),(a,b)\}, 
\beta_2 = \beta_3, \gamma_1 = \gamma_3, \delta_3 = \max\{\delta_1,\delta_2\}
 \]
}
\begin{lemma}
\label{lem:bgsa_suff}
If $\{L_1,L_2,L_3\}$ satisfies $CC_{BGSA}^*$, then $\{Itp_{L_1},Itp_{L_2},Itp_{L_3}\}$ has BGSA. 
\end{lemma}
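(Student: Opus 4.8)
The plan is to prove BGSA soundness by tracking, for each clause $C$ in the refutation $R$, the relationship between the three partial interpolants $I_{\phi_1,L_1}(C)$, $I_{\phi_2,L_2}(C)$, and $I_{\phi_1\phi_2,L_3}(C)$, and establishing an invariant that is preserved by the leaf rules and the resolution rule of the LIS in Fig.~\ref{tab:gen}. The goal $I_{\phi_1,L_1}\wedge I_{\phi_2,L_2}\implies I_{\phi_1\phi_2,L_3}$ is just the specialization of this invariant to the root clause $\bot$, so the real work is to find the right inductive statement and verify it clause by clause. Since by assumption we may restrict to $CC^*_{BGSA}$, the labeling vectors $(\alpha_1,\alpha_2)$ and $(\delta_1,\delta_2)$ lie in the finite set $\{(ab,ab),(b,a),(a,b)\}$, and $\beta_2=\beta_3$, $\gamma_1=\gamma_3$, $\delta_3=\max\{\delta_1,\delta_2\}$; this reduces the pivot analysis to a small finite case split.

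First I would set up the labeling table (Table~\ref{tab:bgsa}) to classify each variable occurrence. For each of the seven partition rows, the class and label of a variable under $L_1$, $L_2$, $L_3$ are fixed, so I can read off exactly how each of the three systems treats a given literal when it appears in a leaf clause or as a pivot. The key observation is that the shared partitions $\phi_1\phi_2$ (labels $\alpha_1,\alpha_2,a$), $\phi_1\phi_2\phi_3$ (labels $\delta_1,\delta_2,\delta_3$), $\phi_2\phi_3$ (labels $b,\beta_2,\beta_3$) and $\phi_1\phi_3$ (labels $\gamma_1,b,\gamma_3$) are precisely the places where the three labelings may disagree, and the constraints in $CC^*_{BGSA}$ are exactly what is needed to align them. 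I expect the natural invariant to be of the form that, for each clause $C$, the conjunction of the first two partial interpolants (suitably restricted to the pivot-relevant literals) implies the third, modulo the literals of $C$ that are ``charged'' to the $b$-side in the respective configurations.

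The main step is the inductive case for an inner resolution node on pivot $p$. Here I would split on which partition $p$ belongs to and, within that, on the joined labels $\alpha\sqcup\beta$ that each of the three systems $L_1,L_2,L_3$ assigns to $p$. This determines whether each system applies the $\vee$-rule, the $\wedge$-rule, or the $ab$-rule (introducing $p$ and $\n p$ as literals). The constraints $\delta_3=\max\{\delta_1,\delta_2\}$ and the membership $(\delta_1,\delta_2)\in\{(ab,ab),(b,a),(a,b)\}$ guarantee that whenever the third system takes the conjunctive or $ab$-branch, the combination of the branches taken by the first two systems is strong enough to entail it; the cases $(b,a)$ and $(a,b)$ encode the situation where exactly one of $L_1,L_2$ ``sees'' the variable on the $a$-side while the third splits the difference. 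The conditions $\beta_2=\beta_3$ and $\gamma_1=\gamma_3$ handle the partitions visible to only one of the two premises, ensuring the third system mirrors whichever premise is relevant. I expect the hardest part to be the $ab$-branch bookkeeping: when $L_3$ produces $(I^+\vee p)\wedge(I^-\vee\n p)$ I must show the matching $p$, $\n p$ literals produced (or absorbed) by $L_1$ and $L_2$ line up so that the implication survives resolution; this is where the precise shape of the admissible label vectors $\{(ab,ab),(b,a),(a,b)\}$ is essential, since any other combination would break the alignment. Once the invariant is verified for all leaf and inner cases, instantiating at the root $\bot$, where all three partial interpolants are the actual interpolants, yields BGSA and completes the lemma.
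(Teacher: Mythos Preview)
Your proposal is essentially the paper's proof: structural induction on the refutation, with the invariant $I_{\phi_1,L_1}(C)\wedge I_{\phi_2,L_2}(C)\implies I_{\phi_1\phi_2,L_3}(C)$ maintained at every clause $C$, verified at leaves by case-splitting on whether $C\in\phi_1,\phi_2,\phi_3$ and at inner nodes by case-splitting on the partition (and hence the label triple) of the pivot. One simplification relative to your sketch: the invariant is exactly the bare implication between the three partial interpolants, with no ``modulo the literals of $C$'' or extra restriction needed --- the LIS partial interpolants already absorb the unresolved literals, so the clean statement $I_1\wedge I_2\wedge\n{I_3}\implies\bot$ goes through directly at each node.
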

\vspace{-1mm}The $CC_{BGSA}^*$ constraints can be relaxed to $CC_{BGSA}$ as shown in \cite{RSS12} (Theorem~2, Lemma~3), 
due to the connection between partial order on labelings and LISs
and strength of the generated interpolants. 
For example, the constraint $\delta_3 = \max (\delta_1, \delta_2)$ can be relaxed
to $\delta_3 \succeq \delta_1$, $\delta_3 \succeq \delta_2$.
This leads to:
\begin{corollary}
\mbox{If $\{L_1,L_2,L_3\}$ satisfies $CC_{BGSA}$, then  $\{Itp_{L_1},Itp_{L_2},Itp_{L_3}\}$ has BGSA. }
\end{corollary}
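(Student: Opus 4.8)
The plan is to derive the Corollary from Lemma~\ref{lem:bgsa_suff} by a monotonicity argument, using the fact that $CC_{BGSA}$ is precisely the relaxation of the tight constraints $CC_{BGSA}^*$ obtained by replacing the equalities and membership conditions with the order $\preceq$, together with the strength ordering on LISs recalled earlier (if $L \preceq L'$ then $Itp_L \implies Itp_{L'}$). Given a family $\{L_1,L_2,L_3\}$ satisfying $CC_{BGSA}$, I would construct a companion family $\{L_1^*,L_2^*,L_3^*\}$ satisfying $CC_{BGSA}^*$ such that $L_1 \preceq L_1^*$, $L_2 \preceq L_2^*$ and $L_3^* \preceq L_3$. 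By monotonicity these orderings yield $I_{\phi_1,L_1}\implies I_{\phi_1,L_1^*}$, $I_{\phi_2,L_2}\implies I_{\phi_2,L_2^*}$ and $I_{\phi_1\phi_2,L_3^*}\implies I_{\phi_1\phi_2,L_3}$; chaining these with the BGSA implication $I_{\phi_1,L_1^*}\wedge I_{\phi_2,L_2^*}\implies I_{\phi_1\phi_2,L_3^*}$ granted by Lemma~\ref{lem:bgsa_suff} for the starred family gives $I_{\phi_1,L_1}\wedge I_{\phi_2,L_2}\implies I_{\phi_1\phi_2,L_3}$, i.e.\ BGSA.

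The construction proceeds partition by partition, which is sound because the four constraint groups of $CC_{BGSA}$ act on the disjoint variable partitions $\phi_1\phi_2$ (for $\alpha$), $\phi_2\phi_3$ (for $\beta$), $\phi_1\phi_3$ (for $\gamma$) and $\phi_1\phi_2\phi_3$ (for $\delta$), so the required modifications of $L_1,L_2,L_3$ never collide. For the $\alpha$ partition, the relaxed condition $(\alpha_1,\alpha_2)\preceq\{(ab,ab),(b,a),(a,b)\}$ lets me pick, for each variable, a witness in $\{(ab,ab),(b,a),(a,b)\}$ that dominates $(\alpha_1,\alpha_2)$ pointwise, so only $L_1,L_2$ are strengthened; a short case check over the six pairs of the downward closure confirms such a witness always exists. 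For the $\beta$ and $\gamma$ partitions I would instead lower $L_3$ only: setting $\beta_3^*:=\beta_2$ and $\gamma_3^*:=\gamma_1$ restores the equalities $\beta_2^*=\beta_3^*$ and $\gamma_1^*=\gamma_3^*$ while keeping $\beta_3^*\preceq\beta_3$ and $\gamma_3^*\preceq\gamma_3$, which hold by the hypotheses $\beta_2\preceq\beta_3$ and $\gamma_1\preceq\gamma_3$. One checks that these local edits are globally consistent with $L_1\preceq L_1^*$, $L_2\preceq L_2^*$ and $L_3^*\preceq L_3$ (each untouched label is left equal).

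The delicate partition is $\phi_1\phi_2\phi_3$, carrying the coupled triple $(\delta_1,\delta_2,\delta_3)$, and I expect this to be the main obstacle. Here the relaxation is twofold: $(\delta_1,\delta_2)$ must be pushed up into $\{(ab,ab),(b,a),(a,b)\}$ (strengthening $L_1,L_2$), while at the same time $\delta_3^*=\max\{\delta_1^*,\delta_2^*\}$ must satisfy $\delta_3^*\preceq\delta_3$ so that $L_3^*\preceq L_3$. These two moves pull in opposite directions, since raising $(\delta_1,\delta_2)$ can only increase the maximum that $\delta_3$ is required to dominate; in particular, keeping $\delta_1,\delta_2$ fixed leaves $\delta_3^*=\max\{\delta_1,\delta_2\}$, which may fall outside $\{(ab,ab),(b,a),(a,b)\}$ (e.g.\ when $\delta_1=\delta_2$), so that Lemma~\ref{lem:bgsa_suff} does not apply verbatim to the starred family. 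Reconciling the two relaxations is exactly the finer strength argument for $\delta_3=\max\{\delta_1,\delta_2\}\leadsto\delta_3\succeq\delta_1,\delta_3\succeq\delta_2$ established in~\cite{RSS12} (Theorem~2, Lemma~3), which I would invoke to discharge this case; the remaining bookkeeping over the partitions is routine, and the Corollary follows.
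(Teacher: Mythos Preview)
Your approach is essentially the same as the paper's: both derive the corollary from Lemma~\ref{lem:bgsa_suff} via the strength ordering on LISs, and both ultimately defer the non-trivial relaxation step to Theorem~2 and Lemma~3 of~\cite{RSS12}. The paper does not spell out the partition-by-partition construction you give for $\alpha,\beta,\gamma$; it simply states that the constraints can be relaxed by the monotonicity argument and cites~\cite{RSS12} for all of it, using the $\delta_3$ relaxation as the illustrative example.

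Your more explicit treatment is fine and your identification of the $\delta$ partition as the genuine obstruction is accurate: as you observe, when $\delta_1=\delta_2=\delta_3=b$ no witness in $\{(ab,ab),(b,a),(a,b)\}$ has maximum $\preceq b$, so a starred family satisfying $CC_{BGSA}^*$ with the desired orderings cannot exist, and a pure black-box reduction to Lemma~\ref{lem:bgsa_suff} is impossible. This is precisely why both you and the paper must invoke the finer argument of~\cite{RSS12} rather than a purely external monotonicity construction. One minor point: your use of ``strengthened'' for $L_1,L_2$ is potentially confusing, since raising labels in the $\preceq$ order \emph{weakens} the resulting interpolant; the implications you write are in the correct direction, so this is only terminological.
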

\vspace{-1mm}On the other hand, it holds that the satisfaction of the $CC_{BGSA}$ constraints is necessary for BGSA: 
\begin{lemma}
\label{lem:bgsa_nec}
 \mbox{If $\{Itp_{L_1},Itp_{L_2},Itp_{L_3}\}$ has BGSA, then $\{L_1,L_2,L_3\}$ satisfies $CC_{BGSA}$.}
\end{lemma}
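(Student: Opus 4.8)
The plan is to argue by contraposition: I assume that the family $\{L_1,L_2,L_3\}$ violates $CC_{BGSA}$ and construct a concrete unsatisfiable $\Phi=\{\phi_1,\phi_2,\phi_3\}$ in CNF together with a refutation $R$ for which $I_{\phi_1,L_1}\wedge I_{\phi_2,L_2}\not\Rightarrow I_{\phi_1\phi_2,L_3}$, contradicting BGSA. Because each conjunct of $CC_{BGSA}$ is quantified over all variables, a single offending variable $p$ lying in the partition governed by the violated conjunct already suffices; all other variables of $\Phi$ can be taken local and trivial, and $\phi_3$ is kept nonempty so that the test does not collapse into BSA.

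Concretely, I would split the argument according to the rows of Table~\ref{tab:bgsa} into six cases: (i) $(\alpha_1,\alpha_2)$ lies below none of $(ab,ab),(b,a),(a,b)$, i.e. $(\alpha_1,\alpha_2)\in\{(a,a),(a,ab),(ab,a)\}$, with $p$ shared between $\phi_1$ and $\phi_2$; (ii) the analogous failure of $(\delta_1,\delta_2)$, with $p$ shared among all three parts; and the four monotone failures (iii) $\beta_2\not\preceq\beta_3$, (iv) $\gamma_1\not\preceq\gamma_3$, (v) $\delta_1\not\preceq\delta_3$, (vi) $\delta_2\not\preceq\delta_3$, placing $p$ in $\phi_2\phi_3$, $\phi_1\phi_3$, and $\phi_1\phi_2\phi_3$ respectively. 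In each case $p$ is the pivot of a short refutation -- a single resolution flanked by unit clauses -- so that the three partial interpolants at $\bot$ can be read off directly from the leaf and inner-node rules of Fig.~\ref{tab:gen}.

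For the monotone cases (iii)--(vi) I expect the argument to be comparatively direct. Each such conjunct compares the label of one $AB$ variable across two configurations, one feeding an antecedent interpolant and one feeding the consequent; by the monotonicity of LISs with respect to $\preceq$ (a smaller label yields a logically stronger partial interpolant), a violation lets me choose the labels so that the occurrence of $p$ (or $\n{p}$) survives in the consequent but not in the antecedents. Evaluating the resulting formulas from Fig.~\ref{tab:gen} then exhibits a truth assignment that satisfies $I_{\phi_1,L_1}\wedge I_{\phi_2,L_2}$ while falsifying $I_{\phi_1\phi_2,L_3}$.

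The hard part will be the two coupled cases (i) and (ii). There the forbidden pair $(\alpha_1,\alpha_2)$ (resp. $(\delta_1,\delta_2)$) labels a single variable $p$ used in two different configurations, so the two entries influence the two separate antecedent interpolants $I_{\phi_1,L_1}$ and $I_{\phi_2,L_2}$ whose conjunction must be shown not to imply $I_{\phi_1\phi_2,L_3}$; a single resolution no longer isolates the effect. I would design the gadget so that resolving on $p$ forces $I_{\phi_1,L_1}$ and $I_{\phi_2,L_2}$ to carry, respectively, a literal of $p$ and of $\n{p}$ (via the $ab$-rule of Fig.~\ref{tab:gen}) whenever a label equals $ab$ or $a$, and then enumerate the three forbidden pairs to produce, in each, an assignment to $p$ and the common variables satisfying both antecedents yet falsifying the consequent. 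Case (ii) is the most delicate: since $p$ is shared with $\phi_3$ its configuration-$3$ label $\delta_3$ is itself free and fixed by $L_3$, so the construction must be shown to fail for every value of $\delta_3\in\{a,ab,b\}$, and I must verify throughout that $p$ genuinely has class $AB$ in configurations $1$ and $2$ (leaving $\alpha_1,\alpha_2$, resp. $\delta_1,\delta_2$, free) while the refutation remains a valid proof of unsatisfiability. This bookkeeping, rather than any single interpolant computation, is where the proof is most error-prone.
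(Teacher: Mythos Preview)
Your high-level plan---contraposition plus a case split on which conjunct of $CC_{BGSA}$ is violated---is exactly what the paper does, and your six cases line up with the paper's (it further reduces them by the obvious symmetries between $\alpha$ and itself swapped, between $\beta$ and $\gamma$, etc.). The gap is in the gadget.

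A single resolution on $p$ with unit antecedents does not give a counterexample in any of the cases. Take case~(i) with $(\alpha_1,\alpha_2)=(a,a)$, $\phi_1=\{p\}$, $\phi_2=\{\n{p}\}$, $\phi_3=\{s\}$. From Fig.~\ref{tab:gen} one gets $I_{\phi_1,L_1}=p$ and $I_{\phi_2,L_2}=\n{p}$, so $I_{\phi_1,L_1}\wedge I_{\phi_2,L_2}=\bot$, which trivially implies $I_{\phi_1\phi_2,L_3}$. The forbidden pairs $(a,ab)$ and $(ab,a)$ give the same collapse. Your stated intuition for the hard cases---force $I_{\phi_1}$ to carry $p$ and $I_{\phi_2}$ to carry $\n{p}$---is therefore exactly backwards: it guarantees the conjunction is $\bot$, whereas you need it to be \emph{satisfiable}. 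The monotone cases fare no better with unit clauses: for $(\beta_2,\beta_3)=(a,b)$ and $\phi_1=\{s\}$, $\phi_2=\{p\}$, $\phi_3=\{\n{p}\}$ one computes $I_{\phi_1,L_1}=\top$, $I_{\phi_2,L_2}=p$, $I_{\phi_1\phi_2,L_3}=p$, so the implication holds.

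The paper's fix is to introduce auxiliary shared variables that act as ``escape disjuncts''. For $(\alpha_1,\alpha_2)=(a,a)$ it takes $\phi_1=(p\vee\n{q})\wedge r$, $\phi_2=(\n{p}\vee\n{r})\wedge q$, $\phi_3=s$, with $p,q,r$ all in the $\phi_1\phi_2$ partition and a three-step refutation. Then $I_{\phi_1,L_1}=(p\wedge r)\vee\n{q}$, $I_{\phi_2,L_2}=(\n{p}\wedge q)\vee\n{r}$, $I_{\phi_1\phi_2,L_3}=\bot$, and the assignment $\n{q},\n{r}$ witnesses failure. The same three-variable template, permuted among $\phi_1,\phi_2,\phi_3$, handles the $\beta$ (and by symmetry $\gamma,\delta$) cases. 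So the missing ingredient is not the strategy but the realization that a one-pivot refutation is too small to separate $I_{\phi_1}\wedge I_{\phi_2}$ from $\bot$; you need enough extra shared structure to keep the antecedent conjunction satisfiable while the consequent is not.
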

\vspace{-1mm}Having proved that $CC_{BGSA}$ is both sufficient and necessary, we conclude:
\begin{theorem}
\label{theo:bgsa_nec_suff}
A family $\{Itp_{L_1},Itp_{L_2},Itp_{L_3}\}$ has  BGSA  if and only if $\{L_1,L_2,L_3\}$ satisfies 
$CC_{BGSA}$. 
\end{theorem}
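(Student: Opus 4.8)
The plan is to read Theorem~\ref{theo:bgsa_nec_suff} as the conjunction of the two implications that the preceding results already supply, and then to assemble them. Concretely, the biconditional splits into an ``if'' direction, $CC_{BGSA} \Rightarrow \text{BGSA}$, and an ``only if'' direction, $\text{BGSA} \Rightarrow CC_{BGSA}$; I would treat these separately and chain them to obtain the equivalence. Since BGSA is a semantic property of the family $\mF = \{Itp_{L_1},Itp_{L_2},Itp_{L_3}\}$ while $CC_{BGSA}$ is a purely syntactic condition on the labeling vectors of $\{L_1,L_2,L_3\}$, the content of the theorem is precisely that these two levels coincide.

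For the ``if'' direction I would invoke the Corollary to Lemma~\ref{lem:bgsa_suff}. That Corollary is built in two moves. First, Lemma~\ref{lem:bgsa_suff} discharges the restricted constraints $CC_{BGSA}^*$ by induction over the resolution refutation, verifying that at each leaf and each resolution step the rules of Fig.~\ref{tab:gen} combine the partial interpolants so that the BGSA implication $I_{\phi_1,L_1} \wedge I_{\phi_2,L_2} \Rightarrow I_{\phi_1\phi_2,L_3}$ is maintained as an invariant. Second, the equalities in $CC_{BGSA}^*$ (e.g. $\beta_2 = \beta_3$ and $\delta_3 = \max\{\delta_1,\delta_2\}$) are loosened to the inequalities of $CC_{BGSA}$ (e.g. $\beta_2 \preceq \beta_3$ and $\delta_3 \succeq \delta_1,\delta_2$) using monotonicity of interpolant strength in the order $\preceq$: relaxing the constraints can only strengthen the antecedent interpolants $I_{\phi_1,L_1},I_{\phi_2,L_2}$ or weaken the consequent $I_{\phi_1\phi_2,L_3}$, and either change makes the implication easier to satisfy, exactly as established in \cite{RSS12}.

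For the ``only if'' direction I would invoke Lemma~\ref{lem:bgsa_nec}, which gives the converse. Combining the two directions immediately yields the biconditional, so no extra work is needed to close the theorem itself. I expect the genuine difficulty to reside entirely in the underlying lemmas rather than in their combination, and of the two I anticipate necessity (Lemma~\ref{lem:bgsa_nec}) to be the main obstacle: sufficiency is a structural induction guided by the interpolation rules, whereas necessity must show that every way of violating a constraint of $CC_{BGSA}$ is realizable. That requires constructing, for each partition ($\phi_1\phi_2$, $\phi_2\phi_3$, $\phi_1\phi_3$, $\phi_1\phi_2\phi_3$) and each forbidden labeling vector, a concrete unsatisfiable $\Phi=\{\phi_1,\phi_2,\phi_3\}$ and a refutation whose induced interpolants falsify $I_{\phi_1,L_1}\wedge I_{\phi_2,L_2}\Rightarrow I_{\phi_1\phi_2,L_3}$, and checking that the witness is clean enough that no other constraint interferes.
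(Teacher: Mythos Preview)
Your proposal is correct and matches the paper's own approach exactly: the paper presents Theorem~\ref{theo:bgsa_nec_suff} as the immediate conjunction of the Corollary to Lemma~\ref{lem:bgsa_suff} (sufficiency of $CC_{BGSA}$) and Lemma~\ref{lem:bgsa_nec} (necessity), with no additional argument. Your description of how each supporting lemma works, including the structural induction for sufficiency, the relaxation from $CC_{BGSA}^*$ to $CC_{BGSA}$ via the strength order $\preceq$, and the counterexample construction for necessity, is accurate.
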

%
%
%
\paragraph{\textbf{n-GSA}.}
%
%
After addressing the binary case, we move to defining necessary and sufficient conditions for $n$-GSA.
A family of LISs $\{Itp_{L_1},\ldots,Itp_{L_{n+1}}\}$ has $n$-GSA if, for any $\Phi=\{\phi_1,\ldots,\phi_{n+1}\}$,
${I_{\Phi_1,L_1} \wedge \cdots \wedge I_{\phi_{n},L_{n}} \implies I_{\phi_1 \ldots \phi_n, L_{n+1}}}$, provided $\Phi$ is inconsistent.
As we defined a set of labeling constraints for BGSA, we now introduce \emph{n-GSA constraints} ($CC_{nGSA}$) on a family of labelings $\{L_1,\ldots,L_{n+1}\}$;
for every variable with labeling vector $(\alpha_{i_1},\ldots,\alpha_{i_{k+1}})$, $1 \leq k \leq n$,
letting  $ m = i_{k+1}$ if $i_{k+1} \neq n+1$,  $m = i_{k}$ otherwise:
{
\setlength{\abovedisplayskip}{5pt}
\setlength{\belowdisplayskip}{5pt}
 \begin{align*}
  & \text{(1)} \quad (\exists j \in \{i_1,\ldots,i_m\} \: \alpha_j=a) \implies
  (\forall h \in \{i_1,\ldots,i_m\} \: h \neq j \implies  \alpha_h=b)\\
 & \text{(2)} \quad \text{Moreover, if } i_{k+1} = n+1:
 \forall j \in  \{i_1,\ldots,i_{k}\}, \alpha_j \preceq \alpha_{i_{k+1}}
 \end{align*}
}
That is, if a variable is not shared with $\phi_{n+1}$, then, if
one of the labels is $a$, all the others must be $b$; if the variable
is shared with $\phi_{n+1}$, condition $(1)$ still holds for
$(\alpha_{i_1},\ldots,\alpha_{i_{k-1}})$, and all these
labels must be stronger or equal than $\alpha_{i_{k+1}}= \alpha_{{n+1}}$.
We  can prove that these constraints are necessary and sufficient for a family of LIS to have $n$-GSA:
%
%
%
\begin{theorem}
\label{theo:ngsa_nec_suff}
A family $\mF = \{Itp_{L_1},\ldots,Itp_{L_{n+1}}\}$ has $n$-GSA if and only if $\{L_1,\ldots,L_{n+1}\}$ satisfies 
$CC_{nGSA}$. 
\end{theorem}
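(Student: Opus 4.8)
The plan is to prove both directions by induction on $n$, taking the already-established binary characterization (Theorem~\ref{theo:bgsa_nec_suff}) as the base case $n=2$, where $CC_{nGSA}$ specializes exactly to $CC_{BGSA}$: vectors of length two recover the $\beta,\gamma$ inequalities of Table~\ref{tab:bgsa}, the length-three vectors (variables shared among all of $\phi_1,\phi_2,\phi_3$) recover the $\delta$-constraints, and condition $(1)$ gives the ``not both $a$'' requirement on the $\alpha$- and $\delta$-pairs. The inductive engine, in the spirit of the iterated-BGSA argument used for single systems in Theorem~\ref{thm:single-prop-collapse}, is to merge the first two partitions.

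For sufficiency, assume $\{L_1,\ldots,L_{n+1}\}$ satisfies $CC_{nGSA}$. I would regroup $\Phi$ as the coarser instance $\{\phi_1,\phi_2,\phi_3\cdots\phi_{n+1}\}$ and introduce an intermediate labeling $M$ for the partition $\phi_1\phi_2 \mid \phi_3\cdots\phi_{n+1}$, defined by $M(p)=\max(L_1(p),L_2(p))$ on every variable that is $AB$ in this coarse split (variables internal to $\phi_1\phi_2$ being forced to label $a$). First I would check that $\{L_1,L_2,M\}$ satisfies $CC_{BGSA}$ for this three-way split: the ``not both $a$'' part on the $\alpha$- and $\delta$-vectors is precisely condition $(1)$ of $CC_{nGSA}$, while the $\beta$-, $\gamma$- and $\delta$-inequalities hold by construction since $M$ is the pointwise maximum. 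Theorem~\ref{theo:bgsa_nec_suff} then yields $I_{\phi_1,L_1}\wedge I_{\phi_2,L_2}\implies I_{\phi_1\phi_2,M}$. Second, I would verify that the shortened family $\{M,L_3,\ldots,L_{n+1}\}$ satisfies $CC_{(n-1)GSA}$ on the merged instance of $n$ formulas: a merged vector satisfies condition $(1)$ because $\max(L_1,L_2)=a$ forces, via the original condition $(1)$, all remaining non-$\phi_{n+1}$ labels to $b$; and it satisfies condition $(2)$ because $L_1,L_2\preceq L_{n+1}$ gives $\max(L_1,L_2)\preceq L_{n+1}$. The induction hypothesis supplies $I_{\phi_1\phi_2,M}\wedge I_{\phi_3,L_3}\wedge\cdots\wedge I_{\phi_n,L_n}\implies I_{\phi_1\cdots\phi_n,L_{n+1}}$, and chaining the two implications establishes $n$-GSA.

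For necessity I would argue by contraposition: if $\{L_1,\ldots,L_{n+1}\}$ violates $CC_{nGSA}$ for some variable $p$ with vector $(\alpha_{i_1},\ldots,\alpha_{i_{k+1}})$, I exhibit an inconsistent $\Phi$ and a refutation $R$ on which the $n$-GSA implication fails. The key observation is that every individual $CC_{nGSA}$ clause concerns at most three partitions: a violation of $(1)$ involves two non-$\phi_{n+1}$ configurations with incompatible labels, and a violation of $(2)$ involves one configuration $j$ and the $\phi_{n+1}$-configuration with $\alpha_j\not\preceq\alpha_{n+1}$. I would therefore build a witness in which only the two or three relevant $\phi$'s actually contain $p$, the remaining partitions being arranged so that their interpolants are $\top$ and drop out of the conjunction; collapsing the untouched partitions turns the instance into a three-block BGSA instance whose labeling vector for $p$ is exactly the offending sub-vector, hence violating $CC_{BGSA}$. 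By the necessity direction of Theorem~\ref{theo:bgsa_nec_suff} (Lemma~\ref{lem:bgsa_nec}) this collapsed BGSA fails, so $n$-GSA fails as well.

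I expect the two obstacles to be: (i) in sufficiency, the exhaustive case analysis over the possible partition-memberships of a variable (internal to a single $\phi_i$, shared between $\phi_1$ and $\phi_2$, shared with the merged ``rest'' block, or shared with $\phi_{n+1}$) required to confirm that $M=\max(L_1,L_2)$ simultaneously discharges $CC_{BGSA}$ at the merge step and $CC_{(n-1)GSA}$ at the recursive step; and (ii) in necessity, constructing the witness CNF and refutation so that $p$ realizes exactly the intended class and labels while the collapsed partitions genuinely contribute $\top$, i.e.\ faithfully embedding the BGSA counterexample of Lemma~\ref{lem:bgsa_nec} inside an $(n+1)$-partition instance.
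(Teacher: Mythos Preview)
Your proposal is correct, but the sufficiency direction takes a genuinely different route from the paper.

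For sufficiency, the paper's Lemma~\ref{lem:ngsa_suff} does \emph{not} induct on $n$; it repeats the structural-induction-on-the-refutation argument of Lemma~\ref{lem:bgsa_suff} for all $n{+}1$ partial interpolants simultaneously, pushing the conjunction $I_{\phi_1,L_1}(C)\wedge\cdots\wedge I_{\phi_n,L_n}(C)\wedge\overline{I_{\phi_1\cdots\phi_n,L_{n+1}}(C)}$ through leaves and resolution steps and using $CC_{nGSA}$ to discharge every case. Your approach instead treats BGSA as a black box and inducts on $n$ via the merged labeling $M=\max(L_1,L_2)$, chaining $I_{\phi_1,L_1}\wedge I_{\phi_2,L_2}\Rightarrow I_{\phi_1\phi_2,M}$ (from Theorem~\ref{theo:bgsa_nec_suff}) with the $(n{-}1)$-GSA inductive hypothesis on $\{M,L_3,\ldots,L_{n+1}\}$. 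This is more modular and avoids re-opening the partial-interpolant machinery; the price, exactly as you flagged in obstacle (i), is the partition-by-partition verification that $M$ simultaneously satisfies $CC_{BGSA}$ at the merge step and $CC_{(n-1)GSA}$ at the recursive step. That verification goes through (the key point being that if $M(p)=a$ then some $L_i(p)=a$, so condition (1) of $CC_{nGSA}$ forces all other non-$(n{+}1)$ labels to $b$; and $L_1,L_2\preceq L_{n+1}$ gives $M\preceq L_{n+1}$ for condition (2)). One subtlety worth making explicit: you are really using the \emph{pointwise} form of the sufficiency statement (fixed $\Phi$, fixed refutation), since the merged family $\{M,L_3,\ldots,L_{n+1}\}$ is only checked against $CC_{(n-1)GSA}$ for the specific merged instance, not universally.

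For necessity, your reduction is essentially the paper's Lemma~\ref{lem:ngsa_nec} in different clothing. The paper organizes it as strong induction on $n$: the subfamily property (Theorem~\ref{lem:gsa_sub}) plus the induction hypothesis handle all shorter labeling vectors, leaving only the full-length vectors $(\alpha_1,\ldots,\alpha_n)$ and $(\beta_1,\ldots,\beta_{n+1})$; for those, it pads the BGSA counterexample of Lemma~\ref{lem:bgsa_nec} with copies of the unit clause $p$ in the unused slots so that (a) $p$ lands in the right partition and (b) the unused $\phi_i$ contribute no clauses to the fixed refutation, hence their interpolants are $\top$. Your ``each constraint involves at most three configurations'' observation, together with collapsing the untouched partitions so they yield $\top$, is the same mechanism; the paper's padding-with-$p$ trick is precisely how you realize obstacle (ii) concretely.
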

 In \cite{RSS12} (see Setting~1) it is proved that $n$-SA holds for any
family of LISs stronger than Pudl\'ak. Theorem~\ref{theo:ngsa_nec_suff} is
strictly more general, since it allows for tuples of labels (e.g.,
$(\alpha_1,\alpha_2)=(a,b)$ or $(\delta_1,\delta_3,\delta_2) =
(a,b,b)$) that were not considered in~\cite{RSS12}.
The constraints for $n$-SA follow as a special case of $CC_{nGSA}$:
\begin{corollary}
\label{theo:nsa_nec_suff}
A family $\mF = \{Itp_{L_1},\ldots,Itp_{L_{n}}\}$ has $n$-SA if and only if $\{L_1,\ldots,L_{n}\}$ satisfies 
the following constraints: for every variable with labeling vector $(\alpha_{i_1},\ldots,\alpha_{i_{k}})$, for ${2\leq k\leq n}$: 
  ${(\exists j \in \{i_1,\ldots,i_k\} \, \alpha_j=a) \implies (\forall h \in \{i_1,\ldots,i_k\} \, h \neq j \implies  \alpha_h=b)}$.
\end{corollary}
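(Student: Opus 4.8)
The plan is to obtain the $n$-SA characterization as the degenerate instance of the $n$-GSA characterization of Theorem~\ref{theo:ngsa_nec_suff}, taking $\phi_{n+1} = \top$. Recall from \S\ref{sec:int_prop} that $n$-SA is precisely $n$-GSA with an empty last partition: by the convention $Itp_S(\phi \mid \top) = \bot$ we have $I_{\phi_1\cdots\phi_n,L_{n+1}} = \bot$, so the $n$-GSA implication $\bigwedge_{i=1}^n I_{\phi_i,L_i} \implies I_{\phi_1\cdots\phi_n,L_{n+1}}$ collapses to exactly the $n$-SA implication $\bigwedge_{i=1}^n I_{\phi_i,L_i} \implies \bot$, and the system $Itp_{L_{n+1}}$ plays no role. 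The key structural observation is that $\top$ contains no atoms, so in any instance $\Phi' = \{\phi_1,\ldots,\phi_n,\top\}$ no variable is ever shared with $\phi_{n+1}$; consequently every labeling vector that can arise has all of its indices in $\{1,\ldots,n\}$, i.e.\ the case $i_{k+1}=n+1$ of $CC_{nGSA}$ never occurs, condition $(2)$ is vacuous, and condition $(1)$ with $m=i_{k+1}$ specializes to precisely the constraint stated in the corollary.

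For the sufficiency direction I would argue at the level of a single refutation, exactly as $CC_{nGSA}$ sufficiency is established (ultimately via Lemma~\ref{lem:bgsa_suff} and its $n$-GSA extension, which inspect only the labeling vectors actually occurring in the given refutation). Given an inconsistent $\Phi = \{\phi_1,\ldots,\phi_n\}$ whose labelings $\{L_1,\ldots,L_n\}$ satisfy the stated constraint, I form $\Phi' = \{\phi_1,\ldots,\phi_n,\top\}$ and pick an arbitrary $L_{n+1}$. Since no variable is shared with $\top$, every labeling vector appearing in a refutation of $\Phi'$ avoids the index $n+1$, so only condition $(1)$ is exercised and it holds by hypothesis. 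The sufficiency half of Theorem~\ref{theo:ngsa_nec_suff} then yields $\bigwedge_{i=1}^n I_{\phi_i,L_i}\implies I_{\phi_1\cdots\phi_n,L_{n+1}} = \bot$, which is $n$-SA.

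For the necessity direction I would take the contrapositive. If $\{L_1,\ldots,L_n\}$ violates the stated constraint, then some realizable labeling vector $(\alpha_{i_1},\ldots,\alpha_{i_k})$ with $2\le k\le n$ and all indices in $\{1,\ldots,n\}$ carries an entry $\alpha_j=a$ together with another non-$b$ entry $\alpha_h$ ($h\neq j$). The necessity half of Theorem~\ref{theo:ngsa_nec_suff} (mirroring Lemma~\ref{lem:bgsa_nec}) produces, for such a violating vector, a concrete inconsistent formula and refutation on which the $n$-GSA implication fails. Because the offending vector lives entirely inside $\{\phi_{i_1},\ldots,\phi_{i_k}\}\subseteq\{\phi_1,\ldots,\phi_n\}$, this counterexample can be realized with the last partition left empty, $\phi_{n+1}=\top$, so it is already an $n$-SA counterexample, $\bigwedge_{i=1}^n I_{\phi_i,L_i}\not\Rightarrow\bot$, contradicting $n$-SA.

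The step I expect to require the most care is confirming that the necessity construction for a bad vector avoiding index $n+1$ genuinely needs no nonempty $\phi_{n+1}$ --- i.e.\ that reusing the $CC_{nGSA}$ counterexample with $\phi_{n+1}=\top$ simultaneously preserves inconsistency of $\Phi$ and the failure of the implication. Once this localization is verified, the two directions combine into the claimed equivalence, and the corollary follows as the announced special case of Theorem~\ref{theo:ngsa_nec_suff}.
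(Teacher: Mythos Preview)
Your proposal is correct and follows exactly the route the paper intends: the paper presents Corollary~\ref{theo:nsa_nec_suff} without proof, simply announcing that ``the constraints for $n$-SA follow as a special case of $CC_{nGSA}$,'' and your argument spells out precisely why the specialization $\phi_{n+1}=\top$ collapses condition~(2) and reduces condition~(1) to the stated clause. Your flagged step---that the necessity counterexample from Lemma~\ref{lem:ngsa_nec}/Lemma~\ref{lem:bgsa_nec} for a violating vector avoiding index $n+1$ can indeed be taken with an empty last partition---is real but benign: inspecting the concrete witnesses in the proof of Lemma~\ref{lem:bgsa_nec} for the $(\alpha_1,\alpha_2)$-type violations shows $\phi_3$ contributes no clause to the refutation and can be replaced by $\top$.
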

Moreover, a family that has $(n+1)$-SA also has $n$-GSA if the last member of the family is Pudl\'ak's system. In fact,
from Proposition~\ref{prop:sa_to_bgsa} and 
  Pudl\'ak's system being symmetric (as shown in~\cite{H95}), it follows that
  \emph{if a family $\{Itp_{L_1},\ldots,Itp_{L_{n}},Itp_{P}\}$ has $(n+1)$-SA, then it has $n$-GSA}.




After investigating $n$-GSA and $n$-SA, we address two
questions which were left open in \S\ref{sec:interp_family}: do $n$-SA and $n$-PI 
imply $n$-STI? 
Is the requirement of additional interpolation systems
   necessary to obtain $T$-TI from $n$-GSA?
We show here that $n$-SA and $n$-PI do not necessarily imply $n$-STI, and that,
for LISs, $n$-GSA and $T$-TI are equivalent.

\vspace*{2mm}
\paragraph{\textbf{n-STI}.}
Theorem~\ref{lem:st_to_pi_sa} shows that
if a family has $n$-STI, then it has both $n$-SA and
$n$-PI. We prove that the converse is not necessarily true.
First, it is not difficult to show that any family  $\{Itp_{L_0},Itp_{L_1},Itp_{L_2}\}$ has $2$-PI (Proposition~\ref{prop:2pi} in the appendix);
a second result is that:
\setcounter{lemma}{4}
\vspace{-1mm}\begin{lemma}
  There exists a family $\{Itp_{L_0},Itp_{L_1},Itp_{L_2}\}$ that has
  $2$-PI and a family $\{Itp_{L'_1},Itp_{L'_2}\}$ that has $2$-SA, but the family 
  $\{Itp_{L_0},Itp_{L_1},Itp_{L_2},Itp_{L'_1},Itp_{L'_2}\}$ does not
  have $2$-STI.
\end{lemma}
\vspace{-1mm}
We obtain the main result applying the STI sub-family property (Theorem~\ref{lem:sti_sub}):
\vspace{-1mm}
\begin{theorem}
\label{thm:sa_pi_not_sti}
  There exists a family $\{Itp_{S_0},\ldots,Itp_{S_n}\}$ that has
  $n$-PI, and a family $\{Itp_{T_1},\ldots,Itp_{T_n}\}$ that has
  $n$-SA, but the family
  $\{Itp_{S_0},\ldots,Itp_{S_n}\} \cup$\lb $\{Itp_{T_1},\ldots,Itp_{T_n}\}$ does not
  have $n$-STI.
\end{theorem}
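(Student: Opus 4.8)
The plan is to lift the $n=2$ counterexample of the preceding lemma (Lemma~5) to arbitrary $n$ by means of the STI sub-family property (Theorem~\ref{lem:sti_sub}); the interesting case is $n \geq 2$, the case $n=2$ being Lemma~5 itself. Recall that Lemma~5 produces a PI family $\{Itp_{L_0},Itp_{L_1},Itp_{L_2}\}$ (with $2$-PI), an SA family $\{Itp_{L'_1},Itp_{L'_2}\}$ (with $2$-SA), and an inconsistent formula on which the union $\{Itp_{L_0},Itp_{L_1},Itp_{L_2},Itp_{L'_1},Itp_{L'_2}\}$ violates $2$-STI. The idea is to place this witness in the first coordinates of the size-$n$ families and to pad the remaining coordinates so that the \emph{global} $n$-PI and $n$-SA properties are preserved; the sub-family property then transports the $2$-STI failure to the full family.

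Concretely, I would set $Itp_{S_0},Itp_{S_1},Itp_{S_2} := Itp_{L_0},Itp_{L_1},Itp_{L_2}$ and $Itp_{T_1},Itp_{T_2} := Itp_{L'_1},Itp_{L'_2}$, and then choose the padding systems $Itp_{S_3},\ldots,Itp_{S_n}$ and $Itp_{T_3},\ldots,Itp_{T_n}$. For the S-family, the cleanest option is to take all of $Itp_{S_0},\ldots,Itp_{S_n}$ equal to a single fixed LIS (e.g. McMillan's): since every LIS, used as a single system, enjoys path interpolation (\S\ref{sec:lis}), the resulting uniform family has $n$-PI, and one checks that Lemma~5's witness can be realized with $L_0=L_1=L_2$ equal to that same system. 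For the T-family, I would pad with systems that never assign the label $a$ to a shared variable (e.g. Pudl\'ak's), so that by Corollary~\ref{theo:nsa_nec_suff} (equivalently, by the result of~\cite{RSS12} that any family stronger than Pudl\'ak enjoys $n$-SA) the family $\{Itp_{T_1},\ldots,Itp_{T_n}\}$ has $n$-SA; this is compatible with taking Lemma~5's SA labelings also free of $a$ on shared variables.

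With this construction the $k=2$ prefix sub-family $\{Itp_{S_0},Itp_{S_1},Itp_{S_2}\}\cup\{Itp_{T_1},Itp_{T_2}\}$ is exactly the union of Lemma~5, hence fails $2$-STI. Applying the contrapositive of Theorem~\ref{lem:sti_sub} with $k=2$ (legitimate since $2\le n$), the full family $\{Itp_{S_0},\ldots,Itp_{S_n}\}\cup\{Itp_{T_1},\ldots,Itp_{T_n}\}$ does not have $n$-STI, while $\{Itp_{S_0},\ldots,Itp_{S_n}\}$ has $n$-PI and $\{Itp_{T_1},\ldots,Itp_{T_n}\}$ has $n$-SA, which is the claim.

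The main obstacle is the padding step: $n$-PI and $n$-SA are global requirements that must hold for \emph{every} inconsistent $\Phi$ of the appropriate arity, so one must ensure that inserting Lemma~5's witness into the first coordinates does not destroy them, in particular at the junction between the witness coordinates and the padded ones. Choosing the witness systems uniformly (McMillan throughout the S-family, Pudl\'ak throughout the T-family) dissolves this difficulty, since a uniform S-family reduces $n$-PI to single-system path interpolation, and a label-$a$-free T-family satisfies the $n$-SA constraints of Corollary~\ref{theo:nsa_nec_suff} vacuously. The only remaining point to confirm is that this uniform realization still breaks $2$-STI, which it does because a Pudl\'ak partial interpolant need not imply the corresponding McMillan one, so the $i=0$ instance $I_{\phi_1,T_1}\implies I_{\phi_1,S_1}$ already fails.
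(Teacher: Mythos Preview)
Your proposal is correct and follows the same route as the paper: reduce to the $n=2$ counterexample of Lemma~5 and invoke the STI sub-family property (Theorem~\ref{lem:sti_sub}) to conclude failure of $n$-STI. The paper's proof is a single sentence that leaves the extension of the $n=2$ families to size $n$ entirely implicit; you make this explicit by taking the uniform choice $S_0=\cdots=S_n=Itp_M$ and $T_1=\cdots=T_n=Itp_P$, which is a clean instantiation: $n$-PI then follows from Theorem~\ref{theo:lis_pi}, and $n$-SA from Corollary~\ref{theo:nsa_nec_suff} with the $a$-free labelings.

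One small remark on your closing paragraph: the direct argument that the $i=0$ condition $I_{\phi_1,P}\Rightarrow I_{\phi_1,M}$ ``already fails'' is morally right but informal without a concrete witness $\Phi$ and refutation. You do not actually need it: your earlier observation that Lemma~5's proof goes through with $L_0=L_1=L_2=M$ and $L'_1=L'_2=P$ is the rigorous route, since $\{M,P,M\}$ violates the necessary BGSA constraint $\beta_2\preceq\beta_3$ (here $ab\not\preceq b$), so by Theorem~\ref{theo:bgsa_nec_suff} it lacks BGSA, and by Theorem~\ref{lem:st-and-gsa} the $k=2$ sub-family lacks $2$-STI. That is exactly the paper's intended mechanism.
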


\paragraph{\textbf{T-TI}.} The last collective to be studied is $T$-TI.
Theorem~\ref{lem:gsa_to_ti} shows how $T$-TI can be obtained by multiple applications of GSA at the level
of each parent and its children, provided that we can find an appropriate labeling to generate an
interpolant for the parent. We prove here that, in the case of LISs, this requirement is not needed, and derive
explicit constraints on labelings for $T$-TI.

Let us define \emph{$n$-GSA strengthening} any property derived from $n$-GSA by not abstracting any of the subformulae $\phi_i$, for example
$I_{\phi_{1},L_{1}} \wedge \ldots \wedge I_{\phi_{n-1},L_{n-1}} 
\wedge \phi_{n} \implies I_{\phi_{1}\ldots \phi_n,L_{n+1}}$; it can be proved that:
\vspace{-1mm}\begin{lemma}
\label{lem:ngsa_str}
The set of labeling constraints of any $n$-GSA strengthening is a subset of constraints of $n$-GSA.
\end{lemma}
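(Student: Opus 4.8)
The plan is to prove something slightly sharper than the statement, namely that the $n$-GSA strengthening leaving a set of formulae $\{\phi_j : j \in U\}$ unabstracted is characterized exactly by the sub-collection of $CC_{nGSA}$ obtained by deleting every elementary condition that mentions a labeling $L_j$ with $j \in U$. This makes the subset claim syntactic: recalling Theorem~\ref{theo:ngsa_nec_suff}, condition (1) of $CC_{nGSA}$ is equivalent to the conjunction, over ordered pairs $j \neq h$, of $\alpha_j = a \implies \alpha_h = b$, and condition (2) to the conjunction, over $j$, of $\alpha_j \preceq \alpha_{i_{k+1}}$. Dropping the coordinates indexed by $U$ therefore removes entire conjuncts and leaves a literal sub-list of the $CC_{nGSA}$ conjuncts. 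So it suffices to (a) show this $U$-free sub-collection is sufficient for the strengthening, and (b) that it is also necessary.

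For sufficiency, which is the clean part, I would start from a family $\{L_i : i \notin U\} \cup \{L_{n+1}\}$ satisfying the $U$-free conjuncts and \emph{complete} it by assigning, for each $j \in U$, the labeling $L_j$ that labels every shared variable with $b$. A short check shows the completed family satisfies all of $CC_{nGSA}$: every conjunct not involving $U$ holds by assumption, while each conjunct mentioning some $\alpha_j$ with $j \in U$ is discharged because $\alpha_j = b$ makes the antecedent $\alpha_j = a$ false and satisfies $b \preceq \alpha_{i_{k+1}}$ as well as any $\alpha_h = a \implies \alpha_j = b$. By Theorem~\ref{theo:ngsa_nec_suff} the completed family has $n$-GSA. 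Since $\phi_j \implies I_{\phi_j,L_j}$ for every $j$ (Craig interpolation property), the strengthening's premise (which carries $\phi_j$ in place of $I_{\phi_j,L_j}$) implies the $n$-GSA premise, so the $n$-GSA implication entails the strengthening implication. This already yields the subset relation, since the strengthening's minimal characterizing set is then contained in a subset of $CC_{nGSA}$.

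To pin down the exact constraint set — which is what is actually reused later to characterize $T$-TI — I would also establish necessity, adapting the necessity argument behind Theorem~\ref{theo:ngsa_nec_suff} (and its binary base case, Lemma~\ref{lem:bgsa_nec}): for each violated $U$-free conjunct I would build a small inconsistent CNF together with a refutation on which the strengthening implication fails. These are essentially the same witnesses used for $n$-GSA, restricted to variables that are not shared with any $\phi_j$, $j \in U$; having the full $\phi_j$ available in the premise is harmless there because those variables simply do not occur in $\phi_j$.

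The hard part will be the necessity bookkeeping. I must verify that leaving $\phi_j$ unabstracted — which injects the entire clause set of $\phi_j$, including its \emph{local} variables, into the premise — neither accidentally validates a violated $U$-free conjunct nor forces any condition lying outside $CC_{nGSA}$. The resolution is that the local variables of $\phi_j$ fall outside the common language of $I_{\phi_1\cdots\phi_n,L_{n+1}}$ and are resolved away, so only the shared-variable conditions survive; I would then check, tracing partial interpolants through the refutation in the style of Lemma~\ref{lem:bgsa_suff} and Lemma~\ref{lem:bgsa_nec}, that the per-variable accounting for shared variables is identical to the $n$-GSA case once the $L_j$ coordinate is deleted. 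Carrying this computation through the proof tree is the main technical burden, but it introduces no new phenomena beyond those already handled for $CC_{nGSA}$.
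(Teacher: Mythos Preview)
Your argument is correct, but it takes a genuinely different route from the paper. The paper's proof is a two-line syntactic observation: fixing w.l.o.g.\ $\phi_1$ as the unabstracted formula, partitions not involving $\phi_1$ keep the same labeling vector and hence the same $CC_{nGSA}$ conditions, while partitions involving $\phi_1$ simply lose the $\alpha_1$ coordinate, and inspecting the shape of conditions (1)--(2) one sees that dropping a coordinate can only delete conjuncts. No completion, no re-running of the induction, no separate necessity argument.

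Your approach is more semantic and more explicit. For sufficiency you complete the family with $L_j \equiv b$ (McMillan) for $j \in U$, invoke Theorem~\ref{theo:ngsa_nec_suff} to get $n$-GSA for the completed family, and then use $\phi_j \Rightarrow I_{\phi_j,L_j}$ to pass to the strengthening. This is exactly the mechanism the paper isolates \emph{later} as Theorem~\ref{thm:ngsa_str_ext}; you are effectively folding that result into the proof of the present lemma. Your necessity half, adapting the counterexamples of Lemma~\ref{lem:bgsa_nec}/\ref{lem:ngsa_nec}, is more than the paper bothers to spell out here, but it is what is actually needed downstream for the $T$-TI characterization, so making it explicit is a gain. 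The trade-off: the paper's argument is shorter and keeps the extension result separate; yours is more self-contained, handles an arbitrary set $U$ uniformly, and makes the ``exact constraint set'' claim precise rather than leaving it to inspection.
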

From Theorem~\ref{lem:gsa_to_ti} and Lemma~\ref{lem:ngsa_str}, it follows that:
\vspace{-1mm}\begin{lemma}
Given a  tree $T = (V,E)$ a family $\{Itp_{S_i}\}_{i \in V}$ has $T$-TI
 if, for every parent $i_{k+1}$ and its children $i_1,\ldots,i_{k}$, the family of labelings of the $(k+1)$-GSA
 strengthening obtained by non abstracting the parent satisfies the correspondent subset of $(k+1)$-GSA constraints. 
\end{lemma}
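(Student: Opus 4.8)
The plan is to derive the statement directly from Theorem~\ref{lem:gsa_to_ti}, eliminating its existential requirement of an auxiliary parent system by means of the constraint analysis provided by Lemma~\ref{lem:ngsa_str}. The key observation is that, for a fixed node $i_{k+1}$ with children $i_1,\ldots,i_k$, the local tree-interpolation obligation
\[ I_{F_{i_1},S_{i_1}} \wedge \cdots \wedge I_{F_{i_k},S_{i_k}} \wedge \phi_{i_{k+1}} \implies I_{F_{i_{k+1}},S_{i_{k+1}}} \]
is syntactically an instance of a $(k+1)$-GSA strengthening, in which the $k$ children subtrees $F_{i_1},\ldots,F_{i_k}$ together with the parent formula $\phi_{i_{k+1}}$ play the role of the $k+1$ combined subformulae, the environment $\Phi \setminus F_{i_{k+1}}$ plays the role of $\phi_{n+1}$, and the parent $\phi_{i_{k+1}}$ is the single unabstracted subformula.

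First I would make the constraint bookkeeping explicit. By Lemma~\ref{lem:ngsa_str}, the labeling constraints characterizing this $(k+1)$-GSA strengthening are exactly those constraints of $CC_{(k+1)GSA}$ that do not mention the label of the unabstracted (parent) interpolant; that is, they restrict only the children labelings $L_{i_1},\ldots,L_{i_k}$ and the combined labeling $L_{i_{k+1}}$. These are precisely the constraints assumed in the hypothesis. The constraints of the full $CC_{(k+1)GSA}$ that are left out are exactly those in which the parent interpolant's label participates.

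The heart of the argument is then to reintroduce an auxiliary parent labeling $T_{i_{k+1}}$ so that the full $CC_{(k+1)GSA}$ holds (resp.\ the $n$-SA constraints of Corollary~\ref{theo:nsa_nec_suff} when $i_{k+1}$ is the root). I would take $T_{i_{k+1}}$ to be McMillan's system, i.e.\ label every shared occurrence of the parent by $b$. Inspecting the two clauses of $CC_{(k+1)GSA}$, the label $b$ is $\preceq$-minimal, so it trivially meets every ``$\preceq \alpha_{\text{comb}}$'' requirement of clause~(2); and since $b$ contributes no occurrence of label $a$, clause~(1) (``at most one $a$ among the abstracted labels'') is preserved, as the hypothesis already guarantees at most one $a$ among the children. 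Hence the full $CC_{(k+1)GSA}$ holds for $\{L_{i_1},\ldots,L_{i_k},T_{i_{k+1}},L_{i_{k+1}}\}$, and by Theorem~\ref{theo:ngsa_nec_suff} this family has $(k+1)$-GSA (the root case uses Corollary~\ref{theo:nsa_nec_suff} analogously).

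Finally I would invoke Theorem~\ref{lem:gsa_to_ti}: having exhibited, for every node and its children, a witnessing $T_{i_{k+1}}$ realizing the required $(k+1)$-GSA (or $(k+1)$-SA at the root), the hypotheses of that theorem are met, so $\{Itp_{S_i}\}_{i\in V}$ has $T$-TI. The step I expect to be the main obstacle is the verification that the $b$-labeling of the parent never violates a constraint of $CC_{(k+1)GSA}$: one must confirm that no constraint can force the parent label upward, which holds because clause~(1) never demands a label be $a$ and clause~(2) only bounds abstracted labels from above by the combined label. Given the cited results, everything else is bookkeeping.
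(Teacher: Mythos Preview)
Your proposal is correct and follows the same route the paper takes: the paper states that the lemma follows directly from Theorem~\ref{lem:gsa_to_ti} and Lemma~\ref{lem:ngsa_str}, and your argument unpacks exactly this implication. Your explicit choice of $T_{i_{k+1}}$ as McMillan's system (label~$b$) is precisely the construction the paper spells out separately in Theorem~\ref{thm:ngsa_str_ext}, so you have simply inlined that step.
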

Note that, in contrast to Theorem~\ref{lem:gsa_to_ti}, in the case
of LISs we do not need to ensure the existence of an additional set of
interpolation systems to abstract the parents. The symmetry between
the necessary and sufficient conditions given by
Theorem~\ref{lem:gsa_to_ti} and Theorem~\ref{lem:ti_to_gsa} is restored,
and we establish:
\vspace{-1mm}\begin{theorem}
  Given a tree $T = (V,E)$ a family $\{Itp_{S_i}\}_{i \in V}$ has
  $T$-TI if and only if for every parent $i_{k+1}$ and its children
  $i_1,\ldots,i_{k}$, the family of labelings of the $(k+1)$-GSA
  strengthening obtained by non abstracting the parent satisfies the
  correspondent subset of $(k+1)$-GSA constraints.
\end{theorem}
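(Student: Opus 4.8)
The plan is to prove the two directions of the equivalence separately. The ``if'' direction is already the lemma stated immediately above, obtained from Theorem~\ref{lem:gsa_to_ti} together with Lemma~\ref{lem:ngsa_str}; crucially, for LISs no auxiliary system is needed to abstract the parent, so I would simply cite it. All the new content is in the ``only if'' direction, whose starting observation is that the defining clause of $T$-TI, read at a single parent node $i$ with children $i_1,\ldots,i_k$, is literally an instance of the $(k+1)$-GSA strengthening in which $\phi_i$ is left unabstracted: writing $F_{i_\ell}=\{\phi_j \mid i_\ell \sqsubseteq j\}$ for the children's subtrees and grouping everything outside $F_i$ as context, the clause $\bigwedge_{\ell=1}^{k} I_{F_{i_\ell},S_{i_\ell}} \wedge \phi_i \implies I_{F_i,S_i}$ coincides with the strengthening implication, the $F_{i_\ell}$ playing the $k$ abstracted roles and $\phi_i$ the literal one (with $I_{F_i}=\bot$ when $i$ is the root, giving the SA variant). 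Hence $T$-TI entails this strengthening implication at every parent.

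For necessity I would, within the given tree $T$, concentrate the counterexample at node $i$ by setting $\phi_j=\top$ for every node $j$ outside $i$, its children $i_1,\ldots,i_k$, and (for a non-root $i$) a single ancestor supplying the context. Each subtree $F_{i_\ell}$ then contributes effectively one formula, so node $i$'s clause becomes an ordinary $(k+1)$-GSA strengthening over the single formulae $\phi_{i_1},\ldots,\phi_{i_k},\phi_i$, whose labeling constraints are exactly the subset of $CC_{(k+1)GSA}$ identified by Lemma~\ref{lem:ngsa_str}. It then remains to show that this strengthening implication forces that subset of constraints, which I would do by the counterexample technique of the necessity halves of Theorem~\ref{theo:ngsa_nec_suff} and Lemma~\ref{lem:bgsa_nec}: assuming a labeling vector violates one of these constraints, construct per variable a small unsatisfiable CNF and a refutation whose partial interpolants, computed under the given labelings via the system of Fig.~\ref{tab:gen}, make $\bigwedge_\ell I_{F_{i_\ell}} \wedge \phi_i$ fail to imply $I_{F_i}$, contradicting $T$-TI. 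Combining the two directions yields the equivalence, and the coincidence of the necessary and sufficient constraint sets is precisely the symmetry between Theorem~\ref{lem:ti_to_gsa} and Theorem~\ref{lem:gsa_to_ti} that the statement restores.

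The hard part will be this last step, specifically checking that leaving $\phi_i$ literal, rather than abstracting it through an auxiliary system as Theorem~\ref{lem:gsa_to_ti} does, removes exactly the parent-only constraints --- those that in full $(k+1)$-GSA would tie a parent label to the children's labels --- and introduces no spurious ones. Concretely, one must verify that the crafted refutation is still a genuine refutation with $\phi_i$ sitting unabstracted on the $A$-side, and that the absence of a parent partial interpolant neither weakens the counterexample nor adds constraints beyond the Lemma~\ref{lem:ngsa_str} subset. This is the only place where the strengthening genuinely departs from full $(k+1)$-GSA, and where Lemma~\ref{lem:ngsa_str} must be used in both directions.
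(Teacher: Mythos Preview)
Your plan is correct and the ``if'' direction matches the paper exactly (it is Lemma~7). For the ``only if'' direction, however, the paper takes a shorter route than you propose: rather than rebuilding the counterexample argument for the strengthening, it simply appeals to Theorem~\ref{lem:ti_to_gsa} (which extracts $k$-GSA at every parent from $T$-TI by decorating the parent node with $\top$) and then to the necessity half of Theorem~\ref{theo:ngsa_nec_suff} (which turns $k$-GSA into the $CC_{kGSA}$ constraints on $\{S_{i_1},\ldots,S_{i_k},S_{i_{k+1}}\}$). The identification of $CC_{kGSA}$ with the ``correspondent subset of $(k{+}1)$-GSA constraints'' is then read off the proof of Lemma~\ref{lem:ngsa_str}: dropping the component of the non-abstracted parent from every labeling vector yields exactly the $k$-GSA constraint set.

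The practical difference is precisely what you flag as the ``hard part''. In your localization you keep $\phi_i$ genuinely literal, which forces you to check that each Lemma~\ref{lem:bgsa_nec}/Lemma~\ref{lem:ngsa_nec}-style countermodel also satisfies $\phi_i$ and not merely $I_{\phi_i}$. The paper sidesteps this entirely: in Theorem~\ref{lem:ti_to_gsa} the parent's formula is set to $\top$, so the strengthening clause at node $i$ collapses to a plain $k$-GSA instance and the existing counterexamples apply verbatim. If you adopt the same move---set $\phi_i=\top$ in addition to the nodes you already blank out---your argument becomes the paper's, and the worry about ``spurious'' or ``missing'' constraints disappears, since Lemma~\ref{lem:ngsa_str} is then only used in the one direction it actually states.
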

Alternatively, in the case of LISs, the additional interpolation
systems can be constructed explicitly:
\vspace{-1mm}\begin{theorem}
\label{thm:ngsa_str_ext}
Any $\mF=\{Itp_{L_{i_1}},\ldots,Itp_{L_{i_k}},Itp_{L_{n+1}}\}$
s.t. $k< n$ that has an $n$-GSA strengthening property can be extended
to a family that has $n$-GSA.
\end{theorem}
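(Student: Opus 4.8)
The plan is to argue entirely at the level of labeling constraints, using the characterization of Theorem~\ref{theo:ngsa_nec_suff}, which states that a family of LISs has $n$-GSA precisely when its family of labelings satisfies $CC_{nGSA}$. Let $B = \{1,\ldots,n\} \setminus \{i_1,\ldots,i_k\}$ denote the set of \emph{bare} indices, i.e. the subformulae $\phi_j$ that the given $n$-GSA strengthening does not abstract; since $k<n$, the set $B$ is non-empty, so there is at least one missing system to supply. By Lemma~\ref{lem:ngsa_str}, the labeling constraints of the strengthening form a subset of $CC_{nGSA}$, and since the hypothesis says $\mF$ enjoys the strengthening, the existing labelings $\{L_{i_1},\ldots,L_{i_k},L_{n+1}\}$ already satisfy exactly that subset.

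First I would extend the family by adding, for each bare index $j \in B$, the labeling $L_j$ equal to McMillan's system, i.e. the labeling assigning $b$ to every $AB$ occurrence. This is the explicit construction promised in the statement, and it is the canonical choice: $b$ is the minimum of the order $b \preceq ab \preceq a$, so it yields the strongest partial interpolant $I_{\phi_j,L_j}$, the tightest over-approximation of the bare $\phi_j$ it replaces. Intuitively, replacing $\phi_j$ by $I_{\phi_j,L_j}$ weakens the antecedent $\bigwedge_i I_{\phi_i}$, so we want that interpolant as strong as possible to keep the $n$-GSA implication valid; at the constraint level this corresponds to choosing the smallest label.

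Next I would verify that the extended family $\{L_1,\ldots,L_n,L_{n+1}\}$ satisfies the full $CC_{nGSA}$, and then conclude by Theorem~\ref{theo:ngsa_nec_suff}. Any constraint involving only the pre-existing labelings holds because it coincides with a constraint of the strengthening's subset. Any constraint involving a newly inserted entry $\alpha_j$ with $j \in B$ is discharged by $\alpha_j = b$: condition (2) of $CC_{nGSA}$ requires $\alpha_j \preceq \alpha_{n+1}$, which holds since $b$ is minimal; for condition (1), setting $\alpha_j=b$ never introduces an offending $a$, and if some other entry of the vector is $a$ then the subset constraint already forces the remaining existing entries to be $b$, so the completed vector still has every entry other than the single $a$ equal to $b$.

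The hard part will be the bookkeeping in this verification, specifically the variables shared \emph{only} with the bare $\phi_j$'s, whose labeling vectors are not constrained at all by the strengthening: I must confirm that assigning them $b$ in the new configurations $j$ leaves every such vector inside $CC_{nGSA}$. This reduces to the elementary observation that a vector all of whose entries are $b$, or which has a single $a$ with every other entry $b$, satisfies both conditions trivially; the only genuine content is that the minimality of $b$ makes the newly inserted labels compatible with whatever subset-constrained values the pre-existing labels already carry, which is exactly why the strongest (McMillan) completion always works.
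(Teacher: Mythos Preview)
Your proposal is correct and follows essentially the same approach as the paper: complete the family by filling the bare positions with McMillan's system $Itp_M$ (the constant-$b$ labeling), then invoke the minimality of $b$ in the order $b \preceq ab \preceq a$ together with Lemma~\ref{lem:ngsa_str} to conclude that the full $CC_{nGSA}$ constraints hold, and finish via Theorem~\ref{theo:ngsa_nec_suff}. The paper's proof is more terse, dispatching the verification of conditions (1) and (2) in a single sentence, whereas you spell out the case analysis; but the construction and the reason it works are identical.
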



\paragraph{\textbf{Collectives of Single LISs}.}
\label{sec:interp_sing_lab}
In the following, we highlight the fundamental results in the context of single LISs,
which represent the most common application of the framework of D'Silva et al. to SAT-based model checking.

First, importantly for practical applications, any LIS satisfies PI:
\vspace{-0.4mm}
\begin{theorem}
\label{theo:lis_pi}
PI holds for all single LISs.
\end{theorem}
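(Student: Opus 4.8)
The plan is to prove the defining form of PI directly from the structure of the labeled interpolation system. Unlike BGSA, GSA, STI and TI, PI cannot be obtained from the collapse results of \S\ref{sec:interp_sing}: a single LIS need not satisfy $CC_{BGSA}$ (for instance $Itp_{M'}$, which labels all $AB$ variables $a$, has PI but violates $CC_{BGSA}$ and hence lacks BGSA), so PI is genuinely the weakest collective and must be established on its own. By the definition of ``having a property'', it suffices to show that for every inconsistent $\Phi=\{\phi_1,\ldots,\phi_n\}$ and every $0\le i\le n-1$, $(I_{\phi_1\cdots\phi_i,L}\wedge\phi_{i+1})\implies I_{\phi_1\cdots\phi_{i+1},L}$. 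The two boundary steps ($i=0$ and $i=n-1$) are immediate from the interpolant axioms, since $I_{\top}=\top$ gives $\phi_1\implies I_{\phi_1,L}$ and $I_{\Phi}=\bot$ gives $I_{\phi_1\cdots\phi_{n-1},L}\wedge\phi_n\implies\bot$ (this is the content of Proposition~\ref{prop:2pi} and carries no real information); the substance is in the interior steps, where the crucial feature is that $\phi_{i+1}$ enters \emph{unabstracted}.

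I would fix $i$ and a single refutation $R$ of $\Phi$, and compute both $I_{\phi_1\cdots\phi_i,L}$ and $I_{\phi_1\cdots\phi_{i+1},L}$ from the same $R$ under the two labelings that the single $L$ induces on the two configurations; write $I^{(i)}(C)$ and $I^{(i+1)}(C)$ for the partial interpolants at a clause $C$ of $R$ in the configuration with $A=\phi_1\cdots\phi_i$ and $A=\phi_1\cdots\phi_{i+1}$, respectively. These configurations differ only in that the block $\phi_{i+1}$ lies in $B$ in the first and in $A$ in the second, so the two induced labelings agree on every atom outside $\phi_{i+1}$, and a short case check on variable class shows the labels of $\phi_{i+1}$-atoms only rise (in $\preceq$) when $\phi_{i+1}$ moves into $A$. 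I would then prove, by induction on $R$ from the leaves to the root, the invariant
\[ \phi_{i+1}\wedge I^{(i)}(C)\;\implies\; I^{(i+1)}(C)\vee (C\!\!\downharpoonright_{\phi_{i+1}}), \]
where $C\!\!\downharpoonright_{\phi_{i+1}}$ denotes the disjunction of the literals of $C$ whose atom lies in $\mL_{\phi_{i+1}}$. At the root $C=\bot$ the residual is empty, so the invariant collapses to exactly the interior PI implication.

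For the base cases, a leaf lying in $\phi_1\cdots\phi_i$ or in $\phi_{i+2}\cdots\phi_n$ keeps its side across the two configurations, and comparing $C\!\!\downharpoonright b$ (resp.\ $\neg(C\!\!\downharpoonright a)$) under the two labelings shows that the only discrepancy consists of literals whose atom is in $\phi_{i+1}$, which $C\!\!\downharpoonright_{\phi_{i+1}}$ absorbs. The decisive leaf case is $C\in\phi_{i+1}$: here every atom of $C$ is in $\phi_{i+1}$, so $C\!\!\downharpoonright_{\phi_{i+1}}=C$, and since $\phi_{i+1}\implies C$ the invariant holds trivially. This is precisely the step that exploits the \emph{full} formula $\phi_{i+1}$ rather than an interpolant of it, and it is the structural reason PI holds for every LIS whereas BGSA and STI do not. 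For the inductive step I would case-split on the pivot $p$. If $p\notin\phi_{i+1}$ its label is the same in both configurations, the combination rule is identical on both sides, and the invariant propagates by routine propositional reasoning. If $p$ occurs in $\phi_{i+1}$, its label rises (typically from $b$ to $a$), turning an $I^{(i)}$-conjunction into an $I^{(i+1)}$-disjunction; the point is that $p$ now appears in the antecedent residuals, as $p$ in one and $\bar p$ in the other, and these two literals resolve against each other exactly as $p$ is eliminated by the resolution step, re-establishing the invariant for the resolvent.

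The main obstacle I anticipate is the bookkeeping in the mixed inductive cases where one configuration applies the $ab$-rule while the other applies the $a$- or $b$-rule (this happens for $\phi_{i+1}$-atoms shared with the prefix or the suffix whose $L$-label is $ab$). There the pivot literal $p$ is retained inside $I^{(i+1)}(C)$ and must be matched carefully against both the residual $C\!\!\downharpoonright_{\phi_{i+1}}$ and the $\bar p$ contributed by the other antecedent. These cases are elementary but tedious, and require tracking precisely which literals reside in $I^{(i+1)}(C)$ versus in $C\!\!\downharpoonright_{\phi_{i+1}}$; everything else follows directly from the invariant and the resolution rule of Fig.~\ref{tab:gen}. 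Since $i$ and the labeling $L$ are arbitrary, this establishes that PI holds for all single LISs.
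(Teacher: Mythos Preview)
Your argument is correct, but it takes a different route from the paper. The paper's proof is a two-line appeal to prior work: in~\cite{RSS12} sufficient labeling constraints for $n$-PI of a family were established, namely $\gamma_1\preceq\gamma_2$ and $\delta_1\preceq\delta_2$ (in the notation of Table~\ref{tab:npi}, where $\gamma$ labels atoms shared by prefix and suffix only and $\delta$ labels atoms shared by all three blocks). For a single LIS the same $L$ is used in both configurations, so $\gamma_1=\gamma_2$ and $\delta_1=\delta_2$, and the constraints are trivially satisfied; done.

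What you do instead is carry out a direct structural induction on the refutation with the clause-residual invariant $\phi_{i+1}\wedge I^{(i)}(C)\Rightarrow I^{(i+1)}(C)\vee C\!\!\downharpoonright_{\phi_{i+1}}$. Your label-monotonicity observation (labels of $\phi_{i+1}$-atoms only rise when $\phi_{i+1}$ moves from $B$ to $A$, while all other labels are unchanged) is exactly the single-LIS instantiation of the $\gamma_1\preceq\gamma_2$, $\delta_1\preceq\delta_2$ constraints, and your induction is essentially what underlies the sufficiency proof in~\cite{RSS12}, specialized to this case. The advantage of your approach is that it is self-contained and makes explicit the structural reason PI holds universally (the unabstracted $\phi_{i+1}$ discharges the $\phi_{i+1}$-leaves outright, which is why no constraint on $\alpha$ or $\beta$ is needed). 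The paper's approach is much shorter and reuses the general labeling-constraint machinery already developed for BGSA and GSA, at the cost of an external reference.
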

\vspace{-0.4mm}

Second, recall that in \S\ref{sec:interp_sing} we proved that BGSA, STI,
TI, GSA are equivalent for single interpolation systems, and that SA
$\rightarrow$ BGSA for  symmetric ones.  We now show that for a single
LIS, SA is equivalent to BGSA and that PI is not.
\vspace{-2mm}
\begin{theorem}
\label{theo:lis_sa_bgsa}
If a LIS has SA, then it has BGSA.
\end{theorem}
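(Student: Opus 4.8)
The plan is to argue entirely at the level of labeling constraints, specializing the necessary-and-sufficient characterizations already obtained for families of LISs to the degenerate case in which every member of the family is the same labeling $L$. The first observation I would make is that for a single LIS the labeling vectors collapse to constant tuples. If a variable $p$ lies in a partition $\phi_{i_1}\cdots\phi_{i_k}$ of $\Phi$, then in each configuration $i_j$ in which $p$ is shared (hence of class $AB$) the label used is the single intrinsic value $L(p)$, while in the configurations forcing $p$ into class $A$ or $B$ the label is determined and does not enter the vector. Thus the labeling vector of $p$ is $(L(p),\ldots,L(p))$, and strength comparisons among its entries are trivial.

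Next I would read off what BGSA requires under this collapse. Reading Table~\ref{tab:bgsa} with $L_1=L_2=L_3=L$, one gets $\alpha_1=\alpha_2=L(p)$ for $p\in\phi_1\phi_2$, $\delta_1=\delta_2=\delta_3=L(p)$ for $p\in\phi_1\phi_2\phi_3$, together with $\beta_2=\beta_3$ and $\gamma_1=\gamma_3$. Consequently the order requirements $\beta_2\preceq\beta_3$, $\gamma_1\preceq\gamma_3$, $\delta_1\preceq\delta_3$, $\delta_2\preceq\delta_3$ of $CC_{BGSA}$ hold as equalities, and the only nontrivial requirements $(\alpha_1,\alpha_2),(\delta_1,\delta_2)\preceq\{(ab,ab),(b,a),(a,b)\}$ reduce, for a constant pair $(x,x)$, to $x\neq a$. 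By Theorem~\ref{theo:bgsa_nec_suff} a single LIS therefore has BGSA on $\Phi$ precisely when the variables in the partitions $\phi_1\phi_2$ and $\phi_1\phi_2\phi_3$ are labeled $\neq a$. Symmetrically, the $n$-SA constraint of Corollary~\ref{theo:nsa_nec_suff}, applied to a constant vector of length $k\geq 2$, degenerates to "$L(p)\neq a$": if $L(p)=a$ then every entry is $a$, so for any chosen index $j$ the implication would force $\alpha_h=b$ on the remaining $h$, which fails.

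With these two readings the implication is immediate. Given that $L$ has SA and an arbitrary BGSA instance $\Phi=\{\phi_1,\phi_2,\phi_3\}$ with refutation $R$, I would note that $\Phi$ is inconsistent and hence is also a legitimate $3$-SA instance for the induced family $\{Itp_L,Itp_L,Itp_L\}$. Since $L$ has SA it in particular has $3$-SA on $\Phi$, so by the necessity direction of Corollary~\ref{theo:nsa_nec_suff} every variable shared among at least two of $\phi_1,\phi_2,\phi_3$ satisfies $L(p)\neq a$. This subsumes the two requirements that BGSA imposes, on the partitions $\phi_1\phi_2$ and $\phi_1\phi_2\phi_3$; hence $\{L,L,L\}$ satisfies $CC_{BGSA}$, and by the sufficiency direction of Theorem~\ref{theo:bgsa_nec_suff} the family $\{Itp_L,Itp_L,Itp_L\}$ has BGSA on $\Phi$. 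Since $\Phi$ was arbitrary, $L$ has BGSA.

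The one genuine subtlety, and the step I would be most careful to justify, is the collapse of the labeling vectors to constant tuples: it rests on the fact that a single LIS uses one intrinsic label $L(p)$ whenever $p$ is $AB$, overridden to $a$ or $b$ only when a configuration forces $p$ into class $A$ or $B$. Everything afterwards is containment of constraint sets: SA pins \emph{every} shared variable to $\neq a$, whereas BGSA needs this only for the variables shared with both combined formulae, so the SA constraints are strictly stronger and the implication follows. No interpolation machinery beyond the characterizations of \S\ref{sec:interp_family_lab} is required.
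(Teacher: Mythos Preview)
Your proof is correct and follows essentially the same approach as the paper: both specialize the family-level labeling characterizations (Theorem~\ref{theo:bgsa_nec_suff} and Corollary~\ref{theo:nsa_nec_suff}) to the case $L_1=L_2=L_3=L$, observe that the constraints collapse to ``$L(p)\in\{ab,b\}$'' on the relevant partitions, and note that the $3$-SA constraint set (all of $\alpha,\beta,\gamma,\delta$) contains the BGSA constraint set (only $\alpha,\delta$). The only cosmetic difference is that you thread the argument through an arbitrary instance $\Phi$, whereas the paper states the simplified constraints directly; either way the content is identical.
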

\vspace{-3mm}
\begin{proof}
  We show that, for any $L$, the labeling constraints of SA imply
  those of BGSA. Refer to Table~\ref{tab:bgsa}, Table~\ref{tab:3sa},
  Theorem~\ref{theo:ngsa_nec_suff} and
  Corollary~\ref{theo:nsa_nec_suff}. In case of a family
  $\{L_1,L_2,L_3\}$, the constraints for $3$-SA are:
{
\setlength{\abovedisplayskip}{5pt}
\setlength{\belowdisplayskip}{5pt}
\begin{align*}
(\alpha_1,\alpha_2), (\beta_2,\beta_3), (\gamma_1,\gamma_3)
  \preceq \{(ab,ab),(b,a),(a,b)\}\\ 
(\delta_1,\delta_2,\delta_3)\preceq
  \{(ab,ab,ab),(a,b,b),(b,a,b),(b,b,a)\}
\end{align*}
}
When $L_1=L_2=L_3$, they simplify to 
$\alpha,\beta,\gamma,\delta \in \{ab,b\}$; this means that, in case
of a single LIS, only Pudl\'ak's or stronger systems are allowed.
In case of a family $\{L_1,L_2,L_3\}$, the constraints for BGSA are:
{
\setlength{\abovedisplayskip}{5pt}
\setlength{\belowdisplayskip}{5pt}
\[(\alpha_1,\alpha_2), (\delta_1,\delta_2) \preceq
  \{(ab,ab),(b,a),(a,b)\}, 
\beta_2 \preceq \beta_3, \gamma_1 \preceq \gamma_3, \delta_1 \preceq \delta_3, \delta_2 \preceq \delta_3
\]
}
When $L_1=L_2=L_3$, they simplify to $\alpha,\delta \in \{ab,b\}$;
clearly, the constraints for $3$-SA imply those for BGSA, but not vice
versa.
\end{proof}
%
%
\vspace{-2mm}
Finally, Theorem~\ref{theo:lis_pi} and Theorem~\ref{theo:lis_sa_bgsa}  yield:
\begin{theorem}
\label{theo:pi_not_bgsa}
The system $Itp_{M'}$ has PI but does not have BGSA.
\end{theorem}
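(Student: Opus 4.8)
The plan is to dispatch the two halves of the statement separately, since almost all the content lies in the negative half. The PI half is immediate: by Theorem~\ref{theo:lis_pi}, \emph{every} single LIS enjoys PI, and $Itp_{M'}$ is in particular a single LIS, so $Itp_{M'}$ has PI with no further argument. Everything interesting is in showing that $Itp_{M'}$ does \emph{not} have BGSA.

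For the BGSA half I would reuse the labeling characterization that already surfaces inside the proof of Theorem~\ref{theo:lis_sa_bgsa}. By the necessary-and-sufficient condition of Theorem~\ref{theo:bgsa_nec_suff}, a family $\{Itp_{L_1},Itp_{L_2},Itp_{L_3}\}$ has BGSA iff $\{L_1,L_2,L_3\}$ satisfies $CC_{BGSA}$. Specializing to a single LIS $L$ by setting $L_1=L_2=L_3=L$, the constraint $(\alpha_1,\alpha_2),(\delta_1,\delta_2) \preceq \{(ab,ab),(b,a),(a,b)\}$ collapses, exactly as recorded in that proof, to the single requirement that every $AB$-variable label lie in $\{ab,b\}$ -- equivalently, that the system be Pudl\'ak's or stronger. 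This is the key fact I would import.

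Now $Itp_{M'}$, the dual of McMillan's system, labels every occurrence of an $AB$ variable with $a$, and $a \notin \{ab,b\}$ (indeed $a \succ ab$ in the order $b \preceq ab \preceq a$). Hence, taking any inconsistent $\Phi = \{\phi_1,\phi_2,\phi_3\}$ that possesses a variable of class $AB$ shared between $\phi_1$ and $\phi_2$, the induced labeling vector is $(\alpha_1,\alpha_2) = (a,a)$, which is below none of $(ab,ab),(b,a),(a,b)$ and therefore violates $CC_{BGSA}$. By the necessity direction, Lemma~\ref{lem:bgsa_nec}, the family $\{Itp_{M'},Itp_{M'},Itp_{M'}\}$ cannot have BGSA; that is, $Itp_{M'}$ does not have BGSA. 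Combined with the PI half this gives the theorem, and, since SA and BGSA coincide for single LISs (Theorem~\ref{theo:lis_sa_bgsa} together with its converse), it simultaneously witnesses that PI is strictly weaker than SA.

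The one genuine subtlety -- and the main obstacle -- is that the failure of the \emph{syntactic} labeling constraint must be converted into failure of the \emph{semantic} BGSA implication on an actual refutation; a priori a violated constraint could be vacuous if no refutation ever exercises it. This is precisely what Lemma~\ref{lem:bgsa_nec} supplies: its proof builds a concrete formula and resolution proof on which $I_{\phi_1,L_1} \wedge I_{\phi_2,L_2} \not\Rightarrow I_{\phi_1\phi_2,L_3}$ whenever $CC_{BGSA}$ is broken, so I would instantiate that construction with a single $AB$ variable shared only between $\phi_1$ and $\phi_2$. Should a self-contained witness be preferred, I would instead exhibit a minimal such $\Phi$ and a short refutation and compute the three $M'$-interpolants directly, reading off the failed implication.
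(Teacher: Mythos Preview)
Your proposal is correct and follows essentially the same route as the paper. The paper's own proof is a terse two-liner: it extracts from the proof of Theorem~\ref{theo:lis_sa_bgsa} that a single LIS has BGSA iff it is Pudl\'ak's system or stronger, observes that $Itp_{M'}$ is strictly weaker than $Itp_P$, and concludes; the PI half is left implicit via Theorem~\ref{theo:lis_pi}. You unpack the same argument with more care, in particular flagging the syntactic-to-semantic gap and noting that Lemma~\ref{lem:bgsa_nec} closes it by exhibiting a concrete refutation---a point the paper leaves tacit.
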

\vspace{-2mm}
\begin{proof}
  From the \emph{proof} of Theorem~\ref{theo:lis_sa_bgsa}: a LIS has
  the BGSA property iff it is stronger or equal than Pudl\'ak's
  system. $Itp_{M'}$ is strictly weaker than $Itp_{P}$. Thus, it does
  not have BGSA.
\end{proof}

\vspace{-2mm}
Note that the necessary and sufficient conditions for LISs to support
each of the collectives simplify implementing procedures with a
given property, or, more importantly from a practical perspective,
determine which implementation supports which property.


\vspace{-3mm}
\section{Implementation}
\vspace{-2.5mm}
\label{sec:implementation}

We developed an interpolating prover, PeRIPLO\footnote{PeRIPLO is available at http://verify.inf.usi.ch/periplo.html},
which implements the proposed framework. PeRIPLO is, to the best of our knowledge, the first SAT-solver built on MiniSAT 2.2.0 that
realizes the Labeled Interpolation Systems of \cite{DKPW10} and allows to perform interpolation, path interpolation, generalized simultaneous
abstraction, state-transition interpolation and tree interpolation;  it also offers proof logging and manipulation routines. 
The tool has been integrated within the FunFrog and eVolCheck verification frameworks,
which make use of its solving and interpolation features for SAT-based model checking.
In theory, using different
partitions of the same formula and  different
labelings with each partition does not change the
algorithmic complexity of LISs (see appendix~\ref{sec:app:3}). In our experience, there is no overhead in practice as well.
\vspace{-3mm}
\section{Conclusions}
\vspace{-2.5mm}
\label{sec:concl}

Craig interpolation is a widely used approach in abstraction-based
model checking. 
%
%
%
This paper conducts a systematic investigation of
the most common interpolation properties exploited in verification,
focusing on the constraints they pose on propositional interpolation systems used 
in SAT-based model checking.  

The paper makes the following contributions. 
It
systematizes and unifies various properties imposed on interpolation
by existing verification approaches and proves that for families of
interpolation systems the properties form a hierarchy, whereas for a
single system all properties except path interpolation and
simultaneous abstraction are in fact equivalent. 
Additionally, it
defines and proves both sufficient and necessary conditions for a
family of Labeled Interpolation Systems.  In particular, it
demonstrates that in case of a single system path interpolation is
common to all LISs, while simultaneous abstraction is as strong as all
other more complex properties.
Extending our framework to address interpolation in first order theories is an
interesting open problem, and is part of our future work.

\vspace{-4mm}
\bibliographystyle{abbrv}
\bibliography{biblio}

\newpage
\appendix

\section{Properties of Sub-families}

\setcounter{theorem}{1}
\begin{theorem}
A family $\{Itp_{S_0},\ldots,Itp_{S_n},Itp_{T_1},\ldots,Itp_{T_n}\}$  has $n$-STI iff for all $k\leq n$ the subfamily
$\{Itp_{S_{0}},\ldots,Itp_{S_{k}}\}$ $\cup$ $\{Itp_{T_{1}},\ldots,Itp_{T_{k}}\}$  has $k$-STI.
\end{theorem}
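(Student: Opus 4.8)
The plan is to prove the two directions separately, with essentially all the content residing in the left-to-right (``only if'') implication; the converse is immediate. For the converse, observe that the choice $k=n$ makes the sub-family $\{Itp_{S_{0}},\ldots,Itp_{S_{n}}\}\cup\{Itp_{T_{1}},\ldots,Itp_{T_{n}}\}$ coincide with the full family, so if every sub-family has $k$-STI then in particular the family itself has $n$-STI. Thus it remains to show that $n$-STI of the full family forces $k$-STI of each sub-family for $k\le n$.

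For the forward direction I would use a padding argument, exactly analogous to the $\top$-padding employed in the $(3\to 4)$ step of Theorem~\ref{thm:single-prop-collapse}. Fix $k\le n$ and an arbitrary inconsistent $\Phi' = \{\phi_1,\ldots,\phi_k\}$, which is the partition over which the sub-family's $k$-STI must be checked. Extend $\Phi'$ to the $n$-component partition $\Phi = \{\phi_1,\ldots,\phi_k,\top,\ldots,\top\}$ by appending $n-k$ copies of $\top$. Since $\top$ is (equated with) the empty formula, $\phi_1\cdots\phi_k\top\cdots\top \equiv \phi_1\cdots\phi_k \implies \bot$, so $\Phi$ is still inconsistent and the full family's $n$-STI applies to it.

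The key step is then to read off the relevant implications. Applying $n$-STI to $\Phi$ for the indices $0\le i\le k-1$ gives
\[
(I_{\phi_1\ldots\phi_i,S_i}\wedge I_{\phi_{i+1},T_{i+1}}) \implies I_{\phi_1\ldots\phi_{i+1},S_{i+1}},
\]
and each of these involves only the systems $Itp_{S_0},\ldots,Itp_{S_k}$ and $Itp_{T_1},\ldots,Itp_{T_k}$, i.e.\ precisely the members of the sub-family. The point I expect to be the main (and only) obstacle is verifying that these are literally the $k$-STI implications for $\Phi'$: one must check that each interpolant computed in the padded context equals the one computed in the sub-family context. This holds because, for $i\le k-1$, the partition appearing in $I_{\phi_1\ldots\phi_i,S_i} = Itp_{S_i}(\phi_1\cdots\phi_i \mid \phi_{i+1}\cdots\phi_k\top\cdots\top)$ has a $B$-part that, with $\top$ the empty formula, is syntactically identical to $\phi_{i+1}\cdots\phi_k$; hence the interpolation system, being a function of the actual $A\mid B$ partition, returns the same interpolant as in $\Phi'$. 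The same observation applies to $I_{\phi_{i+1},T_{i+1}}$ and $I_{\phi_1\ldots\phi_{i+1},S_{i+1}}$. Consequently the displayed implications are exactly those required for the sub-family to have $k$-STI over $\Phi'$, and since $\Phi'$ was arbitrary, the sub-family has $k$-STI, completing the proof.
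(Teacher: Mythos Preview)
Your proof is correct and follows essentially the same approach as the paper: both directions are handled identically, with $k=n$ giving the trivial converse and $\top$-padding of an arbitrary inconsistent $\{\phi_1,\ldots,\phi_k\}$ to length $n$ yielding the forward direction by reading off the $n$-STI implications for $0\le i\le k-1$. If anything, your version is more explicit than the paper's in justifying why the padded interpolants coincide with the sub-family ones.
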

\begin{proof}
$\rightarrow)$ Assume an inconsistent $\Phi \triangleq \{\phi_1,\ldots,\phi_k\}$. We can extend it to a 
$\Phi' \triangleq \{\phi'_1,\ldots,\phi'_n\}$ such that $\phi'_i \equiv \phi_i$,
by adding
$n-k$ empty formulae $\top$. If $\mF$ has the $n$-STI property, for $0\leq j\leq k-1$
$$I_{\phi_1\ldots \phi_j,S_j} \wedge I_{\phi_{j+1},T_{j+1}} \rightarrow I_{\phi_1\ldots \phi_{j+1},S_{j+1}}$$
$\leftarrow)$ Follows from $k=n$.
\end{proof}

\setcounter{theorem}{16}
\begin{theorem}
\label{lem:gsa_sub}
A family $\mF = \{Itp_{S_1},\ldots,Itp_{S_{n+1}}\}$ has 
$n$-GSA iff for all $k\leq n$ all the subfamilies
$\{Itp_{S_{i_1}},\ldots,Itp_{S_{i_{k+1}}}\}$ have \mbox{$k$-GSA}.
\end{theorem}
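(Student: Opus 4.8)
The plan is to mirror the proof of the analogous statement for $n$-STI (Theorem~\ref{lem:sti_sub}), exploiting the convention $Itp_S(\top \mid \phi) = \top$ together with the identification of a conjunction-with-$\top$ with the conjunction itself. As in that proof, the backward direction is immediate: taking $k = n$ and $i_\ell = \ell$ exhibits $\mF$ itself as one of the sub-families, so if every sub-family (for $k \leq n$) has $k$-GSA, then in particular $\mF$ has $n$-GSA.

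For the forward direction, I would fix a sub-family $\{Itp_{S_{i_1}}, \ldots, Itp_{S_{i_{k+1}}}\}$ in which $S_{i_{k+1}}$ plays the special (union-interpolating) role, and an inconsistent $\Psi = \{\psi_1, \ldots, \psi_{k+1}\}$; the goal is $\bigwedge_{\ell=1}^{k} I_{\psi_\ell, S_{i_\ell}} \implies I_{\psi_1 \cdots \psi_k, S_{i_{k+1}}}$. The reduction embeds $\Psi$ into an $(n+1)$-partition $\Phi = \{\phi_1, \ldots, \phi_{n+1}\}$ by placing $\psi_\ell$ at position $i_\ell$ for $1 \leq \ell \leq k$, placing $\psi_{k+1}$ at position $n+1$, and filling every remaining position with $\top$. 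Since conjoining $\top$ does not alter a formula, $\Phi$ is logically equal to $\Psi$ and hence inconsistent, so $n$-GSA applies to $\Phi$.

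The key step is then to read off the $n$-GSA implication for $\Phi$ and simplify it: each padded position $j$ contributes $I_{\phi_j, S_j} = Itp_{S_j}(\top \mid \cdot) = \top$, which drops out of the left-hand conjunction, leaving exactly $\bigwedge_{\ell=1}^{k} I_{\psi_\ell, S_{i_\ell}}$; and on the right $\phi_1 \cdots \phi_n$ collapses to $\psi_1 \cdots \psi_k$ (the $\top$'s vanish and $\psi_{k+1}$ sits at position $n+1$, outside the union), so the consequent is precisely $I_{\psi_1 \cdots \psi_k, S_{i_{k+1}}}$. This yields the desired $k$-GSA instance.

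The main obstacle, and the point requiring care, is the treatment of the special member: because $n$-GSA is asymmetric (the designated system $S_{n+1}$ interpolates $\phi_1 \cdots \phi_n$ against $\phi_{n+1}$, whereas every other $S_i$ abstracts a single $\phi_i$), the embedding above is available only when the sub-family retains $S_{n+1}$ in its special slot, i.e.\ $i_{k+1} = n+1$; an abstracting system of $\mF$ cannot be coerced into the union-interpolating role by $\top$-padding alone. I would therefore read the sub-family claim with the special member fixed to $S_{n+1}$ and the children $i_1, \ldots, i_k$ ranging freely over $\{1, \ldots, n\}$. A secondary point to verify is that $I_{\phi_{i_\ell}, S_{i_\ell}}$ computed in the $\Phi$-context agrees with $I_{\psi_\ell, S_{i_\ell}}$ in the $\Psi$-context; this holds because the two differ only by $\top$-conjuncts on the $B$ side, which the set-based convention treats as identical.
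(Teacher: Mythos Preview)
Your approach is essentially the same as the paper's: pad the $(k{+}1)$-partition with $\top$-formulae to an $(n{+}1)$-partition, apply $n$-GSA, and simplify using $Itp_S(\top\mid\cdot)=\top$. You are also right to flag the restriction to sub-families with $i_{k+1}=n+1$; the paper's own proof tacitly makes the same assumption (it places $\phi_{k+1}$ at position $i_{k+1}$ but then reads off $S_{i_{k+1}}$ on the right where $n$-GSA only supplies $S_{n+1}$, and it asserts $\phi'_j=\top$ for all $j\notin\{i_1,\ldots,i_k\}$), even though the theorem is phrased for arbitrary sub-families. So your proposal matches the paper's argument and is, if anything, more explicit about its scope.
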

\begin{proof}
  ($\rightarrow$) Let $n$ be a natural number. Take any inconsistent
  $\Phi = \{\phi_1,\ldots,\phi_{k+1}\}$ such that $k \leq
  n$. Let $\{i_1, \ldots, i_{k+1}\}$ be a subset of
    $\{1,\ldots,n+1\}$. Extend $\Phi$ to a ${\Phi' =
    \{\phi'_1,\ldots,\phi'_{n+1}\}}$ by adding $(n-k)$ copies of $\top$,
    so that $\phi'_{i_1} = \phi_1, \ldots,
    \phi'_{i_{k}} = \phi_{k},$ $\phi'_{i_{k+1}} = \phi_{n+1}$.
    Since $\mF$ has $n$-GSA:
    \[
     \bigwedge_{j=1}^{n} I_{\phi'_j,S_j} \implies I_{\phi'_1 \ldots \phi'_{n},S_{n+1}} 
    \]
    and, since $\phi'_j = \top$ for $j \not\in \{i_1, \ldots, i_k\}$:
    \[
    \bigwedge_{j\in \{i_1\ldots i_{k}\}} I_{\phi_j,S_j} \implies
    I_{\phi_{i_1 \ldots i_{k}},S_{i_{k+1}}}
    \]
$(\leftarrow)$ Follows from $k=n$.
\end{proof}

It is easy to see that the technique used in the proof of
Theorem~\ref{lem:gsa_sub}, i.e., extending an unsatisfiable formula with
$\top$ conjuncts, applies to the other properties as well.

\begin{theorem}
\label{lem:sa_sub}
A family $\{Itp_{S_1},\ldots,Itp_{S_n}\}$ has $n$-SA iff
for all $k\leq n$ all the subfamilies
$\{Itp_{S_{i_1}},\ldots,Itp_{S_{i_k}}\}$ have $k$-SA.
\end{theorem}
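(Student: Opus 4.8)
The plan is to reuse the $\top$-padding technique already applied to $n$-GSA in Theorem~\ref{lem:gsa_sub}, exactly as anticipated in the remark following that theorem; indeed $n$-SA is just the case of that argument in which the ``parent'' conjunct is absent. The backward direction ($\Leftarrow$) is immediate, since the full family is a subfamily of itself with $k=n$, so it inherits $n$-SA directly from the hypothesis on subfamilies.

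For the forward direction ($\Rightarrow$), I would fix $k\le n$ and an arbitrary subfamily $\{Itp_{S_{i_1}},\ldots,Itp_{S_{i_k}}\}$, and take any inconsistent $\Phi=\{\phi_1,\ldots,\phi_k\}$. First I would embed $\Phi$ into an $n$-element family $\Phi'=\{\phi'_1,\ldots,\phi'_n\}$ by setting $\phi'_{i_j}=\phi_j$ for $1\le j\le k$ and $\phi'_m=\top$ for every remaining index $m\notin\{i_1,\ldots,i_k\}$. Conjoining copies of $\top$ does not affect satisfiability, so $\Phi'$ is still inconsistent and the hypothesis that the whole family has $n$-SA applies, giving $\bigwedge_{j=1}^{n} I_{\phi'_j,S_j}\implies\bot$.

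The key step is then to discard the padding conjuncts. By the standing convention $Itp_S(\top\mid\psi)=\top$, every conjunct $I_{\phi'_m,S_m}$ coming from a padding index $m\notin\{i_1,\ldots,i_k\}$ equals $\top$ and may be dropped. For each surviving index $i_j$, because $\top$ is identified with the empty formula, the $A$-part is literally $\phi_j$ and the $B$-part $\bigwedge_{m\ne i_j}\phi'_m$ is the same conjunction as $\bigwedge_{l\ne j}\phi_l$, so $I_{\phi'_{i_j},S_{i_j}}$ coincides with the interpolant $I_{\phi_j,S_{i_j}}$ required by $k$-SA. Hence the implication collapses to $\bigwedge_{j=1}^{k} I_{\phi_j,S_{i_j}}\implies\bot$, which is precisely $k$-SA for the chosen subfamily; as $\Phi$ and the subfamily were arbitrary, this concludes the direction.

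There is no genuine obstacle here: the argument is essentially index bookkeeping layered on top of the $\top$-padding idea. The only points needing a moment of care are (i) confirming that the padded formula $\Phi'$ remains inconsistent, and (ii) matching the indices so that the configuration $\phi'_{i_j}\mid\bigwedge_{m\ne i_j}\phi'_m$ used by $Itp_{S_{i_j}}$ under $n$-SA on $\Phi'$ reproduces the configuration demanded by $k$-SA on $\Phi$. Both are routine given that $\Phi'\equiv\Phi$ and that $\top$ acts as the identity for conjunction.
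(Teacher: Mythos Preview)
Your proposal is correct and follows exactly the approach the paper intends: the paper's own proof simply reads ``The proof works as in Theorem~\ref{lem:gsa_sub},'' i.e., the same $\top$-padding argument you spelled out. Your write-up is in fact more detailed than the paper's, but the method is identical.
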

\begin{proof}
The proof works as in Theorem~\ref{lem:gsa_sub}.
\end{proof}

\begin{theorem}
\label{lem:pi_sub}
A family  $\{Itp_{S_0},\ldots,Itp_{S_n}\}$ has $n$-PI iff for all $k\leq n$ the subfamily $\{Itp_{S_{0}},\ldots,Itp_{S_{k}}\}$  
has $k$-PI.
\end{theorem}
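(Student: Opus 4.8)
The plan is to mirror the $\top$-padding argument used for $n$-GSA (Theorem~\ref{lem:gsa_sub}) and $n$-SA (Theorem~\ref{lem:sa_sub}), adapting it to the sequential structure of path interpolation. The backward direction is immediate: taking $k=n$, the hypothesis that the subfamily $\{Itp_{S_0},\ldots,Itp_{S_n}\}$ has $n$-PI is literally the claim that the full family has $n$-PI.

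For the forward direction, I would fix $k \leq n$ and an arbitrary inconsistent $\Phi = \{\phi_1,\ldots,\phi_k\}$, and extend it to $\Phi' = \{\phi'_1,\ldots,\phi'_n\}$ by appending $n-k$ copies of $\top$ at the end, so that $\phi'_j = \phi_j$ for $1 \leq j \leq k$ and $\phi'_j = \top$ for $k < j \leq n$. Since $\Phi' \equiv \Phi$, the formula $\Phi'$ is inconsistent, and the $n$-PI hypothesis on the full family yields, for every $0 \leq i \leq n-1$,
\[(I_{\phi'_1\ldots \phi'_i, S_i} \wedge \phi'_{i+1}) \implies I_{\phi'_1\ldots \phi'_{i+1}, S_{i+1}}.\]

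The key step is to observe that for the relevant indices $0 \leq i \leq k-1$ the padding is invisible. Because $\top$ is the empty formula and the padding sits entirely at the tail, we have $\phi'_1 \ldots \phi'_i = \phi_1 \ldots \phi_i$, $\phi'_{i+1} = \phi_{i+1}$, and the $B$-part $\phi'_{i+1} \ldots \phi'_n = \phi_{i+1} \ldots \phi_k$. Hence the partition defining each interpolant over $\Phi'$ coincides exactly with the partition defining the corresponding interpolant over $\Phi$, so $I_{\phi'_1\ldots \phi'_i, S_i} = I_{\phi_1\ldots \phi_i, S_i}$ and likewise for the $S_{i+1}$ term. Restricting the displayed implication to $0 \leq i \leq k-1$ therefore gives precisely the $k$-PI condition for the prefix subfamily $\{Itp_{S_0},\ldots,Itp_{S_k}\}$.

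I do not expect a genuine obstacle; the argument is routine once the padding is placed correctly. The one point demanding care --- and the reason PI quantifies only over prefix subfamilies $\{Itp_{S_0},\ldots,Itp_{S_k}\}$ rather than over arbitrary subsets, as $n$-GSA does --- is that path interpolation is order-sensitive: the inductive chain at step $i$ couples the prefix $\phi_1 \ldots \phi_i$ to the successor $\phi_{i+1}$. Appending the $\top$ conjuncts at the tail, rather than interleaving them, is exactly what guarantees that every prefix and every intermediate $B$-part is preserved, so that no interpolant in the chain is perturbed.
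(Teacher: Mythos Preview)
Your proposal is correct and matches the paper's approach: the paper simply states that the proof works as in Theorem~\ref{lem:sti_sub} (the STI sub-family result), which is precisely the tail-padding with $\top$ argument you spell out, together with the trivial $k=n$ direction. Your added remark about why the padding must go at the tail (order-sensitivity of PI) is accurate and in fact makes explicit what the paper leaves implicit.
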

\begin{proof}
The proof works as in Theorem~\ref{lem:sti_sub}.
\end{proof}

\begin{theorem}
\label{lem:ti_sub}
  For a given tree $T = (V,E)$, a family $\{Itp_{S_i}\}_{i \in V}$ has $T$-TI
  iff for every subtree $T' = (V', E')$ of $T$, the family
  $\{Itp_{S_j}\}_{j \in V'}$ has $T'$-TI.
\end{theorem}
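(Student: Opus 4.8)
The plan is to prove both directions, with the forward direction mirroring the $\top$-conjunct extension technique already used for $n$-GSA (Theorem~\ref{lem:gsa_sub}) and flagged there as applying generally. The reverse direction is immediate: since $T$ is itself a subtree of $T$, assuming that every subtree enjoys its tree-interpolation property gives $T$-TI as a special case.

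For the forward direction I would argue node by node. First observe that the $T$-TI requirement is really a conjunction of local constraints, one per node $i$, namely $\bigwedge_{(i,j)\in E} I_{F_j,S_j} \wedge \phi_i \implies I_{F_i,S_i}$, with $j$ ranging over the children of $i$; call this constraint $(\star_i)$. Fix a connected subtree $T' = (V',E')$ and let $r'$ be its root, i.e.\ the unique node of $V'$ that is an ancestor in $T$ of every other node of $V'$. Given an inconsistent $\Phi' = \{\phi_j\}_{j \in V'}$, I extend it to $\Phi = \{\phi_j\}_{j \in V}$ by setting $\phi_j = \top$ for each $j \in V \setminus V'$. Then $\Phi$ is still inconsistent, so $T$-TI applies and yields $(\star_i)$ at every node $i \in V$.

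The heart of the argument is to show that, for each $i \in V'$, the constraint $(\star_i)$ collapses exactly onto the $T'$-TI constraint at $i$. Here I would exploit the path-closure of $T'$: because $T'$ is connected, whenever $i,m \in V'$ with $i \sqsubseteq m$ in $T$, the whole $i$-to-$m$ path lies in $V'$, so the descendants of $i$ inside $V'$ are precisely its descendants in $T'$. This has three consequences. (i) The extra descendants of $i$ lying in $V \setminus V'$ all carry $\phi_j = \top$, so as a conjunction $F_i$ equals $F'_i$, and likewise $\Phi \setminus F_i$ equals the $T'$-complement $\Phi' \setminus F'_i$; hence $I_{F_i,S_i} = I_{F'_i,S_i}$. (ii) Any child $j$ of $i$ with $j \notin V'$ has all its descendants outside $V'$, again by path-closure, so $F_j \equiv \top$ and $I_{F_j,S_j} = Itp_{S_j}(\top \mid \cdot) = \top$ drops out of the conjunction, leaving exactly the conjuncts indexed by the children of $i$ in $T'$. (iii) At $i = r'$ the complement $\Phi \setminus F_{r'}$ consists solely of $\top$-conjuncts, so $I_{F_{r'},S_{r'}} = Itp_{S_{r'}}(\Phi' \mid \top) = \bot$, matching the root requirement of $T'$-TI. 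Combining (i)--(iii), $(\star_i)$ becomes precisely $\bigwedge_{(i,j)\in E'} I_{F'_j,S_j} \wedge \phi_i \implies I_{F'_i,S_i}$, so $\{Itp_{S_j}\}_{j \in V'}$ has $T'$-TI.

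The main obstacle is bookkeeping rather than depth: one must verify carefully that the $\top$-conjuncts vanish uniformly, both inside each $F_i$ and its complement (so that the black-box interpolants are literally unchanged under the conventions $Itp_S(\top \mid \phi) = \top$, $Itp_S(\phi \mid \top) = \bot$ and the set/conjunction identification) and at the level of the child interpolants (so that pruned branches contribute only trivial $\top$ conjuncts). The connectivity and path-closure of $T'$ are exactly what guarantee that these decompositions line up; without them a descendant of $i$ in $T'$ could differ from a descendant of $i$ in $T$, and the collapse of $(\star_i)$ onto the $T'$-constraint would fail.
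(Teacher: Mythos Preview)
Your proposal is correct and follows exactly the same strategy as the paper: extend the decoration of $T'$ to a decoration of $T$ by padding with $\top$ at the nodes of $V\setminus V'$, apply $T$-TI, and read off the $T'$-TI constraints; the converse is immediate from $T' = T$. The paper's proof is a three-line sketch that states this idea without justification, whereas you supply the bookkeeping---path-closure of $V'$, the collapse of $F_i$ and its complement modulo $\top$-conjuncts, the triviality of interpolants for pruned children, and the root condition at $r'$---that the paper leaves implicit.
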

\begin{proof}
$\rightarrow)$. Assume an inconsistent $\Phi\triangleq \{\phi_{i_1},\ldots,\phi_{i_k}\}$ decorating $T'$. We can 
extend $\Phi$ with $|V'|-|V|$ empty formulae $\top$ to $\Phi'\triangleq \{\phi'_1,\ldots,\phi'_n\}$ decorating $T$. 
If $\{Itp_{S_i}\}_{v_i \in V}$ has the $T$-TI property, for all $v'_i$ in $V$ and in particular for all $v_i$ in $V'$
$$\bigwedge_{(v_i,v_j) \in E'} I_{F_j,S_j} \wedge \phi_i \rightarrow I_{F_i,S_i} $$
$\leftarrow)$. Follows from $T'\equiv T$.
\end{proof}

\section{Other Proofs}

\begin{proposition}
  \label{lem:single_sa_to_bgsa}
  SA implies BGSA in symmetric interpolation systems.
\end{proposition}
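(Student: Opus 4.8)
The plan is to reduce BGSA to $3$-SA by means of the symmetry assumption, exploiting the fact that grouping $\phi_1\phi_2$ against $\phi_3$ is, up to negation, the same as interpolating $\phi_3$ against $\phi_1\phi_2$. First I would fix an inconsistent $\Phi = \{\phi_1,\phi_2,\phi_3\}$ and unfold both properties for the single system $Itp_S$. BGSA requires $I_{\phi_1} \wedge I_{\phi_2} \implies I_{\phi_1\phi_2}$, where $I_{\phi_1} = Itp_S(\phi_1 \mid \phi_2\phi_3)$, $I_{\phi_2} = Itp_S(\phi_2 \mid \phi_1\phi_3)$ and $I_{\phi_1\phi_2} = Itp_S(\phi_1\phi_2 \mid \phi_3)$. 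Since $Itp_S$ has SA, the constant family $\{Itp_S,Itp_S,Itp_S\}$ has $3$-SA, that is $I_{\phi_1} \wedge I_{\phi_2} \wedge I_{\phi_3} \implies \bot$, where $I_{\phi_3} = Itp_S(\phi_3 \mid \phi_1\phi_2)$.

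The key step is to apply symmetry to the bipartition $\{\phi_1\phi_2,\phi_3\}$: because $Itp_S$ is symmetric and the conjunction $(\phi_1\phi_2) \wedge \phi_3 = \Phi$ is inconsistent, we obtain $Itp_S(\phi_1\phi_2 \mid \phi_3) \iff \n{Itp_S(\phi_3 \mid \phi_1\phi_2)}$, i.e., $I_{\phi_1\phi_2} \iff \n{I_{\phi_3}}$. Here I would stress that symmetry is invoked with $A = \phi_1\phi_2$ and $B = \phi_3$, so that the interpolant appearing in BGSA is precisely the negation of the third interpolant appearing in $3$-SA.

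Substituting $\n{I_{\phi_3}}$ for $I_{\phi_1\phi_2}$, the BGSA goal $I_{\phi_1} \wedge I_{\phi_2} \implies I_{\phi_1\phi_2}$ becomes $I_{\phi_1} \wedge I_{\phi_2} \implies \n{I_{\phi_3}}$, which is propositionally equivalent to $I_{\phi_1} \wedge I_{\phi_2} \wedge I_{\phi_3} \implies \bot$. This last implication is exactly the $3$-SA statement already available, so the proof closes.

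The argument is short, and I do not expect a genuine obstacle; the only point requiring care is the bookkeeping of the interpolant subscripts and the verification that symmetry is applied to the coarse bipartition $\phi_1\phi_2 \mid \phi_3$ rather than to a finer one. One should also confirm that a single system \emph{having} SA yields exactly the $3$-SA instance needed, which follows from the definition that a system has a property iff every induced (in particular constant) family has it.
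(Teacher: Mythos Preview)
Your proof is correct and follows essentially the same approach as the paper: start from the $3$-SA instance $I_{\phi_1}\wedge I_{\phi_2}\wedge I_{\phi_3}\Rightarrow\bot$, rewrite it as $I_{\phi_1}\wedge I_{\phi_2}\Rightarrow \n{I_{\phi_3}}$, and use symmetry on the bipartition $\phi_1\phi_2\mid\phi_3$ to identify $\n{I_{\phi_3}}$ with $I_{\phi_1\phi_2}$. Your write-up is simply more explicit about the bookkeeping.
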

\begin{proof}
  Take any inconsistent $\Phi = \{\phi_1,\phi_2,\phi_3\}$.
  If an interpolation system has SA, then:
 \[I_{\phi_1} \wedge I_{\phi_2} \wedge I_{\phi_3} \implies \bot\]
 Equivalently,
 \[I_{\phi_1} \wedge I_{\phi_2} \implies  \n{I_{\phi_3}}\] \vspace{1mm}
 For a symmetric system, $\n{I_{\phi_3}} =  I_{\phi_1
   \phi_2}$.
\end{proof}

\begin{proposition}
\label{prop:sa_to_bgsa}
If a family $\mF = \{Itp_{S_1},\ldots,Itp_{S_{n+1}}\}$ has 
\mbox{$(n+1)$-SA}  and $Itp_{S_{n+1}}$ is symmetric, then $\mF$ has
$n$-GSA.
\end{proposition}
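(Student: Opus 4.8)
The plan is to lift the single-system argument of Proposition~\ref{lem:single_sa_to_bgsa} to the family setting, since the two statements are exact analogues. First I would fix an arbitrary inconsistent $\Phi = \{\phi_1,\ldots,\phi_{n+1}\}$; note that both the hypothesis $(n+1)$-SA and the goal $n$-GSA range over the same class of such $\Phi$ and over the same family $\mF = \{Itp_{S_1},\ldots,Itp_{S_{n+1}}\}$ of $n+1$ systems, so no re-indexing or padding of $\Phi$ with $\top$ conjuncts is needed here.

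Instantiating the $(n+1)$-SA hypothesis on $\Phi$ gives
\[\bigwedge_{i=1}^{n+1} I_{\phi_i,S_i} \implies \bot.\]
The next step is purely propositional: isolating the last conjunct and moving it to the right-hand side yields the equivalent implication
\[\bigwedge_{i=1}^{n} I_{\phi_i,S_i} \implies \n{I_{\phi_{n+1},S_{n+1}}}.\]

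The key step is then to recognise the right-hand side as precisely the interpolant demanded by $n$-GSA. By definition $I_{\phi_{n+1},S_{n+1}} = Itp_{S_{n+1}}(\phi_{n+1} \mid \phi_1\cdots\phi_n)$, and since $Itp_{S_{n+1}}$ is symmetric and $\{\phi_{n+1}, \phi_1\cdots\phi_n\}$ is an inconsistent pair, we have $Itp_{S_{n+1}}(\phi_{n+1}\mid\phi_1\cdots\phi_n) \iff \n{Itp_{S_{n+1}}(\phi_1\cdots\phi_n \mid \phi_{n+1})}$; negating both sides identifies $\n{I_{\phi_{n+1},S_{n+1}}}$ with $Itp_{S_{n+1}}(\phi_1\cdots\phi_n \mid \phi_{n+1}) = I_{\phi_1\cdots\phi_n,S_{n+1}}$. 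Substituting this into the implication above produces exactly $\bigwedge_{i=1}^{n} I_{\phi_i,S_i} \implies I_{\phi_1\cdots\phi_n,S_{n+1}}$, which is $n$-GSA; since $\Phi$ was arbitrary, this establishes the property for $\mF$.

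There is no deep obstacle here; the only point requiring care is the bookkeeping of the two-argument interpolant notation. Specifically, the symmetry condition must be applied to exactly the partition $(\phi_{n+1} \mid \phi_1\cdots\phi_n)$, which explains why the hypothesis demands that precisely the final system $Itp_{S_{n+1}}$ — the one producing both interpolants $I_{\phi_{n+1},S_{n+1}}$ and $I_{\phi_1\cdots\phi_n,S_{n+1}}$ that have to be matched — be symmetric, whereas the remaining systems $Itp_{S_1},\ldots,Itp_{S_n}$ may be arbitrary.
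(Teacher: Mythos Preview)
Your proposal is correct and follows exactly the same argument as the paper's proof: instantiate $(n+1)$-SA on an arbitrary inconsistent $\Phi$, move the last conjunct to the right-hand side, and use symmetry of $Itp_{S_{n+1}}$ to identify $\n{I_{\phi_{n+1},S_{n+1}}}$ with $I_{\phi_1\cdots\phi_n,S_{n+1}}$. Your write-up is simply more explicit about the bookkeeping than the paper's two-line version.
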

\begin{proof}
  Take any inconsistent $\Phi =
  \{\phi_1,\ldots,\phi_n\}$. Since $\mF$ has $(n+1)$-SA, then
  $I_{\phi_1,S_1} \wedge \cdots \wedge I_{\phi_{n+1},S_{n+1}} \implies
  \bot$.  Assuming $Itp_{S_{n+1}}$ is symmetric, 
  $\n{I_{\phi_{n+1},S_{n+1}}} = I_{\phi_1,\ldots,\phi_n,S_{n+1}}$ and
  the thesis is proved.
\end{proof}

\setcounter{theorem}{2}
\begin{theorem}
If a family $\mF = 
\{Itp_{S_0},\ldots,Itp_{S_n},Itp_{T_1},\ldots,Itp_{T_n}\}$ has
$n$-STI then (1) $\{Itp_{S_0},\ldots,Itp_{S_n}\}$ has
$n$-PI and (2) $\{Itp_{T_1},\ldots,Itp_{T_n}\}$ has 
$n$-SA.
\end{theorem}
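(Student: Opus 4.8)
The plan is to derive both conclusions directly from the definition of $n$-STI, relying only on two elementary facts about Craig interpolants: that $\phi \implies I_\phi$ for the formula being interpolated, and that the degenerate interpolants satisfy $I_\top = \top$ and $I_\Phi = \bot$ (recall the conventions $Itp_S(\top \mid \psi) = \top$ and $Itp_S(\psi \mid \top) = \bot$).

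For part (1), I would observe that the $n$-PI condition differs from the $n$-STI condition only in that the transition formula $\phi_{i+1}$ appears in place of its interpolant $I_{\phi_{i+1},T_{i+1}}$. Since the defining implication of a Craig interpolant gives $\phi_{i+1} \implies I_{\phi_{i+1},T_{i+1}}$, I can weaken the antecedent of the $n$-STI implication: for each $0 \leq i \leq n-1$,
\[
I_{\phi_1\ldots\phi_i,S_i} \wedge \phi_{i+1} \implies I_{\phi_1\ldots\phi_i,S_i} \wedge I_{\phi_{i+1},T_{i+1}} \implies I_{\phi_1\ldots\phi_{i+1},S_{i+1}},
\]
where the first implication is monotonicity of $\wedge$ under $\phi_{i+1} \implies I_{\phi_{i+1},T_{i+1}}$ and the second is exactly $n$-STI. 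This is precisely the $n$-PI requirement for $\{Itp_{S_0},\ldots,Itp_{S_n}\}$.

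For part (2), I would chain the $n$-STI implications across all indices $i$ and close the chain with the two boundary conditions. By induction on $k$ I would establish the claim $\bigwedge_{i=1}^{k} I_{\phi_i,T_i} \implies I_{\phi_1\ldots\phi_k,S_k}$. The base case $k=1$ is the $i=0$ instance of $n$-STI, which reads $I_{\top,S_0} \wedge I_{\phi_1,T_1} \implies I_{\phi_1,S_1}$ and collapses to $I_{\phi_1,T_1} \implies I_{\phi_1,S_1}$ because $I_{\top,S_0} = \top$. For the inductive step I would conjoin the hypothesis with $I_{\phi_{k+1},T_{k+1}}$ and apply the $i=k$ instance $I_{\phi_1\ldots\phi_k,S_k} \wedge I_{\phi_{k+1},T_{k+1}} \implies I_{\phi_1\ldots\phi_{k+1},S_{k+1}}$. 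Instantiating the claim at $k=n$ and using $I_{\phi_1\ldots\phi_n,S_n} = I_\Phi = \bot$ gives $\bigwedge_{i=1}^{n} I_{\phi_i,T_i} \implies \bot$, which is exactly $n$-SA for $\{Itp_{T_1},\ldots,Itp_{T_n}\}$.

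Neither part presents a genuine difficulty; the argument is essentially a telescoping of the STI conditions. The only point that warrants care is the correct treatment of the two degenerate endpoints of the chain, namely $I_\top = \top$ at the start (which lets the $i=0$ instance anchor the induction) and $I_\Phi = \bot$ at the end (which is what makes the telescoped implication in part (2) collapse to falsity); everything else is routine propositional reasoning.
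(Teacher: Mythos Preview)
Your proposal is correct and follows essentially the same approach as the paper: part (1) via $\phi_{i+1} \implies I_{\phi_{i+1},T_{i+1}}$, and part (2) by telescoping the STI implications and closing with $I_{\phi_1\cdots\phi_n} = \bot$. You simply spell out the chaining more explicitly (as an induction) than the paper does.
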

\begin{proof}
  $(1)$ It follows from  $\phi_i \implies I_{\phi_i,S_i}$ for every $i$.\\

\nin  $(2)$. Take any inconsistent $\Phi =
  \{\phi_1,\ldots,\phi_{n}\}$.  If $\mF$ has $n$-STI,
  then, for $0\leq i \leq n-1$:
\[I_{\phi_1\cdots \phi_i,S_i} \wedge I_{\phi_{i+1},T_{i+1}} \implies I_{\phi_1\cdots \phi_{i+1},S_{i+1}}\]
Since $I_{\phi_1 \cdots \phi_n} = \bot$, we get $I_{\phi_1,T_1} \wedge \cdots \wedge I_{\phi_n,T_n} \implies \bot$.
\end{proof}

\begin{theorem}
A family $\mF = \{Itp_{S_0},\ldots,Itp_{S_n},Itp_{T_1},\ldots,Itp_{T_n}\}$
has $n$-STI iff
$\{Itp_{S_i},Itp_{T_{i+1}},Itp_{S_{i+1}}\}$ has BGSA for
all  $0\leq i \leq n-1$.
\end{theorem}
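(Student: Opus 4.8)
The plan is to prove both directions by exploiting a tight, essentially syntactic correspondence between the single STI implication at index $i$ and the defining implication of BGSA for the triple $\{Itp_{S_i},Itp_{T_{i+1}},Itp_{S_{i+1}}\}$ (this is the same viewpoint already used in the proof of Theorem~\ref{thm:single-prop-collapse}, where STI is observed to be ``syntactically equivalent to BGSA''). The key observation is that, for a fixed inconsistent $\Phi=\{\phi_1,\ldots,\phi_n\}$ and a fixed $i$, regrouping $\Phi$ into the three blocks $\psi_1=\phi_1\cdots\phi_i$, $\psi_2=\phi_{i+1}$, $\psi_3=\phi_{i+2}\cdots\phi_n$ turns the three interpolants in the STI condition into exactly the three interpolants of BGSA for that triple on $\{\psi_1,\psi_2,\psi_3\}$: we have $I_{\phi_1\cdots\phi_i,S_i}=Itp_{S_i}(\psi_1\mid\psi_2\psi_3)$, $I_{\phi_{i+1},T_{i+1}}=Itp_{T_{i+1}}(\psi_2\mid\psi_1\psi_3)$, and $I_{\phi_1\cdots\phi_{i+1},S_{i+1}}=Itp_{S_{i+1}}(\psi_1\psi_2\mid\psi_3)$, while $\psi_1\psi_2\psi_3=\phi_1\cdots\phi_n$ is inconsistent. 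Thus, once the partition is aligned this way, the STI implication at $i$ and the BGSA implication for the $i$-th triple are literally the same formula.

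For the backward direction I would assume every triple has BGSA and derive $n$-STI directly: given any inconsistent $\Phi$ and any $0\le i\le n-1$, apply the BGSA of $\{Itp_{S_i},Itp_{T_{i+1}},Itp_{S_{i+1}}\}$ to the three-block split above. Inconsistency of $\Phi$ forces the split to be inconsistent, and the BGSA conclusion is exactly the STI implication at $i$; ranging over $i$ yields $n$-STI. This direction is routine and is the one actually used downstream in Corollary~\ref{lem:bgsa_to_sa}.

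For the forward direction I would start from $n$-STI, fix $i$ and an arbitrary inconsistent $\{\psi_1,\psi_2,\psi_3\}$, and construct an $n$-block instance realizing the split: place $\psi_1$ in the prefix $\phi_1\cdots\phi_i$, set $\phi_{i+1}=\psi_2$, and place $\psi_3$ in the suffix $\phi_{i+2}\cdots\phi_n$, padding unused components with $\top$ and using the standing conventions $Itp_S(\top\mid\phi)=\top$ and $Itp_S(\phi\mid\top)=\bot$ together with the identification of conjunction with sets; the STI implication for this padded $\Phi$ is then the desired BGSA implication. I expect the main obstacle to be exactly this padding/realizability step. One must first check that inserting $\top$ conjuncts leaves the $A/B$ partition seen by each of the three systems unchanged, so that the padded interpolants coincide with those of the three-block split. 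More delicately, the split accommodates an \emph{arbitrary} $\psi_1$ only when the prefix is nonempty and an arbitrary $\psi_3$ only when the suffix is nonempty: this holds for all interior indices $1\le i\le n-2$, whereas the endpoints $i=0$ and $i=n-1$ collapse the prefix, resp.\ suffix, forcing $\psi_1=\top$, resp.\ $\psi_3=\top$. Matching the universal quantifier at these two endpoints — where STI only ever exercises the boundary systems $S_0$ and $S_n$ on the trivial argument — is the crux that needs the most care.
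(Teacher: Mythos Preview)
Your approach is exactly the paper's: the backward direction regroups $\Phi$ into three blocks $\psi_1=\phi_1\cdots\phi_i$, $\psi_2=\phi_{i+1}$, $\psi_3=\phi_{i+2}\cdots\phi_n$ and applies BGSA, while the forward direction pads an arbitrary inconsistent triple with copies of $\top$ to manufacture an $n$-ary instance on which the $i$-th STI implication is literally the BGSA implication. The endpoint concern you raise is legitimate and is \emph{not} addressed by the paper either---its construction places $\phi_1,\phi_2,\phi_3$ at positions $i,i+1,i+2$, which is ill-defined at $i=0$ and $i=n-1$, and indeed $n$-STI never exercises $Itp_{S_0}$ or $Itp_{S_n}$ on nontrivial arguments; the paper simply glosses over this, so your proposal is at least as careful as the original.
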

\begin{proof}
  $(\rightarrow)$.  Take any inconsistent
  $\Phi = \{\phi_1,\phi_2,\phi_3\}$.  For $0\leq i \leq
  n-1$, extend $\Phi$ to a $\Phi'=
  \{\phi'_1,\ldots,\phi'_n\}$ by adding $(n-3)$ copies of $\top$, so
  that $\phi'_i = \phi_1$, $\phi'_{i+1} = \phi_2$,
  $\phi'_{i+2} =  \phi_3$. Since $\mF$ has $n$-STI:
\[
I_{\phi'_1\cdots \phi'_i,S_i} \wedge I_{\phi'_{i+1},T_{i+1}}
\implies I_{\phi'_1\cdots \phi'_{i+1},S_{i+1}}
\]
Hence, by construction:
\[
I_{\phi_1,S_i} \wedge I_{\phi_2,T_{i+1}} \implies I_{\phi_1 \phi_2,S_{i+1}}
\]

\nin $(\leftarrow)$
Take any inconsistent $\Phi = \{\phi_1,\ldots,\phi_{n}\}$.
Since $\{Itp_{S_i},Itp_{T_{i+1}},Itp_{S_{i+1}}\}$ has BGSA,
it follows that for $\{\phi'_1,\phi'_2,\phi'_3\}$, where $\phi'_1=\phi_1 \wedge \cdots \wedge \phi_i$, $\phi'_2=\phi_{i+1}$, $\phi'_3=\phi_{i+2}
\wedge \cdots \wedge \phi_n$:
\[I_{\phi'_1,S_i} \wedge I_{\phi'_2,T_{i+1}} \implies I_{\phi'_1 \phi'_2,S_{i+1}} \]
Hence, by construction:
\[I_{\phi_1\ldots \phi_i,S_i} \wedge I_{\phi_{i+1},T_{i+1}} \implies
I_{\phi_1\ldots \phi_{i+1},S_{i+1}}\]
%
\end{proof}

\begin{theorem}
  Given a tree $T = (V,E)$ if a family ${\mF = \{Itp_{S_i}\}_{i \in V}}$ has $T$-TI, then, for every
  parent $i_{k+1}$ and its children $i_1,\ldots,i_{k}$:
  \begin{enumerate}
  \item If $i_{k+1}$ is the root,
    $\{Itp_{S_{i_1}},\ldots,Itp_{S_{i_{k}}}\}$ has $k$-SA.
  \item Otherwise,
    $\{Itp_{S_{i_1}},\ldots,Itp_{S_{i_{k}}},Itp_{S_{i_{k+1}}}\}$ has
    $k$-GSA.
  \end{enumerate}
\end{theorem}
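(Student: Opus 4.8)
The plan is to derive each $n$-GSA (or $n$-SA) instance at a parent from the single tree-interpolation inequality localized at that parent. Fix a parent $i_{k+1}$ with children $i_1,\ldots,i_k$. The $T$-TI property gives, for every node, the inequality $\bigwedge_{(i,j)\in E} I_{F_j,S_j} \wedge \phi_i \implies I_{F_i,S_i}$. Applying this at the node $i_{k+1}$, whose children are exactly $i_1,\ldots,i_k$, yields
\[
I_{F_{i_1},S_{i_1}} \wedge \cdots \wedge I_{F_{i_k},S_{i_k}} \wedge \phi_{i_{k+1}} \implies I_{F_{i_{k+1}},S_{i_{k+1}}}.
\]
The first step is therefore just to instantiate the $T$-TI definition at this single node and read off this implication.

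The second step is to reinterpret this implication over a fresh three-way (or $(k+1)$-way) partition of $\Phi$ so that it becomes literally a GSA/SA statement. I would set $\psi_m = F_{i_m}$ for $m=1,\ldots,k$ (the subtrees rooted at the children) and $\psi_{k+1} = \phi_{i_{k+1}} \cup (\Phi \setminus (F_{i_{k+1}}))$, i.e.\ the parent's own formula together with everything outside the parent's subtree. By the structure of the tree, the sets $F_{i_1},\ldots,F_{i_k}$, the singleton $\phi_{i_{k+1}}$, and the remainder $\Phi\setminus F_{i_{k+1}}$ partition $\Phi$, so $\psi_1,\ldots,\psi_{k+1}$ is a genuine partition and $\Phi = \psi_1\cdots\psi_{k+1}$ is inconsistent. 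The key observation is that $I_{F_{i_m},S_{i_m}} = Itp_{S_{i_m}}(F_{i_m}\mid \Phi\setminus F_{i_m}) = I_{\psi_m,S_{i_m}}$, so each child's tree-interpolant is exactly the simultaneous-abstraction interpolant of $\psi_m$ against the rest; this is what lets me match the $T$-TI inequality to the GSA template $\bigwedge_{m=1}^{k} I_{\psi_m,S_{i_m}} \implies I_{\psi_1\cdots\psi_k,S_{i_{k+1}}}$.

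For case (2), where $i_{k+1}$ is not the root, I must also handle the parent term: since $\phi_{i_{k+1}} \implies I_{\phi_{i_{k+1}},S_{i_{k+1}}}$ by the defining property of Craig interpolants, I can weaken $\phi_{i_{k+1}}$ to $I_{\phi_{i_{k+1}},S_{i_{k+1}}}$ in the antecedent and still have a valid implication; combined with $I_{F_{i_{k+1}},S_{i_{k+1}}}$ on the right this is exactly $k$-GSA with the parent playing the role of $\phi_{i_{k+1}}$. For case (1), where $i_{k+1}$ is the root, we have $F_{i_{k+1}} = \Phi$ and hence $I_{F_{i_{k+1}},S_{i_{k+1}}} = \bot$ and $\phi_{i_{k+1}}$ is absorbed (or the parent formula is empty in the relevant sense), so the inequality collapses to $\bigwedge_{m=1}^k I_{\psi_m,S_{i_m}} \implies \bot$, which is precisely $k$-SA.

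The main obstacle I anticipate is the bookkeeping in the second step: verifying that $\psi_1,\ldots,\psi_{k+1}$ really form a partition of $\Phi$ and, crucially, that each child interpolant $I_{F_{i_m},S_{i_m}}$ computed in the $T$-TI setting coincides with the GSA interpolant $I_{\psi_m,S_{i_m}}$ for the new partition. This hinges on the fact that the $B$-side seen by child $i_m$ in the tree, namely $\Phi\setminus F_{i_m}$, is the same as the complement of $\psi_m$ in the coarser partition, which follows from the descendant relation $\sqsubseteq$ but should be argued explicitly. Once the partition is checked, matching the two inequalities is routine, and the weakening $\phi_{i_{k+1}} \implies I_{\phi_{i_{k+1}},S_{i_{k+1}}}$ for case (2) is immediate from the definition of interpolation.
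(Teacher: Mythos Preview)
There is a genuine gap, and it stems from running the argument in the wrong direction. The claim ``$\{Itp_{S_{i_1}},\ldots,Itp_{S_{i_{k+1}}}\}$ has $k$-GSA'' is a universal statement over \emph{all} inconsistent $(k{+}1)$-tuples $\Psi=\{\psi_1,\ldots,\psi_{k+1}\}$. You start instead from a generic tree decoration $\Phi$, instantiate the $T$-TI inequality at $i_{k+1}$, and then \emph{define} $\psi_m := F_{i_m}$. That only yields the GSA inequality for the particular $\Psi$'s arising this way; you never argue surjectivity. The paper proceeds in the opposite direction: given an arbitrary $\Psi$, it \emph{constructs} a decoration of $T$ by placing $\psi_m$ at child $i_m$, $\top$ at all proper descendants of the children, $\top$ at the parent $i_{k+1}$ itself, and $\psi_{k+1}$ at every node outside the subtree rooted at $i_{k+1}$; then $T$-TI at $i_{k+1}$ gives exactly the desired inequality.

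The direction error shows up concretely in case (2). Your step ``since $\phi_{i_{k+1}} \Rightarrow I_{\phi_{i_{k+1}},S_{i_{k+1}}}$, I can weaken $\phi_{i_{k+1}}$ to $I_{\phi_{i_{k+1}},S_{i_{k+1}}}$ in the antecedent'' is logically invalid: from $P \wedge \phi \Rightarrow Q$ and $\phi \Rightarrow I$ one \emph{cannot} conclude $P \wedge I \Rightarrow Q$ (that would require $I \Rightarrow \phi$). Worse, the $k$-GSA template has no conjunct for $\psi_{k+1}$ on the left at all, and your right-hand side $I_{F_{i_{k+1}},S_{i_{k+1}}}$ is the interpolant of $\psi_1\cdots\psi_k\cup\{\phi_{i_{k+1}}\}$ against $\Phi\setminus F_{i_{k+1}}$, not of $\psi_1\cdots\psi_k$ against $\psi_{k+1}$ as GSA demands. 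The paper avoids all of this precisely by decorating the parent with $\top$: then $\phi_{i_{k+1}}$ vanishes from the antecedent and $F_{i_{k+1}}$ coincides with $\psi_1\cdots\psi_k$, so the $T$-TI inequality at $i_{k+1}$ \emph{is} the $k$-GSA inequality, with no weakening needed. The same issue afflicts your case (1): ``$\phi_{i_{k+1}}$ is absorbed'' is unjustified unless you have arranged $\phi_{i_{k+1}}=\top$.
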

\begin{proof}
  Take any inconsistent
  $\Phi = \{\phi_{i_1},\ldots,\phi_{i_{k+1}}\}$. Consider a
  parent $i_{k+1}$ and its children $i_1,\ldots,i_{k}$.  If
  $i_{k+1}$ is not the root, extend $\Phi$ to a $\Phi'$ in such a way that: the
  children are decorated with $\phi_{i_1},\ldots,\phi_{i_{k}}$, all
  their descendants and $i_{k+1}$ with $\top$, all the nodes
  external to the subtree rooted in $i_{k+1}$ with $\phi_{n+1}$.
  Since $\mF$ has $T$-TI, then at node $i_{k+1}$:
\[\bigwedge_{(i_{k+1},j) \in E} I_{F_j,S_j} \wedge \phi_{i_{k+1}} \implies I_{F_{i_{k+1}},S_{i_{k+1}}} \]
that is:
\[\bigwedge_{i \in \{i_1 \ldots i_k\}} I_{\phi_i,S_i} \wedge \top \implies I_{\phi_{i_1}\cdots \phi_{i_k},S_{k+1}}\]
If $i_{k+1}$ is the root, the proof simply ignores the presence of $\phi_{i_{k+1}}$ and $S_{i_{k+1}}$.
\end{proof}

\begin{theorem}
 Given a  tree $T = (V,E)$, a family $\mF = \{Itp_{S_i}\}_{i \in V}$ has $T$-TI
  if, for every node $i_{k+1}$ and its children $i_1,\ldots,i_{k}$, there exists $T_{i_{k+1}}$ such that:
  \begin{enumerate}
  \item If $i_{k+1}$ is the root,
    $\{Itp_{S_{i_1}},\ldots,Itp_{S_{i_{k}}},Itp_{T_{i_{k+1}}}\}$ has ${(k+1)}$-SA.
  \item Otherwise, 
   $\{Itp_{S_{i_1}},\ldots,Itp_{T_{i_{k+1}}},Itp_{S_{i_{k+1}}}\}$ has \mbox{${(k+1)}$-GSA}.
  \end{enumerate}
\end{theorem}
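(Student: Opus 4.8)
The plan is to verify the tree-interpolation condition at each node of $T$ separately (recall that $T$-TI is really a family of per-node implications, one for each non-leaf), by instantiating the local hypothesis on a \emph{coarsened} partition of $\Phi$ in which every child subtree is treated as a single block. Fix an arbitrary inconsistent $\Phi = \{\phi_1,\ldots,\phi_n\}$ decorating $T$, a node $i_{k+1}$ with children $i_1,\ldots,i_k$, and recall that $F_{i_l}$ is the conjunction of all $\phi_j$ in the subtree rooted at $i_l$. I would apply the hypothesized $(k+1)$-GSA property of $\{Itp_{S_{i_1}},\ldots,Itp_{S_{i_k}},Itp_{T_{i_{k+1}}},Itp_{S_{i_{k+1}}}\}$ to $\Phi$ partitioned as $\psi_l = F_{i_l}$ for $1 \le l \le k$, $\psi_{k+1} = \phi_{i_{k+1}}$, and $\psi_{k+2} = \Phi \setminus F_{i_{k+1}}$. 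Since $F_{i_{k+1}} = \{\phi_{i_{k+1}}\} \cup \bigcup_{l \le k} F_{i_l}$, these blocks partition $\Phi$, so the instance is well-formed and inconsistent.

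The key bookkeeping step is to check that the interpolants produced over this coarsened partition coincide with the genuine tree interpolants. For each child $l$, the $A$-part is $\psi_l = F_{i_l}$ and the $B$-part is the union of the remaining blocks, which equals $\Phi \setminus F_{i_l}$ (using $F_{i_l} \subseteq F_{i_{k+1}} \subseteq \Phi$); hence $I_{\psi_l,S_{i_l}} = I_{F_{i_l},S_{i_l}}$. Likewise the GSA conclusion $I_{\psi_1 \cdots \psi_{k+1},S_{i_{k+1}}}$ has $A$-part $F_{i_{k+1}}$ and $B$-part $\Phi \setminus F_{i_{k+1}}$, so it equals $I_{F_{i_{k+1}},S_{i_{k+1}}}$. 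Thus $(k+1)$-GSA yields
\[
I_{F_{i_1},S_{i_1}} \wedge \cdots \wedge I_{F_{i_k},S_{i_k}} \wedge I_{\phi_{i_{k+1}},T_{i_{k+1}}} \implies I_{F_{i_{k+1}},S_{i_{k+1}}}.
\]
To recover the TI condition I would discharge the auxiliary system $T_{i_{k+1}}$ using Craig's property $\phi_{i_{k+1}} \implies I_{\phi_{i_{k+1}},T_{i_{k+1}}}$: replacing the abstracted parent by the concrete $\phi_{i_{k+1}}$ only strengthens the antecedent, giving exactly $\bigwedge_{(i_{k+1},j)\in E} I_{F_j,S_j} \wedge \phi_{i_{k+1}} \implies I_{F_{i_{k+1}},S_{i_{k+1}}}$. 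When $i_{k+1}$ is the root, $F_{i_{k+1}} = \Phi$ so $\psi_{k+2}$ is empty and $I_{F_{i_{k+1}},S_{i_{k+1}}} = \bot$; here I would invoke the $(k+1)$-SA hypothesis instead, and the identical argument produces the root case of TI.

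The main obstacle is not conceptual but the careful verification that collapsing subtrees preserves every $A/B$ split, i.e. that the complement of $\psi_l$ in the coarsened partition is exactly $\Phi \setminus F_{i_l}$, and similarly for the conclusion. At the black-box level this holds because an interpolation system's output depends only on the bipartition of the formula and not on how the two sides are internally grouped, so the coarsened and full partitions yield identical interpolants; making this rigorous via the subtree inclusion $F_{i_l} \subseteq F_{i_{k+1}} \subseteq \Phi$ is the crux of the argument. The asymmetry flagged just before the statement — that the parent's own $\phi_{i_{k+1}}$ is not abstracted in TI whereas all $k+1$ blocks are abstracted in $(k+1)$-GSA — is precisely what the extra system $T_{i_{k+1}}$ together with the step $\phi_{i_{k+1}} \implies I_{\phi_{i_{k+1}},T_{i_{k+1}}}$ is designed to bridge.
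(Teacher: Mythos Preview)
Your proposal is correct and follows essentially the same argument as the paper: instantiate the local $(k+1)$-GSA (respectively $(k+1)$-SA at the root) on the coarsened partition $\{F_{i_1},\ldots,F_{i_k},\phi_{i_{k+1}},\Phi\setminus F_{i_{k+1}}\}$, then replace $I_{\phi_{i_{k+1}},T_{i_{k+1}}}$ by the stronger $\phi_{i_{k+1}}$ via the Craig condition. Your bookkeeping paragraph verifying that each coarsened $A/B$ split coincides with the genuine tree split is more explicit than the paper's proof, which simply writes down the resulting implication without spelling out that step.
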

\begin{proof}
  Take any inconsistent $\Phi = \{\phi_{1},\ldots,\phi_{{n}}\}$. 
  Consider a parent $i_{k+1}$ different from the root and its
  children $i_1,\ldots,i_{k}$.\\
  If  $\{Itp_{S_{i_1}},\ldots,Itp_{T_{i_{k+1}}},Itp_{S_{i_{k+1}}}\}$
  has \mbox{$k$-GSA}, for 
  $\{F_{i_1},\ldots,F_{i_k},\phi_{i_{k+1}},\Phi \setminus (\bigcup
  F_{i_j} \cup \{\phi_{i_{k+1}}\})\}$:
  \[\bigwedge_{i\in \{i_1 \ldots i_{k}\}} I_{F_i,S_i} \wedge
  I_{\phi_{i_{k+1}},T_{i_{k+1}}} \implies I_{F_{i_{k+1}},S_{i_{k+1}}}\]
  The thesis follows since $\phi_{i_{k+1}} \implies I_{\phi_{i_{k+1}},T_{i_{k+1}}}$.
  If $i_{k+1}$ is the root, $I_{F_{i_{k+1}},S_{i_{k+1}}}=\bot$ and $S_{i_{k+1}}$ is superfluous.
\end{proof}

\begin{figure}[ht]
\vspace{-0.5cm}
  \centering
  \begin{minipage}{0.45\textwidth}
 \begin{tikzpicture}[node distance=1cm]
  \node (2) {$0$};
  \node (1) [right=1cm of 2] {$\underset{\phi_{n+1}}{n+1}$};
     \node[draw=none] (6) [left of=2] {};
   \node[draw=none] (7) [right of=2] {};
  \node (3) [below of=6] {$\underset{\phi_1}{1}$};
  \node (4) [below of=7] {$\underset{\phi_n}{n}$};
  \node (5) [below of=2] {$\cdots$};

  \path[->,every node/.style={font=\sffamily\small}]
    (1) edge node {} (2)
    (2) edge node {} (3)
	edge node  {} (4);    
\end{tikzpicture}
\caption{$T^n_{GSA}$.}
\label{fig:gsa-tree}
\end{minipage}
\begin{minipage}{0.45\textwidth}
\vspace{-0.8cm}
 \begin{tikzpicture}[node distance=1cm]
   \node (5)  {$n+1$};
  \node (1) [below of=5] {$\underset{\phi_1}{1}$};
  \node (2) [right=1cm of 1] {$\underset{\phi_2}{2}$};
  \node (3) [right=1cm of 2] {$\cdots$};
  \node (4) [right=1cm of 3] {$\underset{\phi_n}{n}$};
  \node (6) [above of=2] {$n+2$};
  \node (7) [above of=3] {$\cdots$};
  \node (8) [above of=4] {$2n$};
  \node (9) [above of=6] {$\overset{}{}$};

  \path[->,every node/.style={font=\sffamily\small}]
    (8) edge node {} (7)
	edge node {}  (4)
    (7) edge node {} (6)
    (6) edge node {} (5)
	edge node {} (2) 
    (5) edge node {} (1);
\end{tikzpicture}
  \caption{$T^n_{STI}$.}
  \label{fig:sti-tree}
\end{minipage}
\vspace{-0.5cm}
\end{figure}

\begin{theorem}
  If a family $\mF = \{Itp_{S_{n+1}}, Itp_{S_1}, \ldots,
  Itp_{S_{n+1}}\}$ has $T_{GSA}^n$-TI, then $\{Itp_{S_1},
  \ldots, Itp_{S_{n+1}}\}$ has $n$-GSA.
\end{theorem}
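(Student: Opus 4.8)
The plan is to instantiate the definition of $T$-TI for the specific tree $T^n_{GSA}$ of Fig.~\ref{fig:gsa-tree} and to extract, among the family of constraints it yields, exactly the one attached to the internal node $0$; this single constraint will coincide syntactically with the $n$-GSA implication.

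First I would fix the decoration and the system assignment dictated by $\mF$ and $T^n_{GSA}$. The root $n+1$ carries $\phi_{n+1}$, its unique child $0$ carries the empty formula $\top$, and the children $1,\ldots,n$ of node $0$ carry $\phi_1,\ldots,\phi_n$. Reading the family $\mF = \{Itp_{S_{n+1}}, Itp_{S_1}, \ldots, Itp_{S_{n+1}}\}$ positionally, node $0$ is assigned $Itp_{S_{n+1}}$, nodes $1,\ldots,n$ are assigned $Itp_{S_1},\ldots,Itp_{S_n}$, and the root $n+1$ is assigned $Itp_{S_{n+1}}$ (whose choice is immaterial, since the root interpolant is $\bot$).

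Next I would compute the sets $F_i$ from the definition of $T$-TI. For each leaf $i\in\{1,\ldots,n\}$ we have $F_i = \{\phi_i\}$, hence $I_{F_i,S_i} = I_{\phi_i,S_i}$; for node $0$ we have $F_0 = \{\top,\phi_1,\ldots,\phi_n\}$, i.e. the conjunction $\phi_1\cdots\phi_n$, hence $I_{F_0,S_{n+1}} = I_{\phi_1\ldots\phi_n,S_{n+1}}$; and since $n+1$ is the root, $F_{n+1} = \Phi$ and $I_{F_{n+1},S_{n+1}} = \bot$. I would then write down the $T$-TI constraint at node $0$, whose children are precisely $1,\ldots,n$ and whose own decoration is $\top$:
\[
\bigwedge_{(0,j)\in E} I_{F_j,S_j} \wedge \top \implies I_{F_0,S_{n+1}} .
\]
Substituting the computed $F_j$ and $F_0$, this is literally $\bigwedge_{i=1}^{n} I_{\phi_i,S_i} \implies I_{\phi_1\ldots\phi_n,S_{n+1}}$, which is exactly $n$-GSA for $\{Itp_{S_1},\ldots,Itp_{S_{n+1}}\}$.

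There is no hard analytic step here: the argument is pure bookkeeping, and the conclusion uses only the node-$0$ constraint (the leaf constraints $\phi_i \implies I_{\phi_i,S_i}$ and the root constraint are not required). The one point demanding care --- and the likeliest source of error --- is the system assignment: the parent interpolant that $T$-TI produces at node $0$ is computed with $S_0$, so the whole statement hinges on having placed $Itp_{S_{n+1}}$ at that node, ensuring that $I_{F_0,S_0}$ is the target $I_{\phi_1\ldots\phi_n,S_{n+1}}$ demanded by $n$-GSA rather than an interpolant of some unrelated system.
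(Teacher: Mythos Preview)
Your proposal is correct and follows essentially the same approach as the paper: decorate node $0$ with $\top$ and nodes $1,\ldots,n+1$ with $\phi_1,\ldots,\phi_{n+1}$, assign $Itp_{S_{n+1}}$ to node $0$, and read off the $T$-TI constraint at node $0$ to obtain the $n$-GSA implication. Your explicit remark that the whole argument hinges on placing $Itp_{S_{n+1}}$ at node $0$ is exactly the point the paper leaves implicit in the positional listing of $\mF$.
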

\begin{proof}
Let $T^n_{GSA} = (V, E)$ be the tree shown in
Fig.~\ref{fig:gsa-tree}, where
$V = \{0, \ldots, n+1\}$ and $E = \{ (0,i) \mid 1 \leq i \leq n\} \cup \{(n+1,0)\} $.

  Take any inconsistent $\Phi =
  \{\phi_1,\ldots,\phi_{n+1}\}$. We decorate node $0$
  with $\top$, all other nodes $i$ with $\phi_i$, for $1\leq i \leq n+1$.  Since
  $\mF$ has $T$-TI, then at node $0$:
\[\bigwedge_{(0,j) \in E} I_{F_j,S_j} \wedge \top \implies I_{F_{0},S_{n+1}} \]
Hence, by construction:
\[\bigwedge_{i=1}^{n} I_{\phi_i,S_i} \implies I_{\phi_1\ldots \phi_n,S_{n+1}}\]
\end{proof}

\begin{theorem}
  If a family $\mF =
  \{Itp_{S_0},\ldots,Itp_{S_n}\}\cup{}$ $\{Itp_{T_1},\ldots,Itp_{T_n}\}$ has 
  $T^n_{STI}$-TI, then it has $n$-STI.
\end{theorem}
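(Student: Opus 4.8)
The plan is to exhibit a specific tree $T^n_{STI}$ (shown in Fig.~\ref{fig:sti-tree}) whose tree-interpolation constraints, when instantiated on a formula padded with $\top$-conjuncts, collapse exactly into the $n$-STI inductive inequalities. Concretely, I would take the tree on $2n$ nodes $V = \{1,\ldots,2n\}$ where nodes $1,\ldots,n$ are leaves decorated with $\phi_1,\ldots,\phi_n$ (these carry the ``transition'' systems $Itp_{T_i}$), and nodes $n+1,\ldots,2n$ form a right-spine of internal nodes (carrying the ``state'' systems $Itp_{S_i}$), where node $n+i$ has as children the transition-leaf $i$ and the next spine node $n+i-1$. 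This is the same edge structure as the tree $T_{STI}$ used in the single-system case $(3\to 4)$ of Theorem~\ref{thm:single-prop-collapse}, now generalized to a family.

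First I would compute the forest sets $F_i$ for this tree: for the leaves, $F_i = \{\phi_i\}$ for $1 \leq i \leq n$, so that $I_{F_i}$ is the transition interpolant $I_{\phi_i,T_i}$; for the spine nodes, $F_{n+i} = \{\phi_1,\ldots,\phi_i\}$, so that $I_{F_{n+i}}$ is the state interpolant $I_{\phi_1\cdots\phi_i,S_i}$. Next I would write out the $T^n_{STI}$-TI constraint at each spine node $n+i$. Since the children of node $n+i$ are the leaf $i$ and the spine node $n+i-1$, and the node $n+i$ itself is decorated with $\top$ (it carries no formula of its own), the tree-interpolation inequality at that node reads
\[
I_{F_{n+i-1}} \wedge I_{F_i} \wedge \top \implies I_{F_{n+i}},
\]
which upon substituting the forest sets becomes exactly $I_{\phi_1\cdots\phi_{i-1},S_{i-1}} \wedge I_{\phi_i,T_i} \implies I_{\phi_1\cdots\phi_i,S_i}$, i.e.\ the $n$-STI condition (after reindexing). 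I would note that the root node $2n$ satisfies $I_{F_{2n}} = I_{\Phi} = \bot$, matching the boundary condition of STI, and the base of the spine handles the $S_0$ boundary.

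The step I expect to require the most care is the bookkeeping of the decoration and the exact correspondence between the family members $\{Itp_{S_0},\ldots,Itp_{S_n},Itp_{T_1},\ldots,Itp_{T_n}\}$ supplied to $n$-STI and the family $\{Itp_{S_i}\}_{i\in V}$ indexed by the $2n$ tree nodes assumed to have $T^n_{STI}$-TI. I must verify that assigning each spine node the appropriate state system $S_i$ and each leaf the appropriate transition system $T_i$ produces a bijection of the given family onto the tree nodes, so that the hypothesis ``$\mF$ has $T^n_{STI}$-TI'' genuinely yields the inequalities for the intended systems and not merely for some relabeled copies. Once that identification is pinned down, the remainder is the mechanical substitution above, and the conclusion that $\mF$ has $n$-STI follows directly from reading off one tree-interpolation constraint per internal node. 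Since the converse direction is not claimed, no reduction of STI back to TI is needed here.
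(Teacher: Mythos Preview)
Your proposal is correct and matches the paper's own proof essentially verbatim: the paper uses exactly the tree $T^n_{STI}$ on $2n$ nodes you describe (leaves $1,\ldots,n$ decorated with $\phi_i$ and assigned $Itp_{T_i}$, spine nodes $n+1,\ldots,2n$ decorated with $\top$ and assigned $Itp_{S_i}$), computes the same forest sets $F_i=\{\phi_i\}$ and $F_{n+i}=\{\phi_1,\ldots,\phi_i\}$, and reads off the $n$-STI inequalities from the TI constraints at the spine nodes. Your caveat about the $S_0$ boundary is exactly right: the family has $2n+1$ members but the tree only $2n$ nodes, and $Itp_{S_0}$ never appears on a node because the corresponding STI step $I_{\top,S_0}\wedge I_{\phi_1,T_1}\Rightarrow I_{\phi_1,S_1}$ collapses via the convention $I_{\top,S_0}=\top$ to the TI constraint at the bottom spine node $n+1$ (whose only child is leaf $1$).
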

\begin{proof}
Let $T^n_{STI} = (V, E)$ be the tree shown in Fig.~\ref{fig:sti-tree}, where 
$V = \{1,\ldots,2n\}$ and 
  $E = \{ (n+i, i) \mid 1\leq i \leq n \} \cup
  \{(n+i,n+i-1) \mid 1\leq i \leq n\}$.
  
  Take any inconsistent $\Phi =
  \{\phi_1,\ldots,\phi_{n}\}$. For ${1\leq i\leq n}$, we decorate $i$ with $\phi_i$, $n+i$ with $\top$;
  similarly we associate $i$ with $Itp_{T_i}$ and $n+i$ with $Itp_{S_i}$.
   Since $\mF$ has $T$-TI, then
  at every node $n+i+1$, for $0 \leq i \leq n-1$:
\[(I_{F_{n+i},S_i} \wedge  I_{F_{i+1},T_{i+1}}) \wedge \top \implies I_{F_{n+i+1},S_{i+1}}\]
Hence, by construction,
\[I_{\phi_1\ldots \phi_i,S_i} \wedge I_{\phi_{i+1},T_{i+1}} \implies I_{\phi_1\ldots \phi_{i+1},S_{i+1}}\]
\end{proof}

\setcounter{lemma}{0}
\begin{lemma}
If $\{L_1,L_2,L_3\}$ satisfies $CC_{BGSA}^*$, then $\{Itp_{L_1},Itp_{L_2},Itp_{L_3}\}$ has BGSA. 
\end{lemma}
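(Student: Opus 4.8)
The lemma claims that if the family of labelings $\{L_1, L_2, L_3\}$ satisfies the restricted constraints $CC_{BGSA}^*$, then the family $\{Itp_{L_1}, Itp_{L_2}, Itp_{L_3}\}$ has BGSA, i.e., for any unsatisfiable $\Phi = \{\phi_1, \phi_2, \phi_3\}$ in CNF we have $I_{\phi_1, L_1} \wedge I_{\phi_2, L_2} \implies I_{\phi_1 \phi_2, L_3}$. The three interpolants are all extracted from the *same* resolution refutation $R$ of $\Phi$, but under three different partitions (configurations) and three different labelings. Let me sketch the proof.

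**The plan.** The natural strategy is an induction over the structure of the refutation $R$, proving a strengthened statement at every clause $C$ rather than only at the root $\bot$. For each clause $C$ in $R$, let $I^{(1)}(C)$, $I^{(2)}(C)$, $I^{(3)}(C)$ denote the partial interpolants computed by $Itp_{L_1}$ (configuration $\phi_1 \mid \phi_2\phi_3$), $Itp_{L_2}$ (configuration $\phi_2 \mid \phi_1\phi_3$), and $Itp_{L_3}$ (configuration $\phi_1\phi_2 \mid \phi_3$) respectively. The inductive invariant I would carry is an implication of the form
\[
I^{(1)}(C) \wedge I^{(2)}(C) \wedge \big(C\!\!\downharpoonright S\big) \implies I^{(3)}(C),
\]
where $S$ is the appropriate set of literals of $C$ selected according to their class/label in the three configurations (so that at the root $C = \bot$ the extra disjunct vanishes and we recover exactly BGSA). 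The restriction term is needed because partial interpolants of the leaves in the LIS framework carry fragments of the clause $C$ (namely $C\!\!\downharpoonright b$ or $\neg(C\!\!\downharpoonright a)$), and these fragments must be tracked to make the induction go through; getting the exact form of this invariant right is the crux of the whole argument.

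**Carrying out the steps.** First I would handle the base case: for each leaf clause $C$ of $R$, I split into cases according to which $\phi_i$ contains $C$, read off the three partial interpolants from the leaf rule in Fig.~\ref{tab:gen} using Table~\ref{tab:bgsa}, and verify the invariant directly. The labeling vectors controlled by $CC_{BGSA}^*$ — namely $(\alpha_1,\alpha_2), (\delta_1,\delta_2) \in \{(ab,ab),(b,a),(a,b)\}$, $\beta_2=\beta_3$, $\gamma_1=\gamma_3$, $\delta_3=\max\{\delta_1,\delta_2\}$ — are precisely what makes each leaf case close, because they force the $b$-restrictions and $a$-restrictions across the three configurations to be compatible. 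For the inductive step, I consider a resolution on pivot $p$ with antecedents $C^+\vee p$ and $C^-\vee\n p$ and resolvent $C^+\vee C^-$; I assume the invariant for the two antecedents and derive it for the resolvent. Here the proof branches on the label of $p$ in each of the three configurations (which determines whether the corresponding $Itp$ takes a disjunction, a conjunction, or the $(I^+\vee p)\wedge(I^-\vee\n p)$ combination). The key algebraic fact is that the combination rule in configuration $3$ dominates the weaker combinations forced in configurations $1$ and $2$ precisely when the labeling constraints hold, so that the implication is preserved.

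**The main obstacle.** The hard part is the inductive step's case analysis on the pivot label, because the three configurations can each choose among three combination rules, giving many joint cases; the constraints $CC_{BGSA}^*$ (especially the pairwise conditions $(\alpha_1,\alpha_2)\in\{(ab,ab),(b,a),(a,b)\}$ and $\delta_3=\max\{\delta_1,\delta_2\}$) are exactly the hypotheses that rule out the bad combinations where the antecedent invariants would fail to propagate. Identifying the correct restriction set $S$ in the invariant — so that the boundary literals cancel correctly at each resolution and the root yields clean BGSA — is the technically delicate design choice that must be made before any of the case checking becomes routine. Once the invariant is correctly stated, each case reduces to propositional reasoning over $I^{\pm}$ and the literals $p, \n p$, which I would verify case by case but not belabor here.
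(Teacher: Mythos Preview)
Your overall plan---structural induction over the refutation, case-splitting on the origin of leaf clauses and on the pivot's partition/label at inner nodes---is exactly the route the paper takes. Where you diverge is in the inductive invariant. The paper carries the \emph{plain} implication
\[
I_1(C)\wedge I_2(C)\implies I_3(C)
\]
at every clause $C$, with no auxiliary restriction term. You conjecture that an extra clause fragment $C\!\!\downharpoonright S$ must be tracked, and you flag ``identifying the correct restriction set $S$'' as the main obstacle; but this obstacle does not exist. Under $CC_{BGSA}^*$ the three leaf interpolants already satisfy $I_1\wedge I_2\wedge\n{I_3}\implies\bot$ on the nose: when you expand each of $C\!\!\downharpoonright_{i,b}$ and $\n{C\!\!\downharpoonright_{j,a}}$ by partition (as in Table~\ref{tab:bgsa}) and apply the equalities $\beta_2=\beta_3$, $\gamma_1=\gamma_3$, $\delta_3=\max\{\delta_1,\delta_2\}$ together with the $(\alpha_1,\alpha_2),(\delta_1,\delta_2)$ constraint, every surviving literal is cancelled by its negation and the conjunction collapses to $\bot$. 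No slack is needed, so the induction closes with the simple invariant, and the ``technically delicate design choice'' you anticipate never arises.

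There is also a concrete error in your stated invariant. You place $C\!\!\downharpoonright S$ as a \emph{conjunct} on the left of the implication, yet speak of the ``extra disjunct vanishing'' at the root. If it is a conjunct on the left, then at $C=\bot$ it becomes $\bot$ and the implication is vacuous---you do not recover BGSA at all. If you meant it as a disjunct on the right (the usual shape in LIS soundness proofs), that would at least be coherent, but as just explained it is superfluous here. So: drop the restriction term, prove $I_1(C)\wedge I_2(C)\implies I_3(C)$ directly at leaves using the labeling constraints, and the inner-node case analysis (which you outline correctly) goes through with the standard distributivity/resolution manipulations.
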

\begin{proof}[by structural induction]
We remind here the \emph{restricted BGSA constraints} $CC_{BGSA}^*$: 
\[
(\alpha_1,\alpha_2), (\delta_1,\delta_2) \in \{(ab,ab),(b,a),(a,b)\}, 
\beta_2 = \beta_3, \gamma_1 = \gamma_3, \delta_3 = \max\{\delta_1,\delta_2\}
 \]
The reader can verify that the conditions on the $\delta_i$ are equivalent to 
$(\delta_1, \delta_2, \delta_3) \in \{ (ab,ab,ab), (b,a,a), (a,b,a) \}$.

We show that, given a refutation of $\Phi$, for any clause $C$ in the refutation 
the partial interpolants satisfy $I_{\phi_1,L_1}(C) \wedge I_{\phi_2,L_2}(C) \implies  I_{\phi_1 \phi_2,L_3}(C)$,
that is $I_{\phi_1,L_1}(C) \wedge I_{\phi_2,L_2}(C) \wedge \n{I_{\phi_1 \phi_2,L_3}(C)} \implies  \bot$.

For simplicity, we write $I_1$, $I_2$, $I_3$ to refer to the three partial interpolants for $C$ and, if $C$ has antecedents, we  denote
their partial interpolants with $I^+_1$, $I^+_2$, $I^+_3$ and $I^-_1$, $I^-_2$, $I^-_3$.

\bfstart{Base case} (leaf). Case splitting on $C$ (refer to Table~\ref{tab:bgsa}): 
\begin{description}
\item[$C \in \phi_1$]: $\quad$ 
$I_1 = C\!\!\downharpoonright_{1,b}$ $\quad$ $I_2 = \n{C\!\!\downharpoonright_{2,a}}$ $\quad$ $\n{I_3} = \n{C\!\!\downharpoonright_{3,b}}$ 
\item[$C \in \phi_2\,$]: $\quad$
$I_1 = \n{C\!\!\downharpoonright_{1,a}}$ $\quad$ $I_2 = C\!\!\downharpoonright_{2,b}$ $\quad$ $\n{I_3} = \n{C\!\!\downharpoonright_{3,b}}$ 
\item[$C \in \phi_3$]: $\quad$
$I_1 = \n{C\!\!\downharpoonright_{1,a}}$ $\quad$ $I_2 = \n{C\!\!\downharpoonright_{2,a}}$ $\quad$ $\n{I_3} = C\!\!\downharpoonright_{3,a}$ 
\end{description}
The goal is to show that in each case ${I_1 \wedge I_2 \wedge \n{I_3} \implies \bot}$.
Representing $C$ by grouping variables into the different partitions, with overbraces to show the label assigned to each
variable, we have:
\begin{description}
\item[$C \in \phi_1$]:\\
$C\!\!\downharpoonright_{1,b} = \overbrace{C_{\phi_1}\!\!\downharpoonright_{b}}^{a} \vee \overbrace{C_{\phi_1 \phi_2}\!\!\downharpoonright_{b}}^{\alpha_1} 
\vee \overbrace{C_{\phi_1\phi_3}\!\!\downharpoonright_{b}}^{\gamma_1} \vee \overbrace{C_{\phi_1 \phi_2 \phi_3}\!\!\downharpoonright_{b}}^{\delta_1}$\\
$\n{C\!\!\downharpoonright_{2,a}} = \overbrace{\n{C_{\phi_1}\!\!\downharpoonright_{a}}}^{b} \wedge \overbrace{\n{C_{\phi_1 \phi_2}\!\!\downharpoonright_{a}}}^{\alpha_2} 
\wedge \overbrace{\n{C_{\phi_1\phi_3}\!\!\downharpoonright_{a}}}^{b} \wedge \overbrace{\n{C_{\phi_1 \phi_2 \phi_3}\!\!\downharpoonright_{a}}}^{\delta_2}$\\
$\n{C\!\!\downharpoonright_{3,b}} = \overbrace{\n{C_{\phi_1}\!\!\downharpoonright_{b}}}^{a} \wedge \overbrace{\n{C_{\phi_1 \phi_2}\!\!\downharpoonright_{b}}}^{a} 
\wedge \overbrace{\n{C_{\phi_1\phi_3}\!\!\downharpoonright_{b}}}^{\gamma_3} \wedge \overbrace{\n{C_{\phi_1 \phi_2 \phi_3}\!\!\downharpoonright_{b}}}^{\delta_3}$\\
\item[$C \in \phi_2\,$]:\\
$\n{C\!\!\downharpoonright_{1,a}} = \overbrace{\n{C_{\phi_2}\!\!\downharpoonright_{a}}}^{b} \wedge \overbrace{\n{C_{\phi_1\phi_2}\!\!\downharpoonright_{a}}}^{\alpha_1} 
\wedge \overbrace{\n{C_{\phi_2\phi_3}\!\!\downharpoonright_{a}}}^{b} \wedge \overbrace{\n{C_{\phi_1\phi_2\phi_3}\!\!\downharpoonright_{a}}}^{\delta_1}$\\
$C\!\!\downharpoonright_{2,b} = \overbrace{C_{\phi_2}\!\!\downharpoonright_{b}}^{a} \vee \overbrace{C_{\phi_1\phi_2}\!\!\downharpoonright_{b}}^{\alpha_2} 
\vee \overbrace{C_{\phi_2\phi_3}\!\!\downharpoonright_{b}}^{\beta_2} \vee \overbrace{C_{\phi_1\phi_2\phi_3}\!\!\downharpoonright_{b}}^{\delta_2}$\\
$\n{C\!\!\downharpoonright_{3,b}} = \overbrace{\n{C_{\phi_2}\!\!\downharpoonright_{b}}}^{a} \wedge \overbrace{\n{C_{\phi_1\phi_2}\!\!\downharpoonright_{b}}}^{a} 
\wedge \overbrace{\n{C_{\phi_2\phi_3}\!\!\downharpoonright_{b}}}^{\beta_3} \wedge \overbrace{\n{C_{\phi_1\phi_2\phi_3}\!\!\downharpoonright_{b}}}^{\delta_3}$\\
\item[$C \in \phi_3$]:\\ 
$\n{C\!\!\downharpoonright_{1,a}} = \overbrace{\n{C_{\phi_3}\!\!\downharpoonright_{a}}}^{b} \wedge \overbrace{\n{C_{\phi_2\phi_3}\!\!\downharpoonright_{a}}}^{b} 
\wedge \overbrace{\n{C_{\phi_1\phi_3}\!\!\downharpoonright_{a}}}^{\gamma_1} \wedge \overbrace{\n{C_{\phi_1\phi_2\phi_3}\!\!\downharpoonright_{a}}}^{\delta_1}$\\
$\n{C\!\!\downharpoonright_{2,a}} = \overbrace{\n{C_{\phi_3}\!\!\downharpoonright_{a}}}^{b} \wedge \overbrace{\n{C_{\phi_2\phi_3}\!\!\downharpoonright_{a}}}^{\beta_2} 
\wedge \overbrace{\n{C_{\phi_1\phi_3}\!\!\downharpoonright_{a}}}^{b} \wedge \overbrace{\n{C_{\phi_1\phi_2\phi_3}\!\!\downharpoonright_{a}}}^{\delta_2}$\\
$C\!\!\downharpoonright_{3,a} = \overbrace{C_{\phi_3}\!\!\downharpoonright_{a}}^{b} \vee \overbrace{C_{\phi_2\phi_3}\!\!\downharpoonright_{a}}^{\beta_3} 
\vee \overbrace{C_{\phi_1\phi_3}\!\!\downharpoonright_{a}}^{\gamma_3} \vee \overbrace{C_{\phi_1\phi_2\phi_3}\!\!\downharpoonright_{a}}^{\delta_3}$\\
\end{description}
We can carry out some simplifications, due to the equality constraints in $CC_{BGSA}^*$ and the fact that 
variables with label $a$ restricted w.r.t. $b$ (and vice versa) are removed, leading (with the help of the resolution rule) to the constraints:
$$(\overbrace{C_{\phi_1 \phi_2}\!\!\downharpoonright_{b}}^{\alpha_1} 
\vee \overbrace{C_{\phi_1 \phi_2 \phi_3}\!\!\downharpoonright_{b}}^{\delta_1}) \wedge 
\overbrace{\n{C_{\phi_1 \phi_2}\!\!\downharpoonright_{a}}}^{\alpha_2} 
\wedge \overbrace{\n{C_{\phi_1 \phi_2 \phi_3}\!\!\downharpoonright_{a}}}^{\delta_2} \wedge
\overbrace{\n{C_{\phi_1 \phi_2 \phi_3}\!\!\downharpoonright_{b}}}^{\delta_3} \implies \bot$$
$$\overbrace{\n{C_{\phi_1\phi_2}\!\!\downharpoonright_{a}}}^{\alpha_1} 
\wedge \overbrace{\n{C_{\phi_1\phi_2\phi_3}\!\!\downharpoonright_{a}}}^{\delta_1} \wedge
(\overbrace{C_{\phi_1\phi_2}\!\!\downharpoonright_{b}}^{\alpha_2} 
\vee \overbrace{C_{\phi_1\phi_2\phi_3}\!\!\downharpoonright_{b}}^{\delta_2}) \wedge 
\overbrace{\n{C_{\phi_1\phi_2\phi_3}\!\!\downharpoonright_{b}}}^{\delta_3} \implies \bot $$
$$\overbrace{\n{C_{\phi_1\phi_2\phi_3}\!\!\downharpoonright_{a}}}^{\delta_1} \wedge 
\overbrace{\n{C_{\phi_1\phi_2\phi_3}\!\!\downharpoonright_{a}}}^{\delta_2} \wedge 
\overbrace{C_{\phi_1\phi_2\phi_3}\!\!\downharpoonright_{a}}^{\delta_3} \implies \bot$$
Finally, the constraints on $(\alpha_1,\alpha_2)$ and $(\delta_1,\delta_2,\delta_3)$ guarantee that the remaining variables are simplified away,
proving the base case.

\bfstart{Inductive step} (inner node). The inductive hypothesis (i.h.) consists of  $I^+_1 \wedge I^+_2 \wedge I^+_3 \implies \bot$, 
$I^-_1 \wedge I^-_2  \wedge \n{I^-_3} \implies \bot$. 
We do a case splitting on the pivot $p$:

\bfstart{Case 1}  ($p$ in $\phi_1$).
\begin{flalign*}%
I_1  \wedge I_2 \wedge \n{I_3} & \iff \\
( I_1^+ \vee I_1^-)  \wedge (I_2^+ \wedge I_2^-)\wedge \n{( I_3^+ \vee I_3^-)}& \iff \\ 
(I_1^+ \vee I_1^-) \wedge I_2^+ \wedge I_2^- \wedge \n{I_3^+} \wedge \n{I_3^-}  & \implies\\
(I_1^+ \wedge I_2^+ \wedge \n{I_3^+} )  \vee  (I_1^- \wedge I_2^-  \wedge \n{I_3^-})& \implies^{\text{i.h.}}
\bot 
\end{flalign*}%
\bfstart{Case 2}  ($p$ in $\phi_2$).
\begin{flalign*}%
I_1  \wedge I_2 \wedge \n{I_3} & \iff \\
( I_1^+ \wedge I_1^-)  \wedge (I_2^+ \vee I_2^-)\wedge \n{( I_3^+ \vee I_3^-)}& \iff \\ 
I_1^+ \wedge I_1^- \wedge (I_2^+ \vee I_2^-) \wedge \n{I_3^+} \wedge \n{I_3^-}  & \implies\\
(I_1^+ \wedge I_2^+ \wedge \n{I_3^+} )  \vee  (I_1^- \wedge I_2^-  \wedge \n{I_3^-})& \implies^{\text{i.h.}}
\bot 
\end{flalign*}%
\bfstart{Case 3} ($p$ in $\phi_3$).
\begin{flalign*}%
I_1  \wedge I_2 \wedge \n{I_3}& \iff \\
( I_1^+ \wedge I_1^-) \wedge (I_2^+ \wedge I_2^-) \wedge \n{( I_3^+ \wedge I_3^-)}& \iff \\ 
I_1^+ \wedge I_1^-  \wedge I_2^+ \wedge I_2^- \wedge (\n{I_3^+} \vee \n{I_3^-})& \implies\\
(I_1^+  \wedge I_2^+ \wedge \n{I_3^+})  \vee  (I_1^- \wedge I_2^- \wedge \n{I_3^-} )& \implies^{\text{i.h.}}
\bot 
\end{flalign*}%
\bfstart{Case 4} ($p$ in $\phi_1 \phi_2$). If $(\alpha_1, \alpha_2)=(ab,ab)$:
\begin{adjustwidth}{-3em}{0em}
\begin{flalign*}
I_1  \wedge I_2 \wedge \n{I_3}& \iff \\
(I_1^+ \vee p) \wedge (I_1^- \vee \n{p})  \wedge (I_2^+ \vee p) \wedge (I_2^- \vee \n{p}) \wedge \n{(I_3^+ \vee I_3^-)}& \implies \\ 
(I_1^+ \vee p) \wedge (I_1^- \vee \n{p})  \wedge (I_2^+ \vee p) \wedge (I_2^- \vee \n{p}) \wedge (\n{I_3^+}\vee p) \wedge (\n{I_3^-}\vee \n{p}) & \implies \\ 
((I_1^+ \wedge I_2^+ \wedge \n{I_3^+}) \vee p) \wedge  ((I_1^- \wedge I_2^- \wedge \n{I_3^-}) \vee \n{p}) & \implies^{\text{resol}}\\
(I_1^+  \wedge I_2^+ \wedge \n{I_3^+})  \vee  (I_1^-  \wedge I_2^- \wedge \n{I_3^-})& \implies^{\text{i.h.}}
\bot 
\end{flalign*} 
\end{adjustwidth}
\bfstart{Case 5} ($p$ in $\phi_1 \phi_2 \phi_3$). If $(\delta_1, \delta_2, \delta_3)=(ab,ab,ab)$:
\begin{adjustwidth}{-3em}{0em}
\begin{flalign*}
I_1  \wedge I_2 \wedge \n{I_3}& \iff \\
(I_1^+ \vee p) \wedge (I_1^- \vee \n{p})  \wedge (I_2^+ \vee p) \wedge (I_2^- \vee \n{p}) \wedge \n{((I_3^+ \vee p) \wedge (I_3^- \vee \n{p}))}& \iff \\ 
(I_1^+ \vee p) \wedge (I_1^- \vee \n{p})  \wedge (I_2^+ \vee p) \wedge (I_2^- \vee \n{p}) \wedge ((\n{I_3^+} \wedge \n{p}) \vee (\n{I_3^-} \wedge p))& \implies \\ 
((I_1^+ \vee p)  \wedge (I_2^+ \vee p) \wedge \n{I_3^+} \wedge \n{p}) \vee ((I_1^- \vee \n{p})   
\wedge (I_2^- \vee \n{p}) \wedge  \n{I_3^-} \wedge p) & \implies^{\text{resol}}\\
(I_1^+  \wedge I_2^+ \wedge \n{I_3^+})  \vee  (I_1^-  \wedge I_2^- \wedge \n{I_3^-})& \implies^{\text{i.h.}}
\bot 
\end{flalign*} 
\end{adjustwidth}
\noindent{}All the remaining cases are treated in a similar manner, to
reach a point (possibly after a resolution step if some of the labels
are $ab$) where the inductive hypothesis can be applied.
\end{proof}

\begin{lemma}
If $\{Itp_{L_1},Itp_{L_2},Itp_{L_3}\}$ has BGSA, then $\{L_1,L_2,L_3\}$ satisfies $CC_{BGSA}$.
\end{lemma}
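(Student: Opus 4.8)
The plan is to prove the contrapositive: if $\{L_1,L_2,L_3\}$ violates $CC_{BGSA}$, then there exist an inconsistent $\Phi=\{\phi_1,\phi_2,\phi_3\}$ and a refutation $R$ witnessing $I_{\phi_1,L_1}\wedge I_{\phi_2,L_2}\not\implies I_{\phi_1\phi_2,L_3}$. Negating $CC_{BGSA}$ yields finitely many cases: for some variable $p$, either $(\alpha_1,\alpha_2)$ or $(\delta_1,\delta_2)$ lies outside $\{(ab,ab),(b,a),(a,b)\}$ (i.e.\ equals one of $(a,a),(a,ab),(ab,a)$), or one of $\beta_2\succ\beta_3$, $\gamma_1\succ\gamma_3$, $\delta_1\succ\delta_3$, $\delta_2\succ\delta_3$ holds. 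I would treat these one family at a time, placing the offending $p$ in the partition indexed by the violated labels (e.g.\ $\phi_2\phi_3$ for $\beta$, $\phi_1\phi_2\phi_3$ for $\delta$) and giving every other variable a fixed ``safe'' labeling (say Pudl\'ak's $ab$, or $b$) that already meets its own constraints, so that only $p$'s labeling can be responsible for a failure.

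For each case I would exhibit a concrete small CNF together with one refutation $R$, then compute the three root partial interpolants $I_{\phi_1,L_1}$, $I_{\phi_2,L_2}$, $I_{\phi_1\phi_2,L_3}$ by running the leaf rule and the $\sqcup$-based inner rule of Fig.~\ref{tab:gen} along $R$, exactly as in the bookkeeping of the sufficiency proof but now tracking how the \emph{forbidden} label of $p$ changes the surviving literals in each of the three configurations. In every case the goal is to show that the forbidden label makes $I_{\phi_1,L_1}\wedge I_{\phi_2,L_2}$ strictly too weak, or makes $I_{\phi_1\phi_2,L_3}$ strictly too strong, so that the implication breaks. This is precisely the converse of the leaf-level reductions obtained in the proof of Lemma~\ref{lem:bgsa_suff}: those reduced implications become falsifiable exactly when the corresponding constraint on $(\alpha_1,\alpha_2)$, $(\delta_1,\delta_2,\delta_3)$, $\beta$, or $\gamma$ fails.

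To keep the case analysis short I would lean on two shortcuts. First, monotonicity of LISs (from~\cite{DKPW10}: $L\preceq L'$ implies $Itp_L\implies Itp_{L'}$) lets me build a single witness for the $\preceq$-minimal violating labeling in each family and then argue that every weaker violation among the antecedents $L_1,L_2$, or stronger violation in the consequent $L_3$, inherits the failure; this collapses the $(\alpha_1,\alpha_2)$ and $(\delta_1,\delta_2)$ sub-cases and the $\beta,\gamma,\delta$ inequalities. Second, I would reuse closure under sub-families and the constant-$\top$ padding trick, so each gadget only has to stress the two or three partitions that actually carry $p$.

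The hard part will be the gadget design itself. A single occurrence of $p$ in a ``tight'' formula produces a forced, label-independent interpolant, so no violation surfaces: e.g.\ $\phi_1=\{p\}$, $\phi_2=\{\n{p}\}$, $\phi_3=\top$ yields $I_{\phi_1,L_1}=p$, $I_{\phi_2,L_2}=\n{p}$, $I_{\phi_1\phi_2,L_3}=\bot$ for \emph{every} labeling of $p$, and BGSA holds. To expose a violation each gadget must contain enough slack, an auxiliary shared variable creating a strict strength gap between the $Itp_{M}$-style ($b$) and $Itp_{M'}$-style ($a$) treatments of $p$, so that the root interpolant genuinely weakens or strengthens with $p$'s label. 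Constructing such minimal slack formulas for each partition, and verifying by direct computation along $R$ that the resulting root interpolants falsify $I_{\phi_1,L_1}\wedge I_{\phi_2,L_2}\implies I_{\phi_1\phi_2,L_3}$, is the main technical obstacle; the rest is routine propagation of partial interpolants through the resolution rule.
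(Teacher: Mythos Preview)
Your proposal is correct and follows essentially the same route as the paper: prove the contrapositive by exhibiting, for each way a $CC_{BGSA}$ constraint can fail, a small concrete CNF instance and refutation on which $I_{\phi_1,L_1}\wedge I_{\phi_2,L_2}\not\Rightarrow I_{\phi_1\phi_2,L_3}$. The paper fills in precisely the gadgets you flag as the hard part---three working variables $p,q,r$ plus a dummy $s$ to populate the otherwise-empty partition, all refuted by one fixed proof and falsified by the common countermodel $\neg q,\neg r$---and trims the casework via the $\phi_1/\phi_2$ symmetry; your monotonicity shortcut is sound, but note the violating sets are not chains (both $(a,ab)$ and $(ab,a)$ are $\preceq$-minimal for $\alpha$, and for $\beta$ neither $(ab,b)$ nor $(a,ab)$ dominates the other in the required direction), so it will not collapse each family to a single witness.
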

\begin{proof}[by contradiction]
We remind here the \emph{BGSA constraints} $CC_{BGSA}$: 
\[(\alpha_1,\alpha_2), (\delta_1,\delta_2) \preceq \{(ab,ab),(b,a),(a,b)\}, 
 \beta_2 \preceq \beta_3, \gamma_1 \preceq \gamma_3, \delta_1 \preceq \delta_3 ,\delta_2 \preceq \delta_3\]
 
 We  show that, if any of the $CC_{BGSA}$ constraints is violated,  
 there exist an unsatisfiable formula $\Phi = \{\phi_1, \phi_2, \phi_3\}$  and a refutation such that
 ${I_{\phi_1,L_1} \wedge I_{\phi_2,L_2}\centernot \implies I_{\phi_1 \phi_2, L_3}}$.
 The possible violations for the $CC_{BGSA}$ constraints consist of:
 \begin{enumerate}
 \item $(\alpha_1,\alpha_2),(\delta_1,\delta_2) \in \{(a,a),(ab,a),(a,ab)\}$
 \item $(\beta_2,\beta_3), (\gamma_1,\gamma_3), (\delta_1,\delta_3),(\delta_2,\delta_3) \in \{(a,ab),(a,b),(ab,b)\}$
\end{enumerate}
It is sufficient to take into account $(\alpha_1,\alpha_2) \in
\{(a,a),(a,ab)\}$ and $(\beta_2,\beta_3)\in
\{(a,ab),(a,b),(ab,b)\}$. The remaining cases follow by symmetry.

\begin{enumerate}
 \item[(1)] $(\alpha_1,\alpha_2)=(a,a): $ 
$\phi_1 = (p \vee \n{q}) \wedge r$, 
$\phi_2 = (\n{p} \vee \n{r}) \wedge  q$, $\phi_3 = s$\\

\begin{adjustwidth}{0em}{0em}
\begin{tabular}{c}
$A= \phi_1$ $\:$ $B= \phi_2 , \phi_3$\\ 
\begin{minipage}{0.4\textwidth}
\smallskip
\small
\begin{prooftree}
\def\defaultHypSeparation{\hskip.15in}
\AxiomC{$p \vee \n{q} \,[\bot]$} 
\AxiomC{$\n{p} \vee \n{r} \, [p \wedge r]$}
\BinaryInfC{$\n{q} \vee \n{r} \, [p \wedge r]$}
\AxiomC{$r\, [\bot]$}
\BinaryInfC{$\n{q} \, [p \wedge r]$}
\AxiomC{$q \, [\n{q}]$}
\BinaryInfC{$\bot \, [(p \wedge r)\vee \n{q}]$}
\end{prooftree}
\end{minipage}
\end{tabular}
\end{adjustwidth}
\begin{adjustwidth}{0em}{0em}
\begin{tabular}{c}
$ A= \phi_2$ $\:$ $B= \phi_1 , \phi_3$
\\
\begin{minipage}{0.4\textwidth}
\smallskip
\small
\begin{prooftree}
\def\defaultHypSeparation{\hskip.11in}
\AxiomC{$p \vee \n{q} \, [\n{p} \wedge q]$} 
\AxiomC{$\n{p} \vee \n{r} \,  [\bot]$}
\BinaryInfC{$\n{q} \vee \n{r} \, [\n{p} \wedge q]$}
\AxiomC{$r\, [\n{r}]$}
\BinaryInfC{$\n{q} \, [(\n{p} \wedge q) \vee \n{r}]$}
\AxiomC{$q \, [\bot]$}
\BinaryInfC{$\bot \, [(\n{p} \wedge q) \vee \n{r}]$}
\end{prooftree}
\end{minipage}
\end{tabular}
\\ \\ 
\end{adjustwidth}
We have $I_{\phi_1, L_1}= (p \wedge r)\vee \n{q}$, $I_{\phi_2, L_2}= (\n{p} \wedge q) \vee \n{r}$, 
$I_{\phi_1 \phi_2, L_3}= \bot$ since $s$ is absent from the proof.
Then, $I_{\phi_1,L_1} \wedge I_{\phi_2,L_2}\centernot \implies I_{\phi_1 \phi_2, L_3}$: 
a counter model is $\n{q}, \n{r}$.\\

 \item[(2)] $(\alpha_1,\alpha_2)=(a,ab)$ : $\phi_1 = (p \vee \n{q}) \wedge r $, 
$\phi_2 = (\n{p} \vee \n{r}) \wedge q$, $\phi_3 = s$\\

\begin{adjustwidth}{0em}{0em}
\begin{tabular}{c}
$ A= \phi_2$ $\:$ $B= \phi_1 , \phi_3$
\\
\begin{minipage}{0.4\textwidth}
\smallskip
\small
\begin{prooftree}
\def\defaultHypSeparation{\hskip.11in}
\AxiomC{$p \vee \n{q} \, [\top]$} 
\AxiomC{$\n{p} \vee \n{r} \,  [\bot]$}
\BinaryInfC{$\n{q} \vee \n{r} \, [\n{p}]$}
\AxiomC{$q\, [\bot]$}
\BinaryInfC{$\n{r} \, [(\n{p} \vee \n{q}) \wedge q]$}
\AxiomC{$r \, [\top]$}
\BinaryInfC{$\bot \, [((\n{p} \vee \n{q}) \wedge q) \vee \n{r}]$}
\end{prooftree}
\end{minipage}
\end{tabular}
\\ \\ 
\end{adjustwidth}
We have $I_{\phi_1, L_1}= (p \wedge r)\vee \n{q}$ and $I_{\phi_1 \phi_2, L_3}= \bot$ as in $(1)$, while
 $I_{\phi_2, L_2}= ((\n{p} \vee \n{q}) \wedge q) \vee \n{r}$.
 Then, $I_{\phi_1,L_1} \wedge I_{\phi_2,L_2}\centernot \implies I_{\phi_1 \phi_2, L_3}$: 
a counter model is $\n{q}, \n{r}$.\\
 \item[(3)] $(\beta_2,\beta_3)=(a,b)$ : $\phi_1 = s$, 
$\phi_2 = (\n{p} \vee \n{r}) \wedge  q$, $\phi_3 = (p \vee \n{q}) \wedge r$\\

\begin{adjustwidth}{0em}{0em}
\begin{tabular}{c}
$A= \phi_1, \phi_2$ $\:$ $B= \phi_3$ 
\\
\begin{minipage}{0.4\textwidth}
\begin{prooftree}
\smallskip
\small
\def\defaultHypSeparation{\hskip.11in}
\AxiomC{$p \vee \n{q} \, [\top]$} 
\AxiomC{$\n{p} \vee \n{r} \,  [\n{p} \vee \n{r}]$}
\BinaryInfC{$\n{q} \vee \n{r} \, [\n{p} \vee \n{r}]$}
\AxiomC{$r\, [\top]$}
\BinaryInfC{$\n{q} \, [\n{p} \vee \n{r}]$}
\AxiomC{$q \, [q]$}
\BinaryInfC{$\bot \, [(\n{p} \vee \n{r}) \wedge q]$}
\end{prooftree}
\end{minipage}
\end{tabular}
\\ \\ 
\end{adjustwidth}

We have $I_{\phi_1, L_1}= \top$, since $s$ is absent from the proof, while $I_{\phi_2, L_2}= (\n{p} \wedge q) \vee \n{r}$ as in $(1)$;
$I_{\phi_1 \phi_2, L_3}= (\n{p} \vee \n{r}) \wedge q$.
 Then, $I_{\phi_1,L_1} \wedge I_{\phi_2,L_2}\centernot \implies I_{\phi_1 \phi_2, L_3}$: 
a counter model is $\n{q}, \n{r}$.\\
 \item[(4)] $(\beta_2,\beta_3)=(a,ab)$ : $\phi_1 = s$, 
$\phi_2 = (\n{p} \vee \n{r})  \wedge  q $, $\phi_3 = (p \vee \n{q}) \wedge r $\\

\begin{adjustwidth}{0em}{0em}
\begin{tabular}{c}
$A= \phi_1 , \phi_2$ $\:$ $B= \phi_3$ \\
\begin{minipage}{0.4\textwidth}
\begin{prooftree}
\smallskip
\small
\def\defaultHypSeparation{\hskip.11in}
\AxiomC{$p \vee \n{q} \, [\top]$} 
\AxiomC{$\n{p} \vee \n{r} \,  [\bot]$}
\BinaryInfC{$\n{q} \vee \n{r} \, [\n{p}]$}
\AxiomC{$r\, [\top]$}
\BinaryInfC{$\n{q} \, [\n{p} \vee \n{r}]$}
\AxiomC{$q \, [\bot]$}
\BinaryInfC{$\bot \, [(\n{p} \vee \n{r} \vee \n{q}) \wedge q]$}
\end{prooftree}
\end{minipage}
\end{tabular}
\\ \\ 
\end{adjustwidth}

$I_{\phi_1, L_1}= \top$ as in $(3)$, $I_{\phi_2, L_2}= (\n{p} \wedge q) \vee \n{r}$ as in $(1)$,
$I_{\phi_1 \phi_2, L_3}= (\n{p} \vee \n{r} \vee \n{q}) \wedge q$.
 Then, $I_{\phi_1,L_1} \wedge I_{\phi_2,L_2}\centernot \implies I_{\phi_1 \phi_2, L_3}$: 
a counter model is $\n{q}, \n{r}$.\\
 \item[(5)] $(\beta_2,\beta_3)=(ab,b)$ : $\phi_1 = s$, 
$\phi_2 = (\n{p} \vee \n{r}) \wedge q$, $\phi_3 = (p \vee \n{q}) \wedge r$\\

$I_{\phi_1, L_1}= \top$ as in $(3)$, $I_{\phi_2, L_2}= ((\n{p} \vee \n{q}) \wedge q) \vee \n{r}$ as in $(2)$, 
$I_{\phi_1 \phi_2, L_3}= (\n{p} \vee \n{r}) \wedge q$ as in $(3)$.
 Then, $I_{\phi_1,L_1} \wedge I_{\phi_2,L_2}\centernot \implies I_{\phi_1 \phi_2, L_3}$: 
a counter model is $\n{q}, \n{r}$.
\end{enumerate}
\end{proof}

\begin{lemma}
\label{lem:ngsa_suff}
If $\{L_1,\ldots,L_{n+1}\}$ satisfies $CC_{nGSA}$, then  
the family $\{Itp_{L_1},\ldots,Itp_{L_{n+1}}\}$ has $n$-GSA. 
\end{lemma}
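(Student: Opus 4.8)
The plan is to mirror the structural-induction argument already used for the binary case in Lemma~\ref{lem:bgsa_suff}, lifting it from three configurations to the $n+1$ configurations $L_1,\dots,L_n,L_{n+1}$. Fix an inconsistent $\Phi=\{\phi_1,\dots,\phi_{n+1}\}$ and a refutation $R$ of $\Phi$, and prove by induction on the structure of $R$ the clause-level invariant that for every clause $C$ occurring in $R$, $\bigwedge_{i=1}^{n} I_{\phi_i,L_i}(C) \implies I_{\phi_1\cdots\phi_n,L_{n+1}}(C)$, equivalently $\bigwedge_{i=1}^{n} I_{\phi_i,L_i}(C)\wedge\neg I_{\phi_1\cdots\phi_n,L_{n+1}}(C)\implies\bot$. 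Instantiating at the root $C=\bot$ yields exactly the $n$-GSA implication, so the lemma follows. As in the BGSA proof, I would first reduce to a \emph{restricted} version of the constraints in which every $\preceq$ is replaced by an equality (and by a pointwise maximum for the target labeling $L_{n+1}$), recovering $CC_{nGSA}$ at the end from the monotonicity of interpolant strength under the order $b\preceq ab\preceq a$, exactly the relaxation used to pass from $CC_{BGSA}^*$ to $CC_{BGSA}$. This keeps the case analysis finite and uniform.

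For the base case a leaf $C$ is a clause of a single $\phi_t$, so I would case split on whether $t\le n$ or $t=n+1$. In each configuration the partial interpolant of $C$ is a restriction of $C$ (namely $C\!\!\downharpoonright b$ when $C$ lies in the $A$-part and $\neg(C\!\!\downharpoonright a)$ when it lies in the $B$-part), so the invariant becomes a purely propositional implication between such restrictions. I would group the literals of $C$ by partition (the set of $\phi_j$'s containing the variable) and argue, variable by variable, that the constraints eliminate each variable: condition~(1) guarantees that among the non-target configurations at most one assigns label $a$ while the rest assign $b$, which is precisely what makes the positive occurrence in the single $A$-restriction contradict the negated $B$-restrictions; condition~(2), $\alpha_j\preceq\alpha_{n+1}$, governs variables shared with $\phi_{n+1}$ and ensures compatibility with the target restriction $\neg I_{\phi_1\cdots\phi_n,L_{n+1}}(C)$. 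This is the $n$-ary generalization of the three reduced constraints obtained in Lemma~\ref{lem:bgsa_suff}.

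For the inductive step, at a resolution step on pivot $p$ I would expand each of the $n+1$ partial interpolants according to the LIS rules ($I^+\vee I^-$, $I^+\wedge I^-$, or $(I^+\vee p)\wedge(I^-\vee\neg p)$ depending on the label of $p$ in that configuration), then distribute and split on $p$. Exactly as in Cases~1--5 of Lemma~\ref{lem:bgsa_suff}, the two induction hypotheses (for the antecedents $C^+\vee p$ and $C^-\vee\neg p$) are applied after a single resolution step on $p$ whenever some label is $ab$; condition~(1) rules out the only problematic combination (two distinct configurations labelling $p$ with $a$), so each case collapses to $\bot$.

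The hard part will be the base case: unlike BGSA, where the three-partition table fixes a small explicit set of labeling vectors, here a variable may be shared by any subset of $\{\phi_1,\dots,\phi_{n+1}\}$, so the number of sharing patterns grows with $n$ and the argument must be uniform rather than enumerative. The delicate point is the interaction between condition~(1), which constrains only the indices $i_1,\dots,i_m$ away from the target, and condition~(2), which relates those labels to $\alpha_{n+1}$: one must verify that a variable occurring in $\phi_{n+1}$ (hence of class $AB$ in the target configuration as well) is still eliminated once $\neg I_{\phi_1\cdots\phi_n,L_{n+1}}(C)$ contributes its own $a$-restriction of $p$. Establishing this uniformly, together with the bookkeeping of which literals survive each restriction, is where the real work lies.
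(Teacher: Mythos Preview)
Your proposal is correct and follows essentially the same approach as the paper: structural induction on the refutation with the clause-level invariant $\bigwedge_{i=1}^{n} I_{\phi_i,L_i}(C)\wedge\neg I_{\phi_1\cdots\phi_n,L_{n+1}}(C)\implies\bot$, reducing first to a restricted version of $CC_{nGSA}$ and recovering the general constraints by monotonicity, then handling the leaf case by partitioning the clause variables and the inner-node case by case-splitting on the pivot's labeling vector. In fact your sketch is more explicit than the paper's own proof, which simply declares both the base case and the inductive step to be ``a direct generalization'' of Lemma~\ref{lem:bgsa_suff}.
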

\begin{proof}[by structural induction]
We assume that the $CC_{nGSA}$ constraints have been restricted in a similar manner to what shown in $CC^*_{BGSA}$.
We prove that, given a refutation of $\Phi$, for any clause $C$ in the refutation 
the partial interpolants satisfy $I_{\phi_1,L_1}(C) \wedge \ldots \wedge I_{\phi_n,L_n}(C) \implies  I_{\phi_1 \ldots \phi_n,L_{n+1}}(C)$,
that is $I_{\phi_1,L_1}(C) \wedge \ldots \wedge  I_{\phi_n,L_n}(C) \wedge \n{I_{\phi_1 \ldots \phi_n,L_{n+1}}(C)} \implies  \bot$.

\bfstart{Base case} (leaf).
Remember that, if $C\in \phi_i, i\neq n+1$, $C$ has class $A$ in configuration $i$ 
(hence the partial interpolant is $C\!\!\downharpoonright_{i,b}$) and in configuration $n+1$ 
($\n{C\!\!\downharpoonright_{n+1,b}}$)
and class $B$ in all the other
configurations $j \neq i,n+1$ ($\n{C\!\!\downharpoonright_{j,a}}$).
If $C\in \phi_{n+1}$, it has class $B$ in all configurations 
($C\!\!\downharpoonright_{n+1,a}$ in configuration $n+1$,  $\n{C\!\!\downharpoonright_{i,a}}$ everywhere else).
So we need to prove: 
$$\n{C\!\!\downharpoonright_{1,a}} \wedge \ldots \wedge \n{C\!\!\downharpoonright_{i-1,a}} \wedge C\!\!\downharpoonright_{i,b} 
\wedge \n{C\!\!\downharpoonright_{i+1,a}} \wedge \ldots \wedge \n{C\!\!\downharpoonright_{n,a}} \wedge \n{C\!\!\downharpoonright_{n+1,b}}\implies \bot$$
$$\n{C\!\!\downharpoonright_{1,a}} \wedge \ldots \wedge \n{C\!\!\downharpoonright_{i-1,a}} \wedge \n{C\!\!\downharpoonright_{i,a}} 
\wedge \n{C\!\!\downharpoonright_{i+1,a}} \wedge \ldots \wedge \n{C\!\!\downharpoonright_{n,a}} \wedge C\!\!\downharpoonright_{n+1,a}\implies \bot$$
respectively for $i\neq n+1$ and $i = n+1$.

We can divide the variables of $C \in \phi_i$ into partitions, obtaining
$C = C_{\phi_i} \vee C_{\phi_i \phi_2} \vee \ldots \vee C_{\phi_1 \ldots \phi_n}$, leading to a system of constraints 
as shown for BGSA; the conjunction of:
$$\n{(C_{\phi_i} \vee  C_{\phi_i \phi_2} \vee \ldots \vee C_{\phi_1 \ldots \phi_n})}\!\!\downharpoonright_{1,a}$$
$$\vdots$$
$$(C_{\phi_i} \vee  C_{\phi_i \phi_2} \vee \ldots \vee C_{\phi_1 \ldots \phi_n})\!\!\downharpoonright_{i,b}$$
$$\vdots$$
$$\n{(C_{\phi_i} \vee  C_{\phi_i \phi_2} \vee \ldots \vee C_{\phi_1 \ldots \phi_n})}\!\!\downharpoonright_{n,a}$$
must imply $\bot$ for every $\phi_i,\: i\neq n+1$ (similarly for $\phi_{n+1})$.
All the simplifications are carried out in line with the proof of Lemma~\ref{lem:bgsa_suff}.

\bfstart{Inductive step} (inner node). The proof is a again a direct generalization of the proof of Lemma~\ref{lem:bgsa_suff}.

Performing a case splitting on the pivot and on its labeling vector, 
the starting point is a conjunction of the partial interpolants ${I_1 \wedge \ldots \wedge I_n \wedge  \n{I_{n+1}}}$ of $C$, 
which is then expressed in terms of the partial 
interpolants for the antecedents.
The goal is to reach a formula 
${\psi = (I_1^+ \wedge \ldots \wedge I_n^+ \wedge \n{I_{n+1}^+} )  \vee  (I_1^- \wedge \ldots \wedge I_n^- \wedge \n{I_{n+1}^-} )}$
where the inductive hypothesis can be applied. 

The key observation is that  the restricted $CC_{nGSA}$ constraints give rise to a combination of boolean operators 
(after the dualization of the ones in $\n{I_{n+1}}$ due to the negation) which makes it always possible to obtain the desired $\psi$,
possibly with the help of the resolution rule.
\end{proof}

\begin{lemma}
\label{lem:ngsa_nec}
If a family $\mF = \{Itp_{L_1},\ldots, Itp_{L_{n+1}}\}$ has $n$-GSA, then $\{L_1,\ldots,L_{n+1}\}$
satisfies $CC_{nGSA}$.
\end{lemma}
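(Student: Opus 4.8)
The plan is to prove the contrapositive, following the same strategy as the necessity proof for BGSA (Lemma~\ref{lem:bgsa_nec}): I will show that whenever some $CC_{nGSA}$ constraint is violated by the labeling vectors of $\{L_1,\ldots,L_{n+1}\}$, one can construct a concrete unsatisfiable $\Phi = \{\phi_1,\ldots,\phi_{n+1}\}$ together with a refutation witnessing $\bigwedge_{i=1}^n I_{\phi_i,L_i} \centernot\implies I_{\phi_1\cdots\phi_n,L_{n+1}}$, so that $\mF$ fails $n$-GSA. A violation of $CC_{nGSA}$ falls into one of two categories according to the two clauses of the constraint definition: either (1) for a variable \emph{not} shared with $\phi_{n+1}$, two of its labels are $a$ (i.e., the ``at most one $a$'' condition fails), or (2) for a variable shared with $\phi_{n+1}$, some child label $\alpha_j$ is strictly stronger-violating, i.e.\ $\alpha_j \npreceq \alpha_{n+1}$, in addition to the condition~(1) possibly failing on the remaining coordinates.

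My strategy is to reduce each violation to the already-established binary counterexamples of Lemma~\ref{lem:bgsa_nec}. Since a family has $n$-GSA only if every induced sub-family has the corresponding smaller property (the $n$-GSA sub-family property, Theorem~\ref{lem:gsa_sub}), it suffices to exhibit the failure on a $3$-partition sub-problem. Concretely, given a violating variable whose offending labels occur in configurations indexed by some $i,j$ (and $n+1$), I would collapse all other partitions to $\top$ by the standard padding-with-$\top$ technique used throughout the appendix, so that only $\phi_i$, $\phi_j$, and $\phi_{n+1}$ carry content. This produces exactly a BGSA instance on $\{Itp_{L_i},Itp_{L_j},Itp_{L_{n+1}}\}$ in which the reduced labeling vector violates a $CC_{BGSA}$ constraint; Lemma~\ref{lem:bgsa_nec} then supplies the refutation (one of the five cases $\phi_1=(p\vee\n{q})\wedge r$, $\phi_2=(\n{p}\vee\n{r})\wedge q$, $\phi_3=s$, etc.) and the counter-model $\n{q},\n{r}$ certifying that the implication fails. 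The only work is to check that each of the two $CC_{nGSA}$ violation types projects onto a genuine $CC_{BGSA}$ violation: a double-$a$ violation of type~(1) maps to $(\alpha_1,\alpha_2)\in\{(a,a),(ab,a),(a,ab)\}$, and a type~(2) violation $\alpha_j\npreceq\alpha_{n+1}$ maps to one of $(\beta_2,\beta_3)\in\{(a,ab),(a,b),(ab,b)\}$ (equivalently $\gamma$ or $\delta$), both of which are precisely the cases refuted in Lemma~\ref{lem:bgsa_nec}.

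The main obstacle I anticipate is the bookkeeping of \emph{classes} under the padding: when partitions are merged into $\top$, a variable that had class $AB$ in the full problem may change class in the reduced $3$-partition, and one must verify that the offending labels $a$ survive as admissible labels (class $AB$ or matching class $A$) rather than being forced by the class restriction. I would handle this by choosing the counterexample variables $p,q,r$ so that they genuinely appear in the intended partitions of the reduced instance, ensuring the violating labels remain free choices of $L_i,L_j,L_{n+1}$. A secondary subtlety is the asymmetry of $n$-GSA: the $\phi_{n+1}$ coordinate is not abstracted symmetrically with the others, so type~(2) violations must be mapped to the $\beta/\gamma/\delta$ (monotonicity) constraints of BGSA rather than to the $(\alpha_1,\alpha_2)$ (mutual-exclusion) constraints; getting this correspondence right is where care is needed, but it mirrors exactly the split already present in the $CC_{BGSA}$ definition.
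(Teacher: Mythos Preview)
Your approach is essentially the paper's: prove the contrapositive by projecting a $CC_{nGSA}$ violation down to a $CC_{BGSA}$ violation for an appropriate triple $\{L_i,L_j,L_{n+1}\}$ and then invoke Lemma~\ref{lem:bgsa_nec}. The paper also uses the sub-family property (Theorem~\ref{lem:gsa_sub}), but frames the argument as a strong induction on $n$: smaller partitions are handled by sub-family + inductive hypothesis, and only the two \emph{full} partitions $\phi_1\cdots\phi_n$ and $\phi_1\cdots\phi_{n+1}$ get an explicit counterexample construction.

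The one substantive difference is the padding. You pad the non-participating coordinates with $\top$, which pushes the violating variable $p$ into the \emph{smaller} partition $\phi_i\phi_j$; the paper instead pads with copies of the clause $p$ itself, so that $p$ remains in the full partition $\phi_1\cdots\phi_n$ of the extended formula $\Phi$. This matters precisely at the point you flag as the ``main obstacle'': you must argue that the offending labels $(a,a)$ (or the monotonicity violation) survive the change of partition. Your fix (``choose the counterexample variables so they genuinely appear in the intended partitions of the reduced instance'') does not actually address that concern---the issue is not class but partition membership of $p$, and choosing $p,q,r$ cannot force $p$ into $\phi_1\cdots\phi_n$ if $\phi_3,\ldots,\phi_n=\top$. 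Padding with $p$ (as the paper does) resolves this cleanly: $p$ now literally occurs in every $\phi_1,\ldots,\phi_n$, the violating labeling vector applies verbatim, and since the added unit clauses $p$ never enter the refutation $\Pi$ one still gets $I_{\phi_j,L_j}=\top$ for $3\le j\le n$ and the interpolants for $j\in\{1,2,n{+}1\}$ coincide with those from the BGSA counterexample. Under the uniform-labeling simplification your $\top$-padding also goes through, but the pad-with-$p$ device is the sharper tool here.
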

\begin{proof}[by induction and contradiction] 
  We prove the theorem by strong induction on $n\geq2$.

\bfstart{Base Case} ($n=2$). Follows by Lemma~\ref{lem:bgsa_nec}.

\bfstart{Inductive Step}. Assume the thesis holds for all ${k\leq
  n-1}$, we prove it for $k=n$.  By Lemma~\ref{lem:gsa_sub}, if a
family $\mF = \{Itp_{L_1},\ldots, Itp_{L_{n+1}}\}$ has $n$-GSA,
then any subfamily of size $k+1\leq n$ has $k$-GSA.  Combined with the
inductive hypothesis, this implies that it is sufficient to establish
the theorem for every variable $p$ and labeling vectors $\vec{\alpha}
= (\alpha_1, \ldots, \alpha_n)$ and $\vec{\beta} = (\beta_1, \ldots,
\beta_{n+1})$ corresponding to partitions $\phi_1 \cdots \phi_n$ and
$\phi_1 \cdots \phi_{n+1}$, respectively.

We only show the case of $\vec{\alpha}$. The proof for $\vec{\beta}$
is analogous. W.l.o.g., assume that there is a $p$ such that
$\vec{\alpha}$ violates $CC_{nGSA}$ for $\alpha_1 = \alpha_2 = a$
(other cases are symmetric). Construct a family of labelings
$\{L'_1,L'_2,L'_{n+1}\}$ from $\{L_1,\ldots,L_{n+1}\}$ by (1) taking
all labelings of partitions involving only subsets of $\phi_1$,
$\phi_2$ and $\phi_{n+1}$. For example, vectors $(\eta_3,\eta_4)$ and
$(\eta_1,\eta_2,\eta_3,\eta_{n+1})$ would be discarded, while
$(\eta_1,\eta_2)$ and $(\eta_1,\eta_2,\eta_{n+1})$ would be kept; and
(2) for $p$, set the labeling vector of partition $\phi_1\phi_2$ to
$(\alpha_1,\alpha_2) = (a,a)$. By Lemma~\ref{lem:bgsa_nec},
$\{L'_1,L'_2,L'_{n+1}\}$ does not have BGSA. Let $\Phi' =
\{\phi_1,\phi_2,\phi_{n+1}\}$ be such that
$I_{\phi_1,L'_1} \land I_{\phi_2,L'_2} \centernot \implies
I_{\phi_1\phi_2,L'_{n+1}}$, and let $\Pi$ be the corresponding resolution
refutation.

Construct $\Phi = \{\phi_1, \phi_2, p, \ldots, p, \phi_{n+1}\}$ by adding
$(n-2)$ copies of $p$ to $\Phi'$. $\Phi$ is unsatisfiable,
and $\Pi$ is also a valid refutation for $\Phi$. From this
point, we assume that all interpolants are generated from $\Pi$.

Assume, by contradiction, that $\mF$ has $n$-GSA. Then,
\[
I_{\phi_1,L_1} \land \cdots \land I_{\phi_n,L_n} \implies I_{\phi_1\cdots\phi_n,L_{n+1}}
\]
But, because $\phi_3, \ldots, \phi_n$ do not contribute any clauses to
$\Pi$, $I_{\phi_i,L_i} = \top$ for $3 \leq i \leq n$. Hence,
\[I_{\phi_1,L_1} \land I_{\phi_2,L_2} \implies I_{\phi_1\phi_2,L_{n+1}}\]
However, by construction:
\begin{align*}
  I_{\phi_1, L_1} &= I_{\phi_1,L'_1} & 
  I_{\phi_2, L_2} &= I_{\phi_2,L'_2} & 
  I_{\phi_1\phi_2, L_{n+1}} &= I_{\phi_1\phi_2,L'_{n+1}} 
\end{align*}
which leads to a contradiction. Hence $\vec{\alpha}$ must satisfy $CC_{nGSA}$.


\end{proof}

\begin{proposition}
\label{prop:2pi}
Any family  $\{Itp_{L_0},Itp_{L_1},Itp_{L_2}\}$ has $2$-PI.
\end{proposition}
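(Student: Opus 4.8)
The plan is to observe that $2$-PI, once spelled out, reduces to nothing more than the two defining clauses of a Craig interpolant, and therefore holds for \emph{any} family of three systems, in particular for any family of LISs. So I would not invoke any LIS-specific machinery at all: no labeling constraints, no partial order on labelings, no structure of the refutation.

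Concretely, $2$-PI requires that for every inconsistent $\Phi = \{\phi_1,\phi_2\}$ and for $i \in \{0,1\}$, $(I_{\phi_1\ldots\phi_i,L_i} \wedge \phi_{i+1}) \implies I_{\phi_1\ldots\phi_{i+1},L_{i+1}}$. I would treat the two values of $i$ separately. For $i = 0$, using the convention $I_{\top,L_0} = \top$, the obligation collapses to $\phi_1 \implies I_{\phi_1,L_1}$, which is exactly the first interpolant condition ($\phi_1 \implies I_{\phi_1,L_1}$) for the system $Itp_{L_1}$. For $i = 1$, since $\Phi$ is the whole formula we have $I_{\phi_1\phi_2,L_2} = I_{\Phi,L_2} = \bot$, so the obligation becomes $(I_{\phi_1,L_1} \wedge \phi_2) \implies \bot$, i.e.\ $I_{\phi_1,L_1}$ is inconsistent with $\phi_2$; this is precisely the second interpolant condition for $Itp_{L_1}$.

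Both of these hold by the very definition of an interpolation system, and in particular for the labeled systems $Itp_{L_1}$ and $Itp_{L_2}$ produced by the LIS of Fig.~\ref{tab:gen}. Hence every family $\{Itp_{L_0},Itp_{L_1},Itp_{L_2}\}$ has $2$-PI. I expect essentially no obstacle: the statement is a direct consequence of the Craig interpolant axioms, and the only points requiring care are the boundary conventions $I_{\top} = \top$ and $I_{\Phi} = \bot$, which make the two endpoints of the path degenerate into exactly the two halves of the standard interpolant definition. (It is worth noting in passing that the argument never uses that the systems are LISs, so the same reasoning establishes $2$-PI for arbitrary families of interpolation systems; this is consistent with Theorem~\ref{theo:lis_pi}, which states that PI holds for all single LISs.)
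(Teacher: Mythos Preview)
Your proposal is correct and matches the paper's own proof essentially verbatim: the paper also observes that, using $I_{\top,L_0}=\top$ and $I_{\phi_1\phi_2,L_2}=\bot$, the $2$-PI requirement collapses to the two Craig interpolant conditions $\phi_1 \implies I_{\phi_1,L_1}$ and $I_{\phi_1,L_1} \wedge \phi_2 \implies \bot$, which hold for any interpolant.
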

\begin{proof}
  Recall that $I_{\top,L_0} = \top$ and $I_{\phi_1\phi_2,L_2} =
  \bot$ for any $L_0, L_2$. Hence, $2$-PI reduces to the following
  two conditions: 
    $\phi_1 \implies I_{\phi_1,L_1}$, $I_{\phi_1,L_1} \land \phi_2 \implies \bot$, 
  which are true of any Craig interpolant.
\end{proof}

\begin{corollary}
\label{cor:2sa}
A family  $\{Itp_{L_1},Itp_{L_2}\}$ has $2$-SA if and only if $\{L_1,L_2\}$ satisfies
$ (\alpha_1,\alpha_2) \preceq \{(ab,ab),(a,b),(b,a)\}$
\end{corollary}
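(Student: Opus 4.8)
The plan is to obtain Corollary~\ref{cor:2sa} as the $n=2$ instance of the general $n$-SA characterization already established in Corollary~\ref{theo:nsa_nec_suff}. First I would observe that $2$-SA is literally $n$-SA with $n=2$: for an inconsistent $\Phi = \{\phi_1,\phi_2\}$ the requirement is $I_{\phi_1,L_1} \wedge I_{\phi_2,L_2} \implies \bot$. By Corollary~\ref{theo:nsa_nec_suff}, the family $\{Itp_{L_1},Itp_{L_2}\}$ has $2$-SA if and only if every variable whose labeling vector $(\alpha_{i_1},\ldots,\alpha_{i_k})$ has $2 \leq k \leq 2$ satisfies the stated implication constraint. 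For $n=2$ the only admissible length is $k=2$, so the relevant vectors are exactly the pairs $(\alpha_1,\alpha_2)$ of labels assigned by $L_1$ and $L_2$ to variables shared between $\phi_1$ and $\phi_2$. Variables local to a single $\phi_i$ carry fixed labels ($a$ in their own configuration and $b$ in the other) and so impose no restriction, which is why only the shared partition appears in the final condition.

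Second, I would spell out the $n$-SA constraint in this case. Instantiating $(\exists j \in \{1,2\}\; \alpha_j=a) \implies (\forall h \in \{1,2\}\; h \neq j \implies \alpha_h=b)$ for $k=2$ yields precisely: if $\alpha_1=a$ then $\alpha_2=b$, and symmetrically if $\alpha_2=a$ then $\alpha_1=b$.

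Third, the only remaining work is to verify that this implication constraint coincides with $(\alpha_1,\alpha_2) \preceq \{(ab,ab),(a,b),(b,a)\}$. Recalling that $b\preceq ab\preceq a$ and that $\preceq$ on vectors is compared pointwise, I would enumerate the nine vectors over $\{a,ab,b\}^2$. A pair is dominated by $(ab,ab)$ exactly when neither coordinate is $a$, by $(a,b)$ exactly when its second coordinate is $b$, and by $(b,a)$ exactly when its first coordinate is $b$; taking the union, the admissible pairs are those with no coordinate equal to $a$ together with $(a,b)$ and $(b,a)$. Equivalently, the three forbidden vectors are $(a,a)$, $(a,ab)$ and $(ab,a)$, which are exactly the pairs violating the implication constraint above. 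Hence the two descriptions agree and the corollary follows. I expect the main (and only nontrivial) obstacle to be this finite case verification; it is entirely routine, and no resolution-proof reasoning is required, since all of the proof-theoretic content is already carried by Corollary~\ref{theo:nsa_nec_suff}.
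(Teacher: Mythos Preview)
Your argument is correct. You specialize Corollary~\ref{theo:nsa_nec_suff} (the general $n$-SA characterization) to $n=2$ and then verify by case enumeration that the resulting implication constraint on $(\alpha_1,\alpha_2)$ coincides with the dominance condition $(\alpha_1,\alpha_2) \preceq \{(ab,ab),(a,b),(b,a)\}$. Every step is sound.

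The paper's own proof is the one-liner ``Follows from Lemma~\ref{lem:bgsa_nec} and Lemma~\ref{lem:bgsa_suff}'', i.e., it appeals directly to the BGSA sufficiency and necessity lemmas rather than to the $n$-SA corollary. This is not a genuinely different route: Corollary~\ref{theo:nsa_nec_suff} is itself obtained from Theorem~\ref{theo:ngsa_nec_suff}, whose proof (Lemmas~\ref{lem:ngsa_suff} and~\ref{lem:ngsa_nec}) has Lemmas~\ref{lem:bgsa_suff} and~\ref{lem:bgsa_nec} as its base case. You simply enter the chain one step later than the paper does. Your explicit enumeration of the nine label pairs and identification of the forbidden set $\{(a,a),(a,ab),(ab,a)\}$ is a helpful addition that the paper leaves implicit.
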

\begin{proof}
 Follows from Lemma~\ref{lem:bgsa_nec} and Lemma~\ref{lem:bgsa_suff}.
\end{proof}

\begin{lemma}
  There exists a family $\{Itp_{L_0},Itp_{L_1},Itp_{L_2}\}$ that has
  $2$-PI and a family $\{Itp_{L'_1},Itp_{L'_2}\}$ that has $2$-SA, but the family 
  $\{Itp_{L_0},Itp_{L_1},Itp_{L_2},Itp_{L'_1},Itp_{L'_2}\}$ does not
  have $2$-STI.
\end{lemma}
\begin{proof} 
  By Theorem~\ref{lem:st-and-gsa}, a necessary condition for $2$-STI is
  that $\{Itp_{L_1},Itp_{L'_{2}},Itp_{L_{2}}\}$ has BGSA.  By
  Proposition~\ref{prop:2pi}, $\{L_0,L_1,L_2\}$ can be arbitrary.  By
  Theorem~\ref{theo:bgsa_nec_suff} and Corollary~\ref{cor:2sa}, there
  exists $\{L'_1, L'_2\}$ such that $\{Itp_{L'_1}, Itp_{L'_2}\}$ has $2$-SA, but
  $\{Itp_{L_1}, Itp_{L'_2}, Itp_{L_2}\}$ does not have BGSA.
\end{proof}

\begin{lemma}
The set of labeling constraints of any $n$-GSA strengthening is a subset of constraints of $n$-GSA.
\end{lemma}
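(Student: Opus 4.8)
The plan is to fix a set $S\subseteq\{1,\ldots,n\}$ of non-abstracted indices, so that the corresponding $n$-GSA strengthening reads $\bigwedge_{i\notin S} I_{\phi_i,L_i}\wedge\bigwedge_{j\in S}\phi_j\implies I_{\phi_1\ldots\phi_n,L_{n+1}}$ and involves only the labelings $\{L_i\mid i\notin S\}\cup\{L_{n+1}\}$. Since no interpolant is computed in a configuration $j\in S$, every labeling vector of the strengthening is exactly the corresponding $CC_{nGSA}$ vector with the components $\alpha_j$, $j\in S$, deleted. I would first decompose each $CC_{nGSA}$ constraint into atomic conditions: condition (1) into the ordered-pair conditions $\alpha_p=a\implies\alpha_q=b$, and condition (2) into the individual conditions $\alpha_p\preceq\alpha_{n+1}$, each mentioning a fixed small set of configurations. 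The candidate constraint set for the strengthening is then precisely the collection of atomic conditions that do not mention any dropped configuration $j\in S$, which is manifestly a subset of $CC_{nGSA}$. The task thus reduces to showing this subset is both sufficient and necessary for the strengthening.

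For sufficiency I would avoid re-running the structural induction of Lemma~\ref{lem:ngsa_suff} and instead use an extension argument. Given reduced labelings satisfying the subset, extend the family by choosing, for each $j\in S$, the labeling $L_j$ that assigns $b$ to every $AB$-occurrence in configuration $j$ (permissible, since $AB$ variables may be labeled $b$). Because $b$ is never $a$ and is the minimum of $\preceq$, every atomic condition mentioning some $\alpha_j$ becomes trivially satisfied, while the atomic conditions mentioning only surviving configurations hold by hypothesis; hence the extended family $\{L_1,\ldots,L_{n+1}\}$ satisfies all of $CC_{nGSA}$, and by Theorem~\ref{theo:ngsa_nec_suff} it has $n$-GSA. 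Since $\phi_j\implies I_{\phi_j,L_j}$ for every $j$, replacing $I_{\phi_j,L_j}$ by the stronger $\phi_j$ only strengthens the antecedent and preserves the implication, so the strengthening follows.

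For necessity I would argue by contradiction, reusing the counterexample construction of Lemma~\ref{lem:ngsa_nec}. If some surviving atomic condition is violated, it mentions only configurations in $(\{1,\ldots,n\}\setminus S)\cup\{n+1\}$, so the witness formula and refutation produced there involve only the corresponding $\phi_i$ and $\phi_{n+1}$; we may therefore take $\phi_j=\top$ for every $j\in S$. The conjunct $\bigwedge_{j\in S}\phi_j$ then collapses to $\top$, and the strengthening degenerates to the failing $n$-GSA implication over the surviving configurations, witnessing that the strengthening does not hold. Together with sufficiency this shows the subset is exactly the constraint set of the strengthening.

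The main obstacle I anticipate is bookkeeping rather than conceptual: making the atomic-condition decomposition precise enough that "deleting a column for $j\in S$" provably yields a subset and never manufactures a genuinely new requirement, and verifying in the necessity step that the counterexamples of Lemma~\ref{lem:ngsa_nec} can always be padded with $\top$ in the $S$-configurations without perturbing the labels assigned to the surviving configurations. Once these routine checks are in place, the subset relation is immediate from the construction.
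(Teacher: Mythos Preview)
Your approach is correct and considerably more thorough than the paper's. The paper's proof is a three-line argument: assume w.l.o.g.\ that only $\phi_1$ is non-abstracted, observe that every labeling vector of a partition containing $\phi_1$ simply loses its $\alpha_1$-component (vectors of partitions not containing $\phi_1$ are unchanged), and then state that inspecting the definition of $CC_{nGSA}$ shows the surviving constraints form a subset. The paper does \emph{not} separately argue that these reduced constraints are exactly the necessary and sufficient conditions for the strengthening; that is left implicit, presumably relying on the reader to see that the structural-induction proofs of Lemmas~\ref{lem:ngsa_suff} and~\ref{lem:ngsa_nec} carry over with one fewer interpolant.

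Your proposal fills in precisely that gap, and does so by extension/reduction rather than by re-running the induction. Your sufficiency argument---complete the family by labeling every $AB$-occurrence in the dropped configurations with $b$, invoke Theorem~\ref{theo:ngsa_nec_suff}, then weaken $I_{\phi_j,L_j}$ to $\phi_j$---is exactly the construction the paper itself deploys later in Theorem~\ref{thm:ngsa_str_ext}, so you are effectively anticipating that result. Your necessity argument, padding the counterexamples of Lemma~\ref{lem:ngsa_nec} with $\top$ in the $S$-configurations so that the offending atomic condition still lives entirely in the surviving columns, is the natural adaptation of that lemma's technique. Both steps are sound; the bookkeeping concerns you flag are genuine but routine, and the paper simply sidesteps them with ``it is easy to verify.''
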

\begin{proof}
Assume w.l.o.g we strengthen the first subformula $\phi_1$. Then any variable in 
any partition which does not involve $\phi_1$ has the same labeling vector and its
$n$-GSA labeling constraints are also the same. Instead, variables in any partition $\phi_1\phi_{i_2}\ldots\phi_{i_k}$
have now a labeling vector $(\alpha_{i_2},\ldots,\alpha_{i_k})$, where the first component $\alpha_1$ is missing.
Referring to the definition of $CC_{nGSA}$, it is easy to verify that the set of the constraints for the strengthening 
are a subset of the constraints for $n$-GSA. 
\end{proof}

\setcounter{theorem}{12}
\begin{theorem}
Any $\mF=\{Itp_{L_{i_1}},\ldots,Itp_{L_{i_k}},Itp_{L_{n+1}}\}$
s.t. $k< n$ that has an $n$-GSA strengthening property can be extended
to a family that has $n$-GSA.
\end{theorem}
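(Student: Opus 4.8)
We have a family $\mF = \{Itp_{L_{i_1}},\ldots,Itp_{L_{i_k}},Itp_{L_{n+1}}\}$ with $k < n$ that enjoys some $n$-GSA strengthening property — meaning it satisfies the implication $\bigwedge_j I_{\phi_{i_j},L_{i_j}} \wedge \bigwedge_{\text{strengthened}} \phi_\ell \implies I_{\phi_1\cdots\phi_n,L_{n+1}}$, where the subformulae outside the index set $\{i_1,\ldots,i_k\}$ appear un-abstracted. We want to add interpolation systems for the missing indices in $\{1,\ldots,n\}\setminus\{i_1,\ldots,i_k\}$ so that the enlarged family satisfies full $n$-GSA.

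**The plan.** The natural approach is to supply, for each missing index $\ell$, a labeling $L_\ell$ that is strong enough that its interpolant $I_{\phi_\ell,L_\ell}$ is implied by $\phi_\ell$ in a way that lets us recover the strengthening from full $n$-GSA's premise. Concretely, I would exploit Lemma~\ref{lem:ngsa_str}, which says the labeling constraints of the strengthening are a \emph{subset} of the $CC_{nGSA}$ constraints. Since the hypothesis family already satisfies the strengthening constraints on all partitions not involving the missing subformulae, the only thing left is to assign labelings to the missing indices so that the \emph{additional} $CC_{nGSA}$ constraints — precisely those involving the components $\alpha_\ell$ for missing $\ell$ — are also met. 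I would therefore choose each missing $L_\ell$ to be Pudl\'ak's labeling (label all $AB$ occurrences $b$, giving $\alpha_\ell = b$ on every shared partition), and verify via the definition of $CC_{nGSA}$ that setting the missing components to the weakest label $b$ never violates any constraint: in clause~(1) of $CC_{nGSA}$, a component equal to $b$ can always serve as the ``all others $b$'' side, and in clause~(2) the monotonicity requirement $\alpha_j \preceq \alpha_{i_{k+1}}$ is easiest to satisfy when the added labels are $b$, the minimum of the order $b \preceq ab \preceq a$.

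**Key steps in order.** First I would record that by Theorem~\ref{theo:ngsa_nec_suff} it suffices to show the enlarged family of labelings satisfies $CC_{nGSA}$. Second, I would split the partitions of $\Phi$ into those whose index set is contained in $\{i_1,\ldots,i_k,n+1\}$ and those involving at least one missing index; on the former, the labeling vectors are unchanged and, by Lemma~\ref{lem:ngsa_str} together with the hypothesis, the relevant constraints already hold. Third, for partitions involving a missing index, I would set the newly-introduced components to $b$ and check the two clauses of $CC_{nGSA}$ directly: clause~(1) holds because introducing a $b$-labeled component preserves the ``at most one $a$, rest $b$'' structure (it can never be the offending $a$, and it satisfies the ``$\alpha_h = b$'' conclusion), and clause~(2) holds because $b$ is $\preceq$ every label, so $\alpha_\ell = b \preceq \alpha_{i_{k+1}}$ automatically. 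Concluding, the enlarged family of labelings satisfies $CC_{nGSA}$, hence by Theorem~\ref{theo:ngsa_nec_suff} the enlarged family of LISs has $n$-GSA.

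**Main obstacle.** The delicate point is clause~(2) of $CC_{nGSA}$ when the parent index $i_{k+1}$ equals $n+1$: there the constraint demands $\alpha_j \preceq \alpha_{n+1}$ for the children, and I must confirm that the existing value $\alpha_{n+1}$ (fixed by the hypothesis family's $L_{n+1}$) dominates the freshly assigned $b$. Since $b$ is the minimum of the order this is immediate, but one must be careful that the strengthening's premise — where missing subformulae appear as $\phi_\ell$ rather than $I_{\phi_\ell}$ — is correctly matched to the $n$-GSA premise, using $\phi_\ell \implies I_{\phi_\ell,L_\ell}$ to weaken the stronger (un-abstracted) antecedent into the $n$-GSA antecedent. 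Verifying this implication direction is consistent for \emph{every} missing index simultaneously, and that choosing $b$ does not inadvertently conflict with an already-present $a$ forced elsewhere in the same partition, is where the bookkeeping must be done carefully rather than appealed to by symmetry.
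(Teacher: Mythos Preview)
Your plan is exactly the paper's: fill the missing slots with the labeling that assigns $b$ to every shared variable, then invoke Lemma~\ref{lem:ngsa_str} and Theorem~\ref{theo:ngsa_nec_suff} to conclude that the completed family of labelings satisfies $CC_{nGSA}$ and hence has $n$-GSA. One correction: the labeling you describe (``label all $AB$ occurrences $b$'') is \emph{McMillan's} system $Itp_M$, not Pudl\'ak's; Pudl\'ak assigns $ab$. The distinction is not cosmetic here --- Pudl\'ak would actually fail: if some existing component in a partition carries label $a$, clause~(1) of $CC_{nGSA}$ forces every other component to be $b$, and an added $ab$ would violate it. Your argument is sound precisely because you use $b$ throughout; only the name is wrong.

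One remark on your final paragraph: the sentence about ``using $\phi_\ell \Rightarrow I_{\phi_\ell,L_\ell}$ to weaken the stronger (un-abstracted) antecedent into the $n$-GSA antecedent'' points in the wrong direction. Weakening the antecedent of an implication does not preserve the implication, so no semantic shortcut of this form is available; this is exactly why the proof must go through the labeling constraints rather than through the defining implications of the two properties. Your main plan already does this correctly, so the paragraph is best dropped rather than repaired.
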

\begin{proof}
  Refer to the definition of $CC_{nGSA}$ and to
  Lemma~\ref{lem:ngsa_str}. We can complete $\mF$ for example by
  introducing $n-k$ instances of McMillan's system $Itp_M$. Both
  constraints $(1)$ and $(2)$ for $n$-GSA are satisfied, since $Itp_M$
  always assigns label $b$ (recall the order $b\preceq ab \preceq
  a$). Note that $Itp_M$ is not necessarily the only possible choice.
\end{proof}

\begin{theorem}
PI holds for all single LISs.
\end{theorem}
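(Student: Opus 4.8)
The plan is to reduce to a single adjacent step and then argue by structural induction over the refutation. By the meaning of ``a single system has $P$'' together with the sub-family property for PI (Theorem~\ref{lem:pi_sub}), it suffices to fix an inconsistent $\Phi = \{\phi_1,\ldots,\phi_n\}$, a refutation $R$, a labeling $L$, and an index $1 \le i \le n-1$, and to prove the single inequality
\[ I_{\phi_1\cdots\phi_i,L} \wedge \phi_{i+1} \implies I_{\phi_1\cdots\phi_{i+1},L}, \]
since the two boundary steps ($i=0$ and the last one) collapse to the plain Craig conditions exactly as in Proposition~\ref{prop:2pi}. Both interpolants are read off the \emph{same} $R$; they differ only in that the block $\phi_{i+1}$ sits in $B$ for the left configuration ($A'\mid MB'$, with $A'=\phi_1\cdots\phi_i$, $M=\phi_{i+1}$, $B'=\phi_{i+2}\cdots\phi_n$) and in $A$ for the right one ($A'M\mid B'$). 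I would organise everything around the $3$-block labeling table for these two configurations, in direct analogy with Table~\ref{tab:bgsa}.

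The first step is the observation that makes the statement hold for \emph{every} LIS, including the weak ones for which BGSA fails (Theorem~\ref{theo:pi_not_bgsa}): moving $M$ from $B$ to $A$ can only strengthen the effective labeling. For a uniformly labeled variable the label is forced to $a$ (resp.\ $b$) on class-$A$ (resp.\ class-$B$) occurrences and equals $L(p)$ on class-$AB$ ones, so a case split on the block pattern of $p$ gives, going from the left to the right configuration, $b\to a$ (local to $M$), $L(p)\to a$ (shared with $A'$ only), $b\to L(p)$ (shared with $B'$ only), and no change otherwise. Every transition is weakly increasing in $b\preceq ab\preceq a$, i.e.\ $L^{(i)}\preceq L^{(i+1)}$ on all affected variables. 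This is precisely the monotonicity the PI labeling constraints demand, and it is automatic for a single $L$; crucially, no lower bound such as $\succeq$ Pudl\'ak is required, which is why PI, unlike BGSA, holds unconditionally.

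The second and main step is a structural induction over $R$, mirroring the sufficiency argument of Lemma~\ref{lem:bgsa_suff}, establishing for every clause $C$ the invariant
\[ I_{\phi_1\cdots\phi_i,L}(C)\ \wedge\ \phi_{i+1}\ \implies\ I_{\phi_1\cdots\phi_{i+1},L}(C)\ \vee\ D(C), \]
where $D(C)$ is the disjunction of the literals of $C$ whose variable lies in $M$ and carries label $b$ in the left configuration. Since $D(\bot)$ is empty, instantiating at the root $C=\bot$ yields the theorem. In the base case I would split on which block contains the leaf $C$: for $C\in A'$ or $C\in B'$ the two partial interpolants differ only by the monotone relabeling above and by $M$-shared literals absorbed into $D(C)$; the genuinely new case is $C\in M$, which flips from a $B$-leaf (interpolant $\n{C\downharpoonright a}$) to an $A$-leaf (interpolant $C\downharpoonright b$), and here the hypothesis $\phi_{i+1}$ — that is, the truth of the clause $C$ itself, available because $M=\phi_{i+1}$ is used \emph{un-abstracted} — bridges the two shapes. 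The inductive step splits on the block, hence on the label transition, of the pivot $p$, and, as in Lemma~\ref{lem:bgsa_suff}, when $p$ is $ab$ in a configuration a single resolution step recombines the $(I^+\vee p)\wedge(I^-\vee\n{p})$ factors before the induction hypothesis is applied.

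The hard part is getting $D(C)$ and its propagation right through the leaf flip and the $ab$-pivots: the $M$-clauses change interpolant shape entirely, so one must verify that the literals gathered in $D(C)$ are exactly those either discharged by $\phi_{i+1}$ or eliminated along every resolution path before $\bot$, and that the pivot relabelings never break the invariant. This is where the un-abstractedness of $\phi_{i+1}$ — the feature that distinguishes PI from STI and BGSA — is essential and where most of the bookkeeping lives; the remainder is routine and follows the template already established for BGSA.
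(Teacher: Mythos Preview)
Your proposal is correct in outline, but it takes a much longer route than the paper, and in fact your ``first step'' already \emph{is} the paper's entire argument.

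The paper does not carry out any structural induction. It simply recalls from~\cite{RSS12} that a family $\{Itp_{L_0},\ldots,Itp_{L_n}\}$ has $n$-PI provided the labeling constraints $\gamma_1\preceq\gamma_2$ and $\delta_1\preceq\delta_2$ hold at every adjacent step (these are the only free cells in the $3$-block table for the two configurations $\psi_1\mid\psi_2\psi_3$ and $\psi_1\psi_2\mid\psi_3$; all other cells are forced to $a$ or $b$). For a single LIS one has $L_i=L_{i+1}=L$, hence $\gamma_1=\gamma_2$ and $\delta_1=\delta_2$, and the constraints are trivially satisfied. That is the whole proof. Your monotonicity observation---that moving $M$ from $B$ to $A$ can only raise the effective label in the order $b\preceq ab\preceq a$---is exactly this table computation phrased pointwise, and once you have it you are done by invoking the already-established sufficiency result.

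Your ``second and main step'', the structural induction with the auxiliary slack term $D(C)$, therefore re-derives from scratch the sufficiency direction that the paper imports from~\cite{RSS12}. This is a legitimate and more self-contained proof, in the same spirit as Lemma~\ref{lem:bgsa_suff}, and it has the virtue of not relying on an external reference; but it is considerably more work, and the invariant you propose (literals of $C$ with variable in $M$ and label $b$ on the left) would need to be checked carefully against all the pivot cases, as you yourself note. The paper's route trades that bookkeeping for a one-line appeal to the constraint framework.
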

\begin{proof}
  In \cite{RSS12} we addressed $n$-PI for a family of
  LISs $\{Itp_{L_0},\ldots,Itp_{L_n}\}$.  Given an inconsistent 
  $\Phi = \{\phi_1,\ldots,\phi_n\}$, Table~\ref{tab:npi} shows the
  labelings $L_i,L_{i+1}$ for an arbitrary step $I_{\phi_1\ldots
    \phi_i,L_i} \wedge \phi_{i+1} \implies I_{\phi_1\ldots \phi_i
    \phi_{i+1},L_{i+1}}$ ($\psi_1 = \phi_1 \wedge \ldots \wedge
  \phi_i$, $\psi_2 = \phi_{i+1}$, $\psi_3 = \phi_{i+2} \wedge \ldots
  \wedge \phi_n)$:

\begin{table}[h]%
\centering%
\caption{$n$-PI step.}%
\begin{tabular}{|l|c|c|}%
\hline 
\multicolumn{1}{|c|}{\multirow{2}{*}{$p$ in ?}} & \multicolumn{2}{c|}{Variable $class$, $label$}\\
\cline{2-3}
&$\psi_1 \mid \psi_2 \psi_3$   &   
$\psi_1  \psi_2 \mid \psi_3$\\
\hline
$\psi_1$ & $A,a$  & $A,a$\\
$\psi_2$ & $B,b$ & $A,a$\\
$\psi_3$ & $B,b$ & $B,b$\\
$\psi_1 \psi_2$ & $AB,\alpha_1$  & $A,a$\\
$\psi_2 \psi_3$ & $B,b$ &  $AB,\beta_2$\\
$\psi_1 \psi_3$ & $AB,\gamma_1$ &$AB,\gamma_2$\\
$\psi_1 \psi_2 \psi_3$ & $AB,\delta_1$ &$AB,\delta_2$\\
\hline%
\end{tabular}
\label{tab:npi}%
\end{table}%
We identified a set of constraints for $L_i,L_{i+1}$ as: 
\begin{align*}
\gamma_1 &\preceq \gamma_2 & \delta_1 &\preceq \delta_2
\end{align*}
For a single LIS, $\gamma_1 =\gamma_2$ and
$\delta_1=\delta_2$, so all constraints are trivially satisfied for
$0\leq i \leq n-1$.
\end{proof}

\section{Complexity of the Labeled Interpolation Systems}
\label{sec:app:3}

We briefly examine here the complexity of a Labeled Interpolation System $Itp_L$.
\begin{figure}[t]
\centering
\vspace{-1.3cm}
\begin{tabular}{|l c|c|c|c || l c|c|c|c|}%
\hline
Leaf: & \multicolumn{4}{c|}{$C \, [I]$} & Inner node: & \multicolumn{4}{c|}{$\quad \dfrac{C^+ \vee p:\alpha \, [I^+] \qquad  C^- \vee \n{p}:\beta \, [I^-]}{C^+ \vee C^- \, [I]}$} \\
\hline 
\multicolumn{5}{|c|}{$I =
\left\{
	\begin{array}{ll}
		C \!\!\downharpoonright b  & \quad \mbox{if } C \in A\\
		\neg (C \!\!\downharpoonright a) & \quad \mbox{if } C \in B\\
	\end{array}
\right. $
}
&
\multicolumn{5}{|c|}{$I =
\left\{
	\begin{array}{ll}
		I^+ \vee I^-  & \quad \mbox{if } \alpha \sqcup \beta = a\\
		I^+ \wedge I^- & \quad \mbox{if } \alpha \sqcup \beta = b\\
		(I^+ \vee p) \wedge (I^- \vee \n{p}) & \quad \mbox{if } \alpha \sqcup \beta = ab 
	\end{array}
\right. $
}\\
\hline 
\end{tabular}
\vspace{-0.14in}
\caption*{Labeled interpolation system $Itp_L$.}
\end{figure}
A simple realization of the interpolation algorithm of Fig.~\ref{tab:gen} (reported above) is based on a topological visit of the refutation DAG.

While visiting a leaf, the partial interpolant is computed by restricting the clause w.r.t. to $a$ or $b$, given a labeling $L$ for its shared variables.
Note that it is not necessary to specify labels for local variables, since variables of class $A$ can only have label $a$ and variables of class $B$
only label $b$.

While visiting an inner node,
(i) the labels of the shared variables of the resolvent clause are updated based on the labels of the antecedent clauses, (ii) the label of the pivot
is computed in the same way, and (iii) the partial interpolant is obtained by a boolean combination of the (already computed) partial interpolants of the antecedents,
plus possibly two occurrences of the pivot.

We distinguish between the complexity of generating partial interpolants for leaves and inner nodes as follows.

\bfstart{Leaf}. The cost of restricting a clause $C$ is $|C|$. Checking whether a clause or a variable has class $A,B,AB$ takes constant time. 

\bfstart{Inner node}. If $C$ is the resolvent clause, (i) takes $|C|$, both (ii) and (iii) take constant time.
We assume that, for each node, the labels of shared variables are encoded in a bit-vector-like data structure, so that retrieving the label
of a variable takes constant time.

Assume the DAG has $N$ nodes and the largest clause has size $S$, then the overall complexity is $O(NS)$.
In practice, $S << N$ and the complexity is linear in the size of the DAG. The overhead introduced by the computations due to the use
of a labeling is thus negligible.

%

\end{document}